\documentclass{siamart171218} % siam
%\documentclass[12pt]{article} % sigsam

% SIAM Shared Information Template
% This is information that is shared between the main document and any
% supplement. If no supplement is required, then this information can
% be included directly in the main document.

% Packages and macros go here
\usepackage{amsfonts}
\usepackage{graphicx}
\usepackage{epstopdf}
\usepackage{algorithmic}
\usepackage{amssymb}
\usepackage{mathtools}
\usepackage{color}
\usepackage[shortlabels]{enumitem}
\usepackage{multirow}
\usepackage{xfrac}
\usepackage[super]{nth}
\usepackage{cite}
\usepackage{icomma}
\usepackage{array}
\usepackage{tabularx}
\usepackage{arydshln}
\usepackage{hyperref}
\usepackage{booktabs}
\usepackage{xspace}

% Householder notation for matrices
\usepackage{bm}

\newcommand{\mat}[1]{\ensuremath{\boldsymbol{#1}}}
% % This one is to support boldface Greek in vec and mat

% %
%\let\Aorig\A
\newcommand{\A}{\ensuremath{\mat{A}}}

\renewcommand{\H}{\ensuremath{\mat{H}}}
\newcommand{\I}{\ensuremath{\mat{I}}}

\newcommand{\M}{\ensuremath{\mat{M}}}

\newcommand{\T}{\ensuremath{\mat{T}}}
\newcommand{\U}{\ensuremath{\mat{U}}}

\renewcommand{\arraystretch}{1.1}

\allowdisplaybreaks

\ifpdf
  \DeclareGraphicsExtensions{.eps,.pdf,.png,.jpg}
\else
  \DeclareGraphicsExtensions{.eps}
\fi

% Add a serial/Oxford comma by default.

% https://tex.stackexchange.com/questions/22100/the-bar-and-overline-commands
\makeatletter
\newsavebox\myboxA
\newsavebox\myboxB
\newlength\mylenA
\newcommand*\xoverline[2][0.75]{%
    \sbox{\myboxA}{$\m@th#2$}%
    \setbox\myboxB\null% Phantom box
    \ht\myboxB=\ht\myboxA%
    \dp\myboxB=\dp\myboxA%
    \wd\myboxB=#1\wd\myboxA% Scale phantom
    \sbox\myboxB{$\m@th\overline{\copy\myboxB}$}%  Overlined phantom
    \setlength\mylenA{\the\wd\myboxA}%   calc width diff
    \addtolength\mylenA{-\the\wd\myboxB}%
    \ifdim\wd\myboxB<\wd\myboxA%
       \rlap{\hskip 0.5\mylenA\usebox\myboxB}{\usebox\myboxA}%
    \else
        \hskip -0.5\mylenA\rlap{\usebox\myboxA}{\hskip 0.5\mylenA\usebox\myboxB}%
    \fi}
\makeatother

% Set of upper Hessenberg matrices
\newcommand{\HH}[3]{\ensuremath{\mathcal{H}_{\{#1\}}^{#2 \times #2}(#3)}}

% Set of zero-diagonal upper Hessenberg matrices
\newcommand{\ZH}[3]{\ensuremath{\mathcal{Z}_{\{#1\}}^{#2 \times #2}(#3)}}

% Set of upper Hessenberg Toeplitz matrices
\newcommand{\MM}[3]{\ensuremath{\mathcal{M}_{\{#1\}}^{#2 \times #2}(#3)}}

% Set of upper Hessenberg Toeplitz matrices with subdiagonal equal to 1
\newcommand{\MMZ}{\ensuremath{\mathcal{M}^{n \times n}}\xspace}
% \newcommand{\MMZ}[2]{\ensuremath{\mathcal{M}^{#1 \times #1}(#2)}}

% Set of upper Hessenberg Toeplitz matrices with subdiagonal equal to 1 of
% maximal characteristic height
\newcommand{\MMM}{\ensuremath{\xoverline{\mathcal{M}}^{n \times n}}\xspace}

% Change proof symbol to a natural sign.

% Used for creating new theorem and remark environments
\newsiamremark{remark}{Remark}
\newsiamthm{conjecture}{Conjecture}
% \newsiamthm{corollary}{Corollary}

% Sets running headers as well as PDF title and authors
\headers{Bohemian Upper Hessenberg and Toeplitz Matrices}{E. Y. S. Chan, et al.}

% Title. If the supplement option is on, then "Supplementary Material"
% is automatically inserted before the title.
\title{Upper Hessenberg and Toeplitz Bohemians}

% Authors: full names plus addresses.
\author{
  Eunice Y. S. Chan\thanks{School of Mathematical and Statistical Sciences, Western University
    (\email{echan295@uwo.ca}, 
    \email{rcorless@uwo.ca},
    \email{sthornt7@uwo.ca}).}
  \and
  Robert M. Corless\footnotemark[1]
  \and
  Laureano Gonzalez-Vega\thanks{Departamento de Matematicas, Estadistica y Computacion, Universidad de Cantabria
  (\email{laureano.gonzalez@unican.es}).}
  \and
  J.~Rafael Sendra\thanks{Research Group ASYNACS, Departamento de F{\'i}sica y Matem\'{a}ticas, University of Alcal{\'a}
  (\email{rafael.sendra@uah.es}).}
  \and
  Juana Sendra\thanks{Universidad Polit{\'{e}}cnica de Madrid
  (\email{jsendra@etsist.upm.es}).}
  \and
  Steven E. Thornton\footnotemark[1]
}

\usepackage{amsopn}

%%% Local Variables: 
%%% mode:latex
%%% TeX-master: "ex_article"
%%% End: 

% \include{macros_sigsam}

\begin{document}

\maketitle

% ============================================================================ %
% Abstract                                                                     %
% ============================================================================ %
\begin{abstract}
We look at Bohemians, specifically those with population $\{-1, 0, {+1}\}$ and sometimes $\{0,1,i,-1,-i\}$. More, we specialize the matrices to be upper Hessenberg Bohemian. 
From there, focusing on only those matrices whose characteristic polynomials have maximal height allows us to explicitly identify these polynomials and give useful bounds on their height, and conjecture an accurate asymptotic formula. The lower bound for the maximal characteristic height is exponential in the order of the matrix; in contrast, the height of the matrices remains constant.  We give theorems about the numbers of normal matrices and the numbers of stable matrices in these families.
\end{abstract}

\begin{keywords}
    upper Hessenberg, Toeplitz, characteristic polynomial, Bohemians, maximal characteristic height, normal matrices, stable matrices
\end{keywords}

\begin{AMS}
  15B05, 15B36, 11C20
\end{AMS}

% ============================================================================ %
% Introduction                                                                 %
% ============================================================================ %
\section{Introduction}
A matrix family is called \textbf{Bohemian} if its entries come from a fixed finite discrete (and hence bounded) set, called the \textsl{population}, usually of integers. The name is a mnemonic for \textbf{Bo}unded \textbf{He}ight \textbf{M}atrix of \textbf{I}ntegers. Such populations arise in many applications (e.g.~compressed sensing) and the properties of matrices selected ``at random'' from such families are of practical and mathematical interest. For example, Tao and Vu have shown that random matrices (more specifically real symmetric random matrices in which the upper-triangular entries $\xi_{i, j}$, $i < j$ and diagonal entries $\xi_{i, i}$ are independent) have simple spectrum~\cite{tao2017random}. In fact, Bohemian families have been studied for a long time, although not under that name. For instance, Olga Taussky-Todd's paper ``Matrices of Rational Integers"~\cite{taussky1960matrices} begins by saying
\begin{quote}
    ``This subject is very vast and very old. It includes all of the arithmetic theory of quadratic forms, as well as many of other classical subjects, such as latin squares and matrices with elements $+1$ or $-1$ which enter into Euler's, Sylvester's or Hadamard's famous conjectures."
\end{quote}
Taussky-Todd also discusses such problems in~\cite{taussky1961some}.
The paper~\cite{gear1969simple} by C.~W.~Gear is another instance.  The idea is that these families are themselves interesting objects of study, and susceptible to brute-force computational experiments (both ideas already in~\cite{taussky1960matrices}) as well as to asymptotic analysis.  Such experiments have generated many conjectures, some of which are listed on the Characteristic Polynomial Database~\cite{CPDB}. Many of the conjectures have a number-theoretic or combinatorial flavour. The recent paper~\cite{Fasi:2019:determinants} claims to resolve several of these conjectures.

Matrices with population $P = \{-1,0,1\}$ occur naturally as exemplars of ``sign-pattern matrices'': see~\cite{briat2017sign} and~\cite{Hall:HLA:2013}. For early theorems, see~\cite{jeffries1977matrix}.

An overview of some of our original interest in Bohemians can be found in~\cite{corless2017bohemian}.  More details and reasons why such matrices are interesting are discussed in section~\ref{sec:Motivation}.

Different matrix structures produce remarkably different results. One matrix structure useful in eigenvalue computation is the upper Hessenberg matrix, which means a matrix $\H$ such that $h_{i, j} = 0$ if $i > j + 1$. These arise naturally in eigenvalue computation because the QR iteration is less expensive for matrices in Hessenberg form. Such matrices (usually in the equivalent lower Hessenberg form) also occur frequently in number-theoretic studies: see for example~\cite{cahill2002fibonacci}, \cite{lettington2011fleck} and~\cite{kilic2010generalized}. 
%It has also been shown that every $n \times n$ non-derogatory matrix is similar to a unique unit upper Hessenberg Toeplitz matrix~\cite{mackey1999every}. We refer to the work of Boyoga, B\"ottcher, and Grudsky \cite{bogoya2012asymptotics, bogoya2012eigenvectors, bogoya2011eigenvalues} for asymptotic results on upper Hessenberg Toeplitz matrices, and to \cite{ching1993maximum}, \cite{inselberg1978determinants}, and \cite{merca2013note} for their results on determinants.

We begin our study in this paper by considering determinants of upper Hessenberg Bohemians. We use two recursive formulae for the characteristic polynomials of upper Hessenberg matrices. During the course of our experimental symbolic and numeric computations, which we report on elsewhere, we encountered ``maximal polynomial height'' characteristic polynomials when the matrices were not only upper Hessenberg, but Toeplitz (that is, $h_{i,j}$ constant along diagonals $j-i = k$). Further restrictions to this class allowed identification of key results including explicit formulae for the characteristic polynomials of maximal height. In what follows, we lay out definitions and prove several facts of interest about characteristic polynomials and their respective height for these families.

\section{Notation}
In what follows, we present some results on upper Hessenberg Bohemians of the form
\begin{equation}
    \H_n =
    \renewcommand{\arraystretch}{1.3}
    \begin{bmatrix}
        h_{1,1} & h_{1,2}    & h_{1,3}     & \cdots & h_{1,n}\\
        s_{1}   & h_{2,2}    & h_{2,3}    & \cdots & h_{2,n}\\
        0     & s_{2}    & h_{3,3}    & \cdots & h_{3,n}\\
        \vdots & \ddots & \ddots & \ddots & \vdots\\
        0      & \cdots & 0       & s_{{n-1}}    & h_{n,n}
    \end{bmatrix}
\end{equation}
with $s_k = e^{i\theta_k}$ and usually $s_k \in \{-1, {+1}\}$ (we do not allow zero $s_k$ entries, because that reduces the problem to smaller ones) and $h_{i,j} \in P$ for $1 \le i \le j \le n$. We denote the characteristic polynomial $Q_n(z) \equiv \det (z \I - \H_n)$.

The \textsl{height} of a matrix $\A$, written $\mathrm{height}(\A) := ||\mathrm{vec}(\A) ||_{\infty}$, is the largest absolute value of any entry in $\A$.  The \textsl{characteristic height} of a matrix is the height of its characteristic polynomial $\det(\lambda\I - \A) = a_0 + a_1\lambda + \cdots + \lambda^n$: that is, the largest absolute value of any coefficient of the characteristic polynomial (expressed in the monomial basis). Equivalently, this is $\| [a_0, a_1, \ldots, a_{n-1}, 1]\|_\infty$.

\begin{definition}
    The set of all $n \times n$ upper Hessenberg Bohemians with upper triangle population $P$ and subdiagonal population from a discrete set of roots of unity, say $s\in \{e^{i\theta_{k}}\}$ where $\{\theta_{k}\}$ is some finite set of angles\footnote{While we mostly use just $\{0\}$ and $\{\pi\}$ in this paper, there are cases when we wish to use other angles, or several together.}, is called \HH{\theta_k}{n}{P}. In particular, \HH{0}{n}{P} is the set of all $n \times n$ upper Hessenberg Bohemians with upper triangle entries from $P$ and subdiagonal entries equal to $1$ and \HH{\pi}{n}{P} is when the subdiagonals entries are $-1$.
\end{definition}

It will often be true that the average value of a population will be zero.  In that case, matrices with trace zero will be common.  It is a useful oversimplification to look in that case at matrices whose diagonal is exactly zero. 

\begin{definition}
    For a population $P$ such that $0 \in P$, let $\ZH{\theta_k}{n}{P}$ be the subset of $\HH{\theta_k}{n}{P}$ where the main diagonal entries are fixed at 0.
\end{definition}

% ============================================================================ %
% Experimental Results                                                         %
% ============================================================================ %
% \section{Results of Experiments}
% \input{ExperimentalResults}

% ============================================================================ %
% Upper Hessenberg Matrices                                                    %
% ============================================================================ %
\section{Upper Hessenberg Matrices}

We begin with a recurrence relation for the characteristic polynomial $Q_n(z) = \det (z \I - \H_n)$ for $\H_n \in \HH{\theta_k}{n}{P}$ where $s = \exp(i\theta_k)$. Later we will specialize the population $P$ to contain only zero and numbers of unit magnitude, usually $\{-1,\, 0,\, {+1}\}$.
% ---------------------------------------------------------------------------- %
% Characteristic Polynomial Recurrence 1                                       %
% ---------------------------------------------------------------------------- %
\begin{theorem}
    \label{thm:charPolyRec1}
    \begin{equation}
        \label{eqn:thm1}
        Q_n(z) = zQ_{n-1}(z) - \sum_{k=1}^{n} \left(\prod_{j=n-k+1}^{n-1}s_j\right) h_{n-k+1,n} Q_{n-k}(z)
    \end{equation}
    with the conventions that an empty product is $1$,  $Q_0(z) = 1$ and $\H_0 = [\,]$, that is the empty matrix.
\end{theorem}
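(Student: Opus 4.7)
The plan is to expand the determinant $\det(z\I - \H_n)$ by cofactor expansion along the last column. That column has entries $-h_{1,n}, -h_{2,n}, \ldots, -h_{n-1,n}, z - h_{n,n}$, producing $n$ terms, one for each row $i$.

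For each $i$, I need to evaluate the minor $N_i$ obtained by deleting row $i$ and column $n$ from $z\I - \H_n$. The key structural observation is that $N_i$ is block upper triangular:
$$N_i = \begin{bmatrix} A_i & B_i \\ 0 & C_i \end{bmatrix},$$
where $A_i$ is the $(i-1)\times(i-1)$ top-left block and $C_i$ is the $(n-i)\times(n-i)$ bottom-right block. The bottom-left block vanishes because any entry there lives in a position $(j,k)$ with $j \geq i+1$ and $k \leq i-1$, hence $j > k+1$, so the upper Hessenberg condition forces it to be zero. The block $A_i$ is precisely $z\I_{i-1} - \H_{i-1}$, so $\det A_i = Q_{i-1}(z)$. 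The block $C_i$ is upper triangular because column deletion shifts the subdiagonal of $z\I - \H_n$ onto the diagonal of $C_i$; its diagonal entries are $-s_i, -s_{i+1}, \ldots, -s_{n-1}$, giving $\det C_i = (-1)^{n-i}\prod_{j=i}^{n-1} s_j$.

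Putting the cofactor expansion together yields
$$Q_n(z) = \sum_{i=1}^{n} (-1)^{i+n}\,(z\I-\H_n)_{i,n}\; (-1)^{n-i} Q_{i-1}(z)\prod_{j=i}^{n-1} s_j.$$
The combined sign $(-1)^{i+n}\cdot(-1)^{n-i} = (-1)^{2n} = 1$ is uniform across terms, and the factor $-1$ attached to $h_{i,n}$ in column $n$ (for $i<n$) produces the overall minus sign in front of the sum. The $i=n$ term contributes $(z - h_{n,n})Q_{n-1}(z)$; extracting $zQ_{n-1}(z)$ and absorbing the leftover $-h_{n,n}Q_{n-1}(z)$ into the sum as the $i=n$ summand (using the convention that the empty product $\prod_{j=n}^{n-1} s_j = 1$) gives
$$Q_n(z) = zQ_{n-1}(z) - \sum_{i=1}^{n} h_{i,n}\left(\prod_{j=i}^{n-1}s_j\right) Q_{i-1}(z).$$
The reindexing $k = n-i+1$ produces the form stated in the theorem, and the convention $Q_0(z)=1$ with $\H_0=[\,]$ accommodates the extreme case $i=1$ (i.e.\ $k=n$).

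The main obstacle is simply bookkeeping the three independent sources of signs (the cofactor sign $(-1)^{i+n}$, the triangular-determinant sign $(-1)^{n-i}$, and the minus sign on the column-$n$ entries of $z\I - \H_n$) so that only a single overall minus sign remains in front of the summation; once this is done cleanly, the identification of the summand with the desired expression is immediate.
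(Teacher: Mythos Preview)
Your proof is correct and follows exactly the strategy the paper indicates: Laplace (cofactor) expansion along the final column. The paper itself does not spell out the details but simply points to the literature and names this expansion as the method; your write-up fills in precisely the computation that the paper leaves implicit, including the block-triangular structure of the minors and the sign bookkeeping.
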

\begin{proof}
Theorems equivalent to this theorem are proved in several places, for instance~\cite{cahill2002fibonacci}. One strategy is Laplace expansion about the final column.
\end{proof}
\begin{theorem}
    \label{thm:charPolyRec2}
    Expanding $Q_n(z)$ as
    \begin{equation}
        Q_n(z) = q_{n,n} z^n + q_{n,n-1} z^{n-1} + \cdots + q_{n,0},
    \end{equation}
    we can express the coefficients recursively by
    \begin{subequations}
    \label{eqn:thm2_eqn}
    \begin{align}
        q_{n,n} &= 1,\\
        q_{n,j} &= q_{n-1,j-1} - \sum_{k=1}^{n-j} \left(\prod_{j=n-k+1}^{n-1}s_j\right) h_{n-k+1,n} q_{n-k,j} \quad\text{for}\quad 1 \le j \le n-1,\\
        q_{n,0} &= -\sum_{k=1}^n \left(\prod_{j=n-k+1}^{n-1}s_j\right) h_{n-k+1,n} q_{n-k,0} \quad\text{for}\quad n>0,\quad\text{and}\\
        q_{0,0} &= 1\>.
    \end{align}
    \end{subequations}
\end{theorem}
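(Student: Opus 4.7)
The plan is to derive \eqref{eqn:thm2_eqn} directly from Theorem~\ref{thm:charPolyRec1} by comparing coefficients of like powers of $z$ on both sides of \eqref{eqn:thm1}. Since $Q_n(z)$ is a polynomial in $z$ and the recurrence expresses it as a linear combination of lower-indexed characteristic polynomials with coefficients that are independent of $z$, the coefficient identity will drop out immediately; the only care needed is with the range of the summation index $k$, because $Q_{n-k}$ has degree $n-k$ and so contributes to the coefficient of $z^j$ only when $n-k \ge j$, i.e., $k \le n-j$.

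First I would handle the three easy cases. The base case $q_{0,0}=1$ is just the stated convention $Q_0(z)=1$. The leading coefficient identity $q_{n,n}=1$ follows by induction on $n$: in \eqref{eqn:thm1}, the term $zQ_{n-1}(z)$ has leading coefficient $q_{n-1,n-1}=1$ by the inductive hypothesis, and every $Q_{n-k}(z)$ with $k\ge 1$ has degree at most $n-1$, so contributes nothing to $z^n$. For the constant term, set $j=0$; since $zQ_{n-1}(z)$ has no constant term, only the sum survives, giving exactly the displayed formula for $q_{n,0}$ (all $k$ from $1$ to $n$ are retained because $n-k\ge 0$ for these $k$).

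For the generic case $1\le j \le n-1$, I would extract the $z^j$ coefficient on both sides of \eqref{eqn:thm1}. The left side contributes $q_{n,j}$. The term $zQ_{n-1}(z)$ contributes $q_{n-1,j-1}$. In the sum, the coefficient of $z^j$ in $Q_{n-k}(z)$ is $q_{n-k,j}$, which is zero whenever $n-k<j$; truncating the sum accordingly gives the upper limit $k=n-j$. Rearranging yields precisely
\begin{equation*}
q_{n,j} = q_{n-1,j-1} - \sum_{k=1}^{n-j}\Biggl(\prod_{\ell=n-k+1}^{n-1}s_\ell\Biggr) h_{n-k+1,n}\, q_{n-k,j},
\end{equation*}
where I have renamed the product's dummy index to $\ell$ to avoid collision with the coefficient index $j$ (the paper's original statement reuses $j$ in both roles, but the meaning is unambiguous).

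There is no real obstacle; the argument is essentially bookkeeping. The one subtle point worth flagging explicitly in the write-up is the truncation of the summation range, which distinguishes the middle case from the constant-term case and ensures that $q_{n-k,j}$ is always well-defined (i.e., a coefficient of an actually-existing power of $z$ in $Q_{n-k}$). With this remark in place, the coefficient comparison is immediate and the theorem is proved.
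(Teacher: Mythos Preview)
Your proposal is correct and follows essentially the same approach as the paper: both substitute the polynomial expansions into the recurrence of Theorem~\ref{thm:charPolyRec1} and read off the coefficient of $z^j$, with the paper writing out the double sum explicitly and swapping the order of summation while you argue the truncation $k\le n-j$ directly. Your remark about the dummy-index collision in the product is a fair observation, and your treatment of the endpoint cases is if anything slightly more explicit than the paper's.
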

\begin{proof}
    % Assume
    % \begin{equation}
    %     Q_n(z) = \sum_{i=0}^n q_{n,i} z^i \>.
    % \end{equation}
    By Theorem~\ref{thm:charPolyRec1}
    \begin{equation}
        Q_n(z) = zQ_{n-1}(z) - \sum_{k=1}^n \left(\prod_{j=n-k+1}^{n-1}s_j\right) h_{n-k+1,n} Q_{n-k}(z) \>.
    \end{equation}
    The first term can be written
    \begin{align}
        zQ_{n-1}(z) 
        % &= z \left [z^{n-1} + q_{n-1,n-2} z^{n-2} + \cdots + q_{n-1, 0} \right ]\\
        % &= z \left [z^{n-1} + \sum_{j=0}^{n-2} q_{n-1,j} z^j \right ]\\
        % &= z^n + \sum_{j=0}^{n-2} q_{n-1,j} z^{j+1}\\
        &= z^n + \sum_{j=1}^{n-1} q_{n-1,j-1} z^{j} \quad \text{(direct from hypothesis)}
    \end{align}
    and the second term is
    \begin{align}
        \left(\prod_{j=n-k+1}^{n-1}s_j\right) h_{n-k+1,n} Q_{n-k}(z) 
        % &= s^{k-1} h_{n-k+1,n} \left [q_{n-k,n-k} z^{n-k} + q_{n-k,n-k-1}z^{n-k-1} + \cdots + q_{n-k,0}\right ] \notag\\
        &= \left(\prod_{j=n-k+1}^{n-1}s_j\right) h_{n-k+1,n} \sum_{j=0}^{n-k} q_{n-k,j} z^j \>.
    \end{align}
    Therefore,
    \begin{align*}
        Q_n(z) &= z^n + \sum_{j=1}^{n-1} q_{n-1,j-1} z^j - \sum_{k=1}^n \left(\prod_{j=n-k+1}^{n-1}s_j\right) h_{n-k+1,n} \sum_{j=0}^{n-k} q_{n-k,j} z^j\\
        &= z^n + \sum_{j=1}^{n-1} q_{n-1,j-1} z^j - \sum_{\ell=0}^{n-1} \left ( \sum_{k=1}^{n-j} \left(\prod_{j=n-k+1}^{n-1}s_j\right) h_{n-k+1,n} q_{n-k,j} \right ) z^\ell\\
        &= z^n + \sum_{\ell=1}^{n-1} \left ( q_{n-1,j-1} - \sum_{k=1}^{n-j} \left(\prod_{j=n-k+1}^{n-1}s_j\right) h_{n-k+1,n} q_{n-k,j} \right ) z^\ell \nonumber\\
        &\qquad - \sum_{k=1}^n \left(\prod_{j=n-k+1}^{n-1}s_j\right) h_{n-k+1,n} q_{n-k,0} \>.
    \end{align*}
\end{proof}

\begin{proposition}
All matrices in $\HH{\theta_k}{n}{P}$ are non-derogatory\footnote{A non-derogatory matrix is a matrix with geometric multiplicity $1$ for every eigenvalue. This implies that its characteristic polynomial and minimal polynomial coincide (up to a factor of $\pm 1$).  This is Fact 38 in~\cite{DeAlba:HLA:2013}}.
\end{proposition}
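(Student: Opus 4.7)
The plan is to use the standard fact that an \emph{unreduced} upper Hessenberg matrix (one with all subdiagonal entries nonzero) is non-derogatory, and then verify that every matrix in $\HH{\theta_k}{n}{P}$ is unreduced. The latter is immediate from the definition: the subdiagonal entries are $s_k = e^{i\theta_k}$, which all have modulus $1$, and in particular are nonzero (the text explicitly disallows $s_k = 0$).

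The main step is the rank argument. Fix an eigenvalue $\lambda$ of $\H_n \in \HH{\theta_k}{n}{P}$ and look at $\lambda \I - \H_n$. I would exhibit a specific $(n-1)\times(n-1)$ submatrix of full rank, namely the one obtained by deleting the first row and the last column. What remains is
\begin{equation*}
    \begin{bmatrix}
        -s_1 & \lambda - h_{2,2} & -h_{2,3} & \cdots & -h_{2,n-1}\\
        0 & -s_2 & \lambda - h_{3,3} & \cdots & -h_{3,n-1}\\
        \vdots & \ddots & \ddots & \ddots & \vdots\\
        0 & \cdots & 0 & -s_{n-2} & \lambda - h_{n-1,n-1}\\
        0 & \cdots & 0 & 0 & -s_{n-1}
    \end{bmatrix},
\end{equation*}
which is lower triangular with diagonal $-s_1,-s_2,\dots,-s_{n-1}$. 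Its determinant is $(-1)^{n-1}\prod_{k=1}^{n-1} s_k$, nonzero because each $s_k$ lies on the unit circle. Hence $\mathrm{rank}(\lambda\I - \H_n) \ge n-1$, so $\dim\ker(\lambda\I-\H_n) \le 1$. Since an eigenvalue has geometric multiplicity at least one, it is exactly one, and $\H_n$ is non-derogatory.

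There is no real obstacle here; the only thing to be careful about is that the argument uses nothing beyond the nonzeroness of the subdiagonal, so it applies uniformly to every element of $\HH{\theta_k}{n}{P}$ regardless of the population $P$ and of which angles $\theta_k$ appear. In particular, the proposition remains true under the further restriction to $\ZH{\theta_k}{n}{P}$ and under the Toeplitz restriction considered later in the paper.
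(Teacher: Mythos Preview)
Your argument is correct, though note a small slip: the displayed submatrix is \emph{upper} triangular, not lower triangular (the zeros sit below the diagonal). This does not affect the computation, since the determinant of a triangular matrix is the product of its diagonal entries either way.

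Your route differs from the paper's. The paper simply observes that $\H$ is \emph{irreducible} as a matrix because $\H^k$ has a nonzero $k$th subdiagonal, and then invokes the known fact that an irreducible Hessenberg matrix is non-derogatory. Your proof instead gives a self-contained rank argument: the $(n-1)\times(n-1)$ minor obtained by deleting the first row and last column of $\lambda\I-\H_n$ has determinant $(-1)^{n-1}\prod_k s_k \ne 0$, so the geometric multiplicity of every eigenvalue is exactly one. The advantage of your approach is that it is elementary and requires no background facts about irreducibility; the paper's approach is terser but leans on a result the reader must either know or look up. Both arguments use only that the subdiagonal entries are nonzero, so both apply uniformly across all of $\HH{\theta_k}{n}{P}$.
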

\begin{proof}
% Let $\H \in \HH{\theta_k}{n}{P}$. Because $\H$ is upper Hessenberg
% \begin{equation}
%     \H_{i,j}^k =
%     \begin{cases}
%     f_{i,j,k} &\quad\text{for}\quad i < j+k\\
%     s^k &\quad\text{for}\quad i = j + k\\
%     0 &\quad\text{for}\quad i > j+k\\
%     \end{cases}
% \end{equation}
% for $0 \le k \le n-1$ where $f_{i,j,k}$ are some functions of the entries of $\H$. Let
% \begin{equation}
%     \A = r(\H) = \sum_{k=0}^{n-1} c_{k} \H^k  = \mathbf{0}\> .
% \end{equation}
% We find $\A_{n,1} = s^{n-1} c_{n-1} = 0$ and therefore $c_{n-1} = 0$. Continuing recursively for $k$ from $n-2$ to 1 we find $\A_{k+j,j} = s^k c_k = 0$ for $1 \le j \le n-k$ and therefore $c_k = 0$ (since $c_j = 0$ for $j > k$) for $1 \le k \le n-1$. We have $\A = c_0 \H^0 = \mathbf{0}$ and hence $c_0 = 0$. Thus, no non-zero polynomial of degree less than $n$ exists that satisfies $r(\H) = \mathbf{0}$. Therefore, the minimal degree non-zero polynomial that satisfies $r(\H) = \mathbf{0}$ is the characteristic polynomial of $\H$.
This follows from the irreducibility (as a matrix) of the upper Hessenberg matrix $\H$, because the $k$th power of $\H$ has nonzero $k$th subdiagonal.
\end{proof}

% ---------------------------------------------------------------------------- %
% Characteristic Height Definition                                             %
% ---------------------------------------------------------------------------- %
\begin{definition}
    The \textit{characteristic height} of a matrix is the height of its
    characteristic polynomial.
\end{definition}
\begin{remark}
    The height of a polynomial is in fact a norm, namely the infinity norm of the vector of coefficients. In number theory of polynomials, the $1$-norm of the vector of coefficients is also used, and it is called the \textsl{length} of a polynomial in that literature.  We do not use \textsl{length} here.
\end{remark}

% ---------------------------------------------------------------------------- %
% A, -A same height proposition                                                %
% ---------------------------------------------------------------------------- %
\begin{proposition}
    \label{prop:negativeheight}
    For any matrix $\A$, $-\A$ has the same characteristic height.
\end{proposition}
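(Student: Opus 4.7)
The plan is to relate the characteristic polynomial of $-\A$ directly to that of $\A$ via a substitution, and then observe that the transformation only changes signs of coefficients, not their absolute values.

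First, I would write the characteristic polynomial of $\A$ in the monomial basis as $p(\lambda) = \det(\lambda\I - \A) = \lambda^n + a_{n-1}\lambda^{n-1} + \cdots + a_1\lambda + a_0$. Then I would compute the characteristic polynomial of $-\A$ by the manipulation
\begin{equation*}
    \det(\lambda\I - (-\A)) = \det(\lambda\I + \A) = (-1)^n \det(-\lambda\I - \A) = (-1)^n p(-\lambda),
\end{equation*}
using the fact that pulling a factor of $-1$ out of each of the $n$ rows of an $n \times n$ matrix contributes $(-1)^n$ to the determinant.

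Next, I would expand $(-1)^n p(-\lambda) = \sum_{k=0}^{n} (-1)^{n+k} a_k \lambda^k$, so that the $k$th coefficient of the characteristic polynomial of $-\A$ is $(-1)^{n+k} a_k$. Since the characteristic height is $\|[a_0, a_1, \ldots, a_{n-1}, 1]\|_\infty$, which depends only on the absolute values of the coefficients, the sign factor $(-1)^{n+k}$ is immaterial and the two heights agree.

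There is essentially no obstacle here; the result is a one-line consequence of the determinant's behaviour under scalar multiplication of rows. The only thing to be careful about is keeping track of the $(-1)^n$ factor when extracting the minus sign, but this sign disappears once we take absolute values.
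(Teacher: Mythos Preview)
Your proof is correct. The paper actually states this proposition without proof, presumably regarding it as immediate; your argument via $\det(\lambda\I + \A) = (-1)^n p(-\lambda)$ and the resulting sign-only change in coefficients is exactly the standard justification one would supply.
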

\begin{proposition}
    \label{prop:maxheight}
    For populations $P$ with maximal height $1$, the maximal characteristic height of $\H_n \in \mathcal{H}_{\{\theta_k\}}^{n \times n}(P)$ occurs when
    $\left(\prod_{j=n-k+1}^{n-1}s_j\right) h_{i,i+k-1} = -1$ for $1 \le i \le n-k+1$ and $1 \le k \le n$.
\end{proposition}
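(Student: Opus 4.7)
The plan is to combine the coefficient recurrence of Theorem~\ref{thm:charPolyRec2} with the triangle inequality and an induction on $n$, first establishing a matrix-independent upper bound on every $|q_{n,j}|$ and then verifying that the prescription in the statement attains it.

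Write $\alpha_{n,k} := \bigl(\prod_{j=n-k+1}^{n-1}s_{j}\bigr)\,h_{n-k+1,n}$. Because every $s_{j}$ lies on the unit circle and the hypothesis on $P$ gives $|h_{i,j}|\le 1$, we have $|\alpha_{n,k}|\le 1$, with equality precisely when $|h_{n-k+1,n}|=1$. Applying the triangle inequality to the recurrences of Theorem~\ref{thm:charPolyRec2} yields
\begin{equation*}
|q_{n,j}| \le |q_{n-1,j-1}| + \sum_{k=1}^{n-j}|q_{n-k,j}| \quad (1\le j \le n-1), \qquad |q_{n,0}| \le \sum_{k=1}^{n}|q_{n-k,0}|,
\end{equation*}
and iterating produces an explicit sequence $\bar q_{n,j}$, depending only on $n$ and $j$, that upper-bounds $|q_{n,j}|$ for every matrix in the class.

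Next I would show by induction on $n$ that the matrix prescribed in the statement achieves $q_{n,j}=\bar q_{n,j}\ge 0$ simultaneously for every $j$. The prescription makes every $\alpha_{n,k}$ equal to $-1$, collapsing the recurrence to
\begin{equation*}
q_{n,j} = q_{n-1,j-1} + \sum_{k=1}^{n-j} q_{n-k,j},
\end{equation*}
a sum whose terms, by the inductive hypothesis, are non-negative and individually equal their upper bounds. Hence $q_{n,j}$ is non-negative and saturates $\bar q_{n,j}$, and consequently the characteristic height $\max_{j}|q_{n,j}|$ is maximized.

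The principal obstacle is ensuring that the submatrices $\H_m$ of the chosen $\H_n$ themselves realize the hypothesis used in the induction. The prescribed value of $h_{i,i+k-1}$ depends on a product $\prod_{j=n-k+1}^{n-1}s_{j}$ whose range is tied to $n$; when we pass to $\H_{n-1}$ the corresponding product runs over a different range, so $\alpha^{(n-1)}_{n-1,k}$ differs from $-1$ by a factor of $s_{n-k}/s_{n-1}$. That ratio is $1$ precisely when the subdiagonal is constant---the setting of $\HH{0}{n}{P}$ and $\HH{\pi}{n}{P}$ that the paper primarily treats---so the induction closes immediately there, while more general $\{\theta_k\}$ require either specializing to a singleton or strengthening the inductive hypothesis to track an $n$-dependent phase across the superdiagonals.
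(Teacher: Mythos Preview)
Your approach is exactly the paper's: feed the prescribed values into the coefficient recurrence of Theorem~\ref{thm:charPolyRec2}, observe that each $\alpha_{n,k}=-1$ collapses it to $q_{n,j}=q_{n-1,j-1}+\sum_{k}q_{n-k,j}$, and conclude by induction that every $q_{n,j}$ is positive and therefore saturates the triangle-inequality bound.

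The obstacle you flag in your last paragraph is real, and the paper's own proof does not address it either: the paper writes only the level-$n$ instance of the recurrence and tacitly assumes that the lower-level coefficients $q_{m,j}$ are already positive, without checking that the submatrix factors $\alpha^{(m)}_{m,k}$ are also $-1$. Your computation that $\alpha^{(n-1)}_{n-1,k}=-s_{n-k}/s_{n-1}$ is correct, so the induction as written closes only when the subdiagonal is constant. The Remark immediately following the proposition shows that this is the only case the authors actually use (all $s_j=1$ or all $s_j=-1$), so the gap is harmless in context. If one wants the statement to hold for a general family $\{\theta_k\}$, the condition should read $\bigl(\prod_{\ell=i}^{\,i+k-2}s_\ell\bigr)\,h_{i,i+k-1}=-1$, with the product depending on the row $i$ rather than on $n$; under that corrected prescription your induction closes uniformly.
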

\begin{proof}
    Since $|s_k|=1$ and $|h_{i,j}| \le 1$,  $\max |\left(\prod_{j=n-k+1}^{n-1}s_j\right) h_{i,i+k-1}| \le 1$. Suppose $\left(\prod_{j=n-k+1}^{n-1}s_j\right) h_{i,i+k-1} = -1$. By Theorem~\ref{thm:charPolyRec2}
    \begin{align}
        q_{n,0} = -\sum_{k=1}^n \left(\prod_{j=n-k+1}^{n-1}s_j\right) h_{n-k+1,n} q_{n-k,0}
                = \sum_{k=1}^n q_{n-k,0} \label{eq:prop1_3_MaxHeight}
    \end{align}
    and
    \begin{align}
        q_{n,j} = q_{n-1,j-1} - \sum_{k=1}^{n-j} \left(\prod_{j=n-k+1}^{n-1}s_j\right) h_{n-k+1,n} q_{n-k,j}
                = q_{n-1,j-1} + \sum_{k=1}^{n-j} q_{n-k,j} \>. \label{eq:prop1_2_MaxHeight}
    \end{align}
    Since $q_{0,0} = 1$, and equations~\eqref{eq:prop1_3_MaxHeight} and \eqref{eq:prop1_2_MaxHeight} are independent of $s$ and $h_{i,j}$, all $q_{n,j}$ must be positive and the maximum characteristic height is attained.
\end{proof}
\begin{remark}
    When all $s_j = 1$ ($\theta = 0$) and $h_{i,j} = -1$ for all $1 \le i \le j \le n$, $\H_n$ attains maximal characteristic height. By Proposition~\ref{prop:negativeheight}, $s = -1$ ($\theta = \pi$) and $h_{i,j} = 1$ will also attain maximal characteristic height. Both of these cases correspond to upper Hessenberg matrices with a Toeplitz structure as we explore in further detail in Section~\ref{sec:UHTM}.
\end{remark}

\begin{definition}
    We say that $P$ is \textbf{invariant under multiplication} by a fixed unit $e^{i\theta}$ if $e^{i\theta}P = P$; that is, each entry of $P$, say $p$, is such that $e^{i\theta}p$ is also in $P$. Note that invariance with respect to $e^{i\theta}$ implies invariance with respect to $e^{-i\theta}$.
    % For instance, $\{-1, 0, {+1}\}$ is invariant under multiplication by $-1$.
\end{definition}

% ---------------------------------------------------------------------------- %
% Property where subdiagonal entries are all -1 or 1                           %
% ---------------------------------------------------------------------------- %
\begin{theorem}
    Suppose $\H_n \in \HH{\theta_k}{n}{P}$ and $P$ is invariant under multiplication by each $e^{i\theta_{k}}$ and by $-e^{i\theta_{k}}$. Then $\H_n$ is similar to a matrix in $\HH{\pi}{n}{P}$, and similar to a matrix in $\HH{0}{n}{P}$.
\end{theorem}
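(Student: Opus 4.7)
The natural approach is a diagonal similarity. Let $D = \mathrm{diag}(d_1, \ldots, d_n)$ with $d_1 = 1$, and consider $\tilde{\H}_n = D^{-1} \H_n D$. Conjugation preserves upper Hessenberg form, and the entries transform as $\tilde{h}_{i,j} = (d_j/d_i)\, h_{i,j}$ for $i \le j$, while the subdiagonal transforms as $\tilde{s}_k = (d_k/d_{k+1})\, s_k$. The plan is to pick the ratios $d_{k+1}/d_k$ so that the new subdiagonal is exactly what we want, and then verify that the upper-triangle entries are still drawn from $P$.

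For the $\HH{0}{n}{P}$ conclusion, I would set $d_{k+1}/d_k = s_k = e^{i\theta_k}$, so that $\tilde{s}_k = 1$, giving by telescoping
\begin{equation*}
    \frac{d_j}{d_i} = \prod_{m=i}^{j-1} e^{i\theta_m} \quad \text{for } i < j.
\end{equation*}
Since $P$ is invariant under multiplication by each $e^{i\theta_m}$, a straightforward induction on $j-i$ shows $\tilde{h}_{i,j} = \bigl(\prod_{m=i}^{j-1} e^{i\theta_m}\bigr) h_{i,j} \in P$. Hence $\tilde{\H}_n \in \HH{0}{n}{P}$.

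For the $\HH{\pi}{n}{P}$ conclusion, I would instead set $d_{k+1}/d_k = -s_k = -e^{i\theta_k}$, making $\tilde{s}_k = -1$. Then each ratio $d_j/d_i$ is a product of factors $-e^{i\theta_m}$, and the hypothesized invariance of $P$ under multiplication by each $-e^{i\theta_k}$ again guarantees, by induction on $j - i$, that $\tilde{h}_{i,j} \in P$. Therefore $\tilde{\H}_n \in \HH{\pi}{n}{P}$, as desired.

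The only substantive step is checking that the diagonal-entries-to-choose are consistent and nonzero (they are, since each $|s_k| = 1$) and that the invariance hypothesis carries through to arbitrary products of the generators; the latter is a one-line induction, so I do not expect any real obstacle. The content of the theorem is exactly the translation between ``allow several unimodular subdiagonal values'' and ``fix the subdiagonal to a single value'', provided $P$ absorbs the needed twists.
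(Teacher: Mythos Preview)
Your proof is correct and uses essentially the same idea as the paper: a diagonal similarity that rescales each subdiagonal entry to $1$ (respectively $-1$), with the invariance of $P$ absorbing the resulting unimodular factors on the upper triangle. The only cosmetic difference is that the paper carries this out one subdiagonal entry at a time via induction on $n$, whereas you write down the full diagonal conjugation $D = \mathrm{diag}(d_1,\ldots,d_n)$ in one shot; unrolling the paper's induction yields exactly your $D$.
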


\begin{proof}
    We use induction. The case $n=1$ is vacuously upper Hessenberg.  For completeness, we have
    \begin{equation*}
        \begin{bmatrix}
            e^{i\theta_{k}}
        \end{bmatrix}
        \begin{bmatrix}
            h_{11}
        \end{bmatrix}
        \begin{bmatrix}
            e^{-i\theta_{k}}
        \end{bmatrix}
        =
        \begin{bmatrix}
            h_{11}
        \end{bmatrix}
        \in \HH{\theta_k}{1}{P} \>.
    \end{equation*}
    For $n > 1$, partition the matrix as
    \begin{equation*}
        \left[
        \begin{array}{cccc}
            h_{11} & h_{12} \cdots & h_{1n} \\
            s & \multicolumn{3}{c}{\multirow{3}{*}{$\H_{n-1}$}} \\
            & & & \\
            & & &
        \end{array}
        \right]
    \end{equation*}
    where $s = e^{i\theta_{k}}$ for some $\theta_{k}$. Then conjugate by
    \begin{multline*}
        \begin{bmatrix}
            1 & & \\
            & e^{-i\theta_{k}} & \\
            & & \I_{n-2}
        \end{bmatrix}
        \left[
        \begin{array}{cccc}
            h_{11} & h_{12} \cdots & h_{1n} \\
            s & \multicolumn{3}{c}{\multirow{3}{*}{$\H_{n-1}$}} \\
            & & & \\
            & & &
        \end{array}
        \right]
        \begin{bmatrix}
            1 & & \\
            & e^{-i\theta_{k}} & \\
            & & \I_{n-2}
        \end{bmatrix} ^{-1} \\
        =
        \left[
        \begin{array}{ccc}
            h_{11} & e^{i\theta_{k}}h_{12} & \cdots \\
            1 & \multicolumn{2}{c}{\multirow{2}{*}{$\tilde{\H}_{n-1}$}} \\
            & &
        \end{array}
        \right] \>.
    \end{multline*}
    Clearly $\tilde{\H}_{n-1}$ is in $\HH{\theta_k}{n-1}{P}$. By induction the proof is complete. The proof for $-e^{i\theta k}$ and $\HH{\pi}{n}{P}$
\end{proof}

\begin{remark}
    For clarity, consider the case $n = 2$:
    \begin{equation}
         \H =
        \begin{bmatrix}
            a & b \\
            s & c
        \end{bmatrix} \>,
    \end{equation}
    where $a, b, c \in P$ and $s = e^{i\theta_{k}}$. Then, the following similarity transforms reduce the problem to one in \HH{0}{2}{P} and one in \HH{\pi}{2}{P}.
    \begin{align}
        \begin{bmatrix}
            1 & 0 \\
            0 & e^{-i\theta_{k}}
        \end{bmatrix}
        \H
        \begin{bmatrix}
            1 & 0 \\
            0 & e^{i\theta_{k}}
        \end{bmatrix}
        & =
        \begin{bmatrix}
            a & be^{i\theta_{k}} \\
            1 & c
        \end{bmatrix} \\
        \begin{bmatrix}
            1 & 0 \\
            0 & -e^{-i\theta_{k}}
        \end{bmatrix}
        \H
        \begin{bmatrix}
            1 & 0 \\
            0 & -e^{i\theta_{k}}
        \end{bmatrix}
        &=
        \begin{bmatrix}
            a & -be^{i\theta_{k}} \\
            -1 & c
        \end{bmatrix} \>.
    \end{align}
\end{remark}

% ============================================================================ %
% Zero diagonal and $-1$ subdiagonal for Upper Hessenberg Matrices             %
% ============================================================================ %
\subsection{Zero Diagonal Upper Hessenberg Matrices}

% Suppose \linebreak $\mathbf{M}_{n} \in \ZH{0}{n}{P}$, where $P = \{0, w_{1}, \ldots, w_{m}\}$ for some fixed positive integer $m$, and each $\left|w_{j}\right| = 1$. The simplest example is $\{0, 1\}$ corresponding to $m=1$, but $m=2$ when $P = \{-1, 0, 1\}$ is interesting.
In this section we make the simplifying assumption that the diagonal of the matrix is zero.  This amounts, in the event that a population is symmetric about zero, to looking at the distribution about the average eigenvalue.  This is precisely true in the Toeplitz case, but only indicative otherwise.

In this section we also investigate just when the matrices can be \textsl{normal} (that is, commute with their Hermitian transpose).  Normal matrices have several nice properties; matrices that are not normal have interesting pseudospectra~\cite{Embree:HLA:2013}.  We find that for upper Hessenberg matrices, ``normality'' is atypical.  In retrospect, this could have been expected.
\begin{theorem}
\label{thm:zero_diag_UH}
Let $\A_n \in \ZH{0}{n}{P}$ for $P = \{0, w_1, \ldots, w_m\}$ for some fixed positive integer $m$ and each $|w_j| = 1$. If $\A_{n}$ is normal, i.e.~$\A_{n}^{*}\A_{n} = \A_{n}\A_{n}^{*}$, then for $n \geq 3$, $\A_{n}$ is $w_{j}$-skew symmetric for some fixed $1 \leq j \leq m$ or $w_{j}$-skew circulant. These $2m$ matrices ($m$ symmetric/$w_j$-skew symmetric, and $m$ $w_j$-skew circulant matrices) are the only normal matrices in $\ZH{0}{n}{P}$. (For $n=1$, this is only $\left[ 0 \right]$; for $n=2$, the symmetric and circulant cases coalesce, so that there are only $m$ such matrices.)
\end{theorem}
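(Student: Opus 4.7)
The plan is to extract the rigid structure claimed purely from the normality identity $\A_n^*\A_n = \A_n\A_n^*$ applied entrywise, using heavily that every subdiagonal entry of $\A_n$ equals $1$ and every diagonal entry equals $0$. First, I would compare the $(i,i)$-entries of the two sides, which says that the squared $\ell_2$-norm of row $i$ equals that of column $i$ for every $i$. Applied at $i=1$, this gives $\sum_{j=2}^{n}|a_{1,j}|^2 = 1$, since the only nonzero entry of column $1$ is $a_{2,1}=1$; hence the first row has exactly one nonzero upper-triangle entry, which I call $a_{1,k^*} = w$ with $w \in \{w_1,\ldots,w_m\}$.

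Next I would exploit the off-diagonal entries $(1,i)$ of the normality equation. Because row $1$ and column $1$ each contain a single nonzero entry, the defining sums collapse to $(\A_n^*\A_n)_{1,i} = a_{2,i}$ and $(\A_n\A_n^*)_{1,i} = w\,\overline{a_{i,k^*}}$; equating them gives
\begin{equation*}
    a_{2,i} \;=\; w\,\overline{a_{i,k^*}} \qquad \text{for every } i\geq 2.
\end{equation*}

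The critical step is the dichotomy $k^* \in \{2, n\}$. Suppose, toward contradiction, that $2 < k^* < n$. Then $a_{1,2}=0$, so column $2$ has squared norm $1$ (only the subdiagonal entry $a_{3,2}=1$ contributes), and the row/column norm equality forces row $2$ to also have squared norm $1$, whence $a_{2,j}=0$ for all $j\geq 3$. But plugging $i = k^*+1 \leq n$ into the displayed identity gives $a_{2,k^*+1} = w\,\overline{a_{k^*+1,k^*}} = w\cdot 1 = w \neq 0$, a contradiction. Hence $k^*=2$ or $k^*=n$.

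Each case is then closed by induction on the row index, running the same row/column-norm and off-diagonal comparison one row at a time. In the case $k^*=2$, the identity forces $a_{2,3}=w$ and $a_{2,j}=0$ for $j\geq 4$; iterating, row $i$ contains only $a_{i,i-1}=1$ and $a_{i,i+1}=w$, producing the $w$-skew symmetric tridiagonal matrix with constant superdiagonal $w$. In the case $k^*=n$, the identity instead forces $a_{i,n}=0$ for all $i\geq 2$, and then the analogous analysis of row $2$ collapses row $3$, and so on, leaving only $a_{1,n}=w$ together with the subdiagonal $1$'s, i.e.\ the $w$-skew circulant matrix. Letting $w$ range over $\{w_1,\ldots,w_m\}$ produces exactly $2m$ such matrices. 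The cases $n\leq 2$ are handled by direct inspection; at $n=2$ the tridiagonal and cyclic patterns coincide as the same $2\times 2$ shape, giving only $m$ matrices. The main obstacle is securing the dichotomy $k^*\in\{2,n\}$; once it is in hand, the row-by-row propagation is a straightforward repetition of the row-$1$ analysis.
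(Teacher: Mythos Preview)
Your argument is correct. The computation of the $(1,i)$-entries is right, and the dichotomy $k^*\in\{2,n\}$ via the contradiction at $i=k^*+1$ is the clean crux of the proof. One small point of presentation: in the case $k^*=n$ you write ``the identity instead forces $a_{i,n}=0$ for all $i\geq 2$,'' but the identity alone does not do this---you first need the row/column-norm equality at $i=2$ (exactly as in your dichotomy argument) to get $a_{2,j}=0$ for $j\geq 3$, and \emph{then} the identity yields $a_{i,n}=0$. You clearly know this, since it is the same step you already carried out; just make the dependence explicit.

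The paper's own proof reaches the same conclusion by a different organization. It partitions
\[
\A_n=\begin{bmatrix}0 & \T^*\\ \mathbf{e} & \A_{n-1}\end{bmatrix}
\]
and splits into cases according to whether the principal submatrix $\A_{n-1}$ is itself normal: if it is, the block equations force $\T=\tau\mathbf{e}$ and an inductive descent produces the $\tau$-skew-symmetric tridiagonal matrix; if it is not, the ``departure from normality'' $\A_{n-1}\A_{n-1}^*-\A_{n-1}^*\A_{n-1}=\T\T^*-\mathbf{e}\mathbf{e}^*$ is analyzed blockwise to push the single nonzero of $\T$ to the last position and collapse $\A_{n-1}$ to a shift, yielding the $\tau$-skew-circulant. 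Your route replaces this dichotomy on normality of $\A_{n-1}$ with the dichotomy on $k^*$, which is more elementary (no block algebra, no induction on $n$) and isolates the key obstruction in one line. The paper's block viewpoint, on the other hand, makes the recursive structure explicit and would generalize more readily if one wanted to relax the ``subdiagonal $=1$'' hypothesis.
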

\begin{proof}
To prove this theorem, we establish a sequence of lemmas. First, we partition $\A_{n}$. Put
\begin{equation}
    \A_{n} =
    \begin{bmatrix}
        0 & \T^{*} \\
        \mathbf{e} & \A_{n-1}
    \end{bmatrix}
\end{equation}
where
\begin{equation}
    \mathbf{e}^{*} =
    \begin{bmatrix}
        1 & 0 & \cdots & 0
    \end{bmatrix}
\end{equation}
and
\begin{equation}
    \T^{*} =
    \begin{bmatrix}
        t_{12} & t_{13} & \cdots & t_{1n}
    \end{bmatrix} \>.
\end{equation}
Then the conditions of normality are
\begin{equation}
    \A_{n}\A_{n}^{*} =
    \begin{bmatrix}
        \T^{*}\T & \T^{*}\A_{n-1}^{*} \\
        \A_{n-1} & \mathbf{e}\mathbf{e}^{*} + \A_{n-1}\A_{n-1}^{*}
    \end{bmatrix}
\end{equation}
must equal
\begin{equation}
    \A_{n}^{*}\A_{n} =
    \begin{bmatrix}
        1 & \mathbf{e}^{*}\A_{n-1} \\
        \A_{n-1}^{*}\mathbf{e} & \T\T^{*} + \A_{n-1}^{*}\A_{n-1}
    \end{bmatrix} \>.
\end{equation}
\end{proof}

\begin{lemma}
\label{lemma:nonzero_Mn}
The first row of $\A_{n}$ contains exactly one nonzero element, say $\tau$ in position $j$ $(2 \leq j \leq n)$.
\end{lemma}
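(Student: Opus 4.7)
The proof is a one-shot computation using just the $(1,1)$ entries of the two block matrices written down in the statement of Theorem~\ref{thm:zero_diag_UH}. My plan is as follows.

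First I would read off the first row of $\A_n$. Since $\A_n \in \ZH{0}{n}{P}$, the diagonal is zero, so $(\A_n)_{1,1}=0$ and the first row is $[0, t_{12}, t_{13}, \ldots, t_{1n}]$ with each $t_{1j}\in P=\{0,w_1,\ldots,w_m\}$. The essential arithmetic fact about the population is that $|t_{1j}|\in\{0,1\}$, so $|t_{1j}|^{2}$ equals $1$ if $t_{1j}\ne 0$ and equals $0$ otherwise; hence
\begin{equation*}
    \T^{*}\T \;=\; \sum_{j=2}^{n}|t_{1j}|^{2} \;=\; \#\{\,j : t_{1j}\ne 0\,\}.
\end{equation*}

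Next I would compare this to the $(1,1)$ entry of $\A_n^{*}\A_n$ in the block form already supplied. The $(1,1)$ block of $\A_n^{*}\A_n$ is the scalar $1$, arising because the first column of $\A_n$ is $\mathbf{e}=[1,0,\ldots,0]^{*}$ (the zero diagonal kills the top entry, leaving only the subdiagonal entry $s_{1}=e^{i0}=1$). Normality forces the $(1,1)$ blocks of $\A_n\A_n^{*}$ and $\A_n^{*}\A_n$ to agree, i.e.\ $\T^{*}\T=1$. Combined with the previous display, this says the number of nonzero entries in the first row is exactly one; that unique nonzero entry sits in some column $j$ with $2\le j\le n$ because column $1$ of the first row is $0$.

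There is essentially no obstacle here: the argument is a single inner-product computation made possible by (i)~zero diagonal, (ii)~unit subdiagonal, and (iii)~the unit-magnitude-or-zero structure of $P$. The only thing to be slightly careful about is the degenerate case $n=2$, where $\T$ has a single component $t_{12}$; the same identity $|t_{12}|^{2}=1$ still forces $t_{12}\ne 0$, so the statement holds with $j=2$. I would record the value of that unique nonzero entry as $\tau:=t_{1j}$, matching the notation in the lemma, so it can be referenced in the subsequent lemmas building toward Theorem~\ref{thm:zero_diag_UH}.
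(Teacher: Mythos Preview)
Your proposal is correct and follows essentially the same approach as the paper: compare the $(1,1)$ blocks of $\A_n\A_n^{*}$ and $\A_n^{*}\A_n$ to get $\T^{*}\T=\sum_{j=2}^{n}|t_{1j}|^{2}=1$, then use that every nonzero population entry has unit modulus. Your write-up is more explicit about why the $(1,1)$ block of $\A_n^{*}\A_n$ equals $1$ and handles the $n=2$ edge case, but the core idea is identical.
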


\begin{proof}
\begin{equation}
    \T^{*}\T = \sum_{j=2}^{n} \left|t_{ij}\right|^{2} = 1
\end{equation}
from the upper left corner. Since each nonzero element of $P$ has magnitude $1$, exactly one entry must be nonzero.
\end{proof}

\begin{lemma}
If $\A_{n-1}$ is normal then $\T = \tau\mathbf{e}$ and $\A_{n}$ is $\tau$-skew symmetric.
\end{lemma}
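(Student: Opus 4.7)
The plan is to work block-by-block with the normality identity $\A_n\A_n^{*}=\A_n^{*}\A_n$ already unpacked in the proof of Theorem~\ref{thm:zero_diag_UH}, feeding in the extra hypothesis that $\A_{n-1}$ is also normal. Under that hypothesis the summands $\A_{n-1}\A_{n-1}^{*}$ and $\A_{n-1}^{*}\A_{n-1}$ coincide, so the $(2,2)$ block identity collapses to
\begin{equation*}
    \mathbf{e}\mathbf{e}^{*} \;=\; \T\T^{*} \>.
\end{equation*}
The right-hand side is a rank-one positive semidefinite matrix whose $(i,i)$ diagonal entry is $|t_{1,i+1}|^{2}$, while the left-hand side has a $1$ in position $(1,1)$ and zeros elsewhere. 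This forces $|t_{12}|^{2}=1$ and $t_{1j}=0$ for all $j\ge 3$, so writing $\tau := t_{12}$ we obtain $\T^{*}=\tau\mathbf{e}^{*}$ and hence $\T = \bar{\tau}\mathbf{e}$; the unique nonzero entry promised by Lemma~\ref{lemma:nonzero_Mn} is thereby pinned to position $2$.

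Next I would feed this back into the $(1,2)$ off-diagonal block equation $\T^{*}\A_{n-1}^{*} = \mathbf{e}^{*}\A_{n-1}$, which becomes $\tau\,\mathbf{e}^{*}\A_{n-1}^{*} = \mathbf{e}^{*}\A_{n-1}$. Because $\A_{n-1}\in\ZH{0}{n-1}{P}$ has zero diagonal and subdiagonal identically $1$, the first row of $\A_{n-1}^{*}$ equals $(0,1,0,\ldots,0)$, and equating entries forces the first row of $\A_{n-1}$ to be $(0,\tau,0,\ldots,0)$. Partitioning $\A_{n-1}$ analogously and repeating the $(2,2)$-block computation for $\A_{n-1}$, this newly extracted first-row structure yields $\T_{n-1}\T_{n-1}^{*} = \mathbf{e}_{n-1}\mathbf{e}_{n-1}^{*}$ automatically, so the identity collapses to $\A_{n-2}\A_{n-2}^{*} = \A_{n-2}^{*}\A_{n-2}$. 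Thus normality of $\A_{n-1}$ propagates to normality of its trailing block $\A_{n-2}$, and an induction on $n$ (base case $n=3$ checked directly) delivers $\A_{n-1} = \tau\A_{n-1}^{*}$ with the \emph{same} scalar $\tau$.

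To close, I would reassemble the block form. The data $\T^{*} = \tau\mathbf{e}^{*}$, $\mathbf{e} = \tau\T$ (immediate from $\T = \bar{\tau}\mathbf{e}$ and $|\tau|=1$), and $\A_{n-1} = \tau\A_{n-1}^{*}$ together imply $\A_n = \tau\A_n^{*}$. Within $\ZH{0}{n}{P}$ this identity pins down exactly the tridiagonal pattern---zero diagonal, $1$'s on the subdiagonal, $\tau$'s on the superdiagonal, zeros elsewhere---which is the $\tau$-skew symmetric form.

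The main obstacle I anticipate is the inductive descent step, namely showing that normality of $\A_{n-1}$ alone forces normality of its trailing principal submatrix $\A_{n-2}$. This relies on the structural identity $\T_{n-1} = \bar{\tau}\mathbf{e}_{n-1}$ one level down, which in turn was extracted from the $(1,2)$-block analysis at the current level. One must verify carefully that the constant $\tau$ produced at every depth of the recursion is the \emph{same} unit scalar (rather than a new one at each level), so that all superdiagonals of $\A_n$ carry a common value and the closing identity $\A_n = \tau\A_n^{*}$ really does hold with a single $\tau$.
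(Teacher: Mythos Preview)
Your proposal is correct and follows essentially the same route as the paper: use the $(2,2)$ block together with normality of $\A_{n-1}$ to force $\T\T^{*}=\mathbf{e}\mathbf{e}^{*}$, then the $(1,2)$ block to identify the first row of $\A_{n-1}$, and finally descend inductively to show $\A_{n-2}$ is normal, with base case $n=3$. The only discrepancy is cosmetic: you set $\tau:=t_{12}$, whereas the paper's convention (consistent with the lemma's stated conclusion $\T=\tau\mathbf{e}$) takes $\tau^{*}=t_{12}$, so your $\tau$ is the paper's $\tau^{*}$ throughout.
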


\begin{proof}
If $\A_{n-1}$ is normal, then $\T\T^{*} + \A_{n-1}^{*}\A_{n-1}$ being equal to $\mathbf{e}\mathbf{e}^{*} + \A_{n-1}\A_{n-1}^{*}$ implies $\T\T^{*} = \mathbf{e}\mathbf{e}^{*}$ so that $\T^{*} = \begin{bmatrix}\tau^{*} & 0 & \cdots & 0\end{bmatrix}$ for some $\tau$ with $\left|\tau\right| = 1$. Then 
\begin{equation}
\T^{*}\A_{n-1}^{*} = \mathbf{e}^{*}\A_{n-1} \Rightarrow \tau^{*}\begin{bmatrix} 1 & 0 & \cdots & 0 \end{bmatrix}\A_{n-1}^{*} = \mathbf{e}^{*}\A_{n-1}^{*}
\end{equation}
and this says $\tau^{*}$ times the first row of $\A_{n-1}^{*}$ is the first row of $\A_{n-1}$.

But the first row of $\A_{n-1}^{*}$ is $\begin{bmatrix}0 & 1 & 0 & \cdots & 0\end{bmatrix}$ because $\A_{n-1}$ is upper Hessenberg with zero diagonal. Thus the first row of $\A_{n-1}$ is $\begin{bmatrix}0 & \tau^{*} & 0 & \cdots & 0\end{bmatrix}$. Thus
\begin{equation}
    \A_{n} =
    \left[
        \begin{array}{c|c|c}
            0 & \tau^{*} & \\
            \hline
            1 & 0 & \begin{array}{ccc}\tau^{*} & &\end{array} \\
            \hline
            & \begin{array}{c}1 \\ \\\end{array} & \A_{n-2}
        \end{array}
    \right]
    \quad \text{(remember $n \geq 3$)}
\end{equation}
and
\begin{equation}
    \A_{n-1}=
    \left[
        \begin{array}{cc}
            0 & \begin{array}{ccc} \tau^{*} & & \end{array} \\
            \begin{array}{c} 1 \\ \\ \end{array} & \A_{n-2}
        \end{array}
    \right]
\end{equation}
is normal. Because $\A_{n-1}$ is normal, and 
\begin{equation}
    \A_{n-1}^{*} = 
    \left[
        \begin{array}{ccc}
            0 & 1 & \\
            \tau & 0 & \begin{array}{ccc}1 & & \end{array} \\
            & \begin{array}{c}\tau \\ \\ \end{array} & \A_{n-2}^{*}
        \end{array}
    \right]
\end{equation}
we have $\A_{n-1}^{*}\A_{n-1} = \A_{n-1}\A_{n-1}^{*}$ or
\begin{multline*}
    \left[
        \begin{array}{ccc}
            0 & 1 & \\
            \tau & 0 & \begin{array}{ccc} 1 & & \end{array} \\
            & \begin{array}{c}1 \\ \\ \end{array} & \A_{n-2}
        \end{array}
    \right]
    \left[
        \begin{array}{ccc}
            0 & \tau^{*} & \\
            1 & 0 & \begin{array}{ccc}\tau^{*} & & \end{array} \\
            & \begin{array}{c}1 \\ \\ \end{array} & \A_{n-2}
        \end{array}
    \right] \\
    =
    \left[
        \begin{array}{ccl}
            1 & 0 & \tau^{*} \\
            0 & 2 & \mathbf{e}_{n-2}^{*}\A_{n-2} \\
            \begin{array}{c}\tau \\ \\ \end{array} & \tau\A_{n-2}^{+}\mathbf{e}_{n-2} & \mathbf{e}\mathbf{e}^{*} + \A_{n-2}^{*}\A_{n-2}
        \end{array}
    \right]
\end{multline*}
must equal
\begin{multline*}
    \left[
        \begin{array}{ccc}
            0 & \tau^{*} & \\
            1 & 0 & \begin{array}{ccc}\tau^{*} & & \end{array} \\
            & \begin{array}{c}1 \\ \\ \end{array} & \A_{n-2}
        \end{array}
    \right]
    \left[
        \begin{array}{ccc}
            0 & 1 & \\
            \tau & 0 & \begin{array}{ccc} 1 & & \end{array} \\
            & \begin{array}{c}1 \\ \\ \end{array} & \A_{n-2}
        \end{array}
    \right] \\
    =
    \left[
        \begin{array}{ccl}
            1 & 0 & \tau^{*} \\
            0 & 2 & \mathbf{e}_{n-2}^{*}\A_{n-2} \\
            \begin{array}{c}\tau \\ \\ \end{array} & \tau\A_{n-2}^{+}\mathbf{e}_{n-2} & \mathbf{e}\mathbf{e}^{*} + \A_{n-2}^{*}\A_{n-2}
        \end{array}
    \right] \>.
\end{multline*}
The lower left block gives $\mathbf{e}\mathbf{e}^{*} + \A_{n-2}\A_{n-2}^{*} = \mathbf{e}\mathbf{e}^{*} + \A_{n-2}^{*}\A_{n-2}$ so $\A_{n-2}$ must also be normal.

At this point, we see the outline of an induction:
\begin{equation}
    \A_{n} =
    \left[
        \begin{array}{c|c}
            0 & \begin{array}{ccc}\tau^{*} & & \end{array} \\
            \hline
            \begin{array}{c}1 \\ \\ \end{array} & \A_{n-1}
        \end{array}
    \right]
\end{equation}
being normal with $\A_{n-1}$ also being normal implies that
\begin{equation}
    \A_{n-1} =
    \left[
        \begin{array}{c|c}
            0 & \begin{array}{ccc}\tau^{*} & & \end{array} \\
            \hline
            \begin{array}{c}1 \\ \\ \end{array} & \A_{n-2}
        \end{array}
    \right]
\end{equation}
where $\A_{n-2}$ is normal. Explicit computation of the $n=3$ case shows the induction terminates.
\end{proof}

We now consider the harder case where
\begin{equation}
    \A_{n} =
    \begin{bmatrix}
        0 & \T^{*} \\
        \mathbf{e}_{n-1} & \A_{n-1}
    \end{bmatrix}
\end{equation}
but where $\A_{n-1}$ is not itself normal. From Lemma \ref{lemma:nonzero_Mn} we know that $\T^{*}$ has only one nonzero element; call it $\tau^{*}$ as before. Then
\begin{equation}
    \T\T^{*} =
    \begin{bmatrix}
        0 & & & & & & \\
        & \ddots & & & & & \\
        & & 0 & & & & \\
        & & & 1 & & & \\
        & & & & 0 & & \\
        & & & & & \ddots & \\
        & & & & & & 0
    \end{bmatrix}
\end{equation}
while
\begin{equation}
    \mathbf{e}\mathbf{e}^{*} =
    \begin{bmatrix}
        1 & & & \\
        & 0 & & \\
        & & \ddots & \\
        & & & 0
    \end{bmatrix} \>,
\end{equation}
and we may assume that the $1$ in $\T\T^{*}$ does not occur in the first row and column (else we are in the previous case, and $\A_{n-1}$ will be normal). Here 
\begin{equation}
    \label{eq:Mnormal}
    \A_{n-1}\A_{n-1}^{*} - \A_{n-1}^{*}\A_{n-1} = \T\T^{*} - \mathbf{e}\mathbf{e}^{*} = 
    \begin{bmatrix}
        -1 & & & & & & & \\
        & 0 & & & & & & \\
        & & \ddots & & & & & \\
        & & & 0 & & & & \\
        & & & & 1 & & & \\
        & & & & & 0 & & \\
        & & & & & & \ddots & \\
        & & & & & & & 0
    \end{bmatrix}
\end{equation}
is the departure of $\A_{n-1}$ from normality. We will establish that in fact 
\begin{equation}
    \T^{*} =
    \begin{bmatrix}
        0 & 0 & 0 & \cdots & 0 & \tau^{*}
    \end{bmatrix}
\end{equation}
and that
\begin{equation}
    \A_{n-1} =
    \begin{bmatrix}
        0 & & & & \\
        1 & 0 & & & \\
        & 1 & 0 & & \\
        & & \ddots & \ddots & \\
        & & & 1 & 0
    \end{bmatrix} \>;
\end{equation}
that is, the nonzero element can only occur in the last place. Notice that the upper left corner of equation \eqref{eq:Mnormal} is, if the top row of $\A_{n-1}$ is $\begin{bmatrix}0 & a_{1,2} & a_{1,3} & \cdots a_{1,n-1}\end{bmatrix}$,
\begin{equation}
    \sum_{j=2}^{n-1} \left|a_{1,j}\right|^{2} - 1 = -1 \>.
\end{equation}
Therefore, all $a_{1,j} = 0$ and the first row of $\A_{n-1}$ must be zero: i.e.
\begin{equation}
    \A_{n-1} =
    \begin{bmatrix}
        0 & 0 & 0 & \cdots & 0 \\
        1 & 0 & a_{3,3} & \cdots & a_{2, n-1} \\
        & 1 & 0 & \ddots & \vdots \\
        & & \ddots & \ddots & a_{n-2, n-1} \\
        & & & 1 & 0 
    \end{bmatrix}
\end{equation}
Then,
\begin{equation}
    \A_{n-1}\T = \A_{n-1}^{*}\mathbf{e} = 
    \begin{bmatrix}
        0 & 1 & & \\
        0 & 0 & \ddots \\
        \vdots & & \ddots & 1 \\
        0 & \cdots & \cdots & 0 
    \end{bmatrix}
    \begin{bmatrix}
        1 \\
        0 \\
        \vdots \\
        0
    \end{bmatrix}
    =
    \begin{bmatrix}
        0 \\
        0 \\
        \vdots \\
        0
    \end{bmatrix} \>.
\end{equation}
If
\begin{equation}
    \T =
    % \begin{bmatrix}
    %     0 \\
    %     0 \\
    %     \vdots \\
    %     0 \\
    %     \tau \\
    %     0 \\
    %     \vdots \\
    %     0
    % \end{bmatrix}\>,
    [ 0, 0, \ldots, 0, \tau^*, 0, \ldots, 0]^*
\end{equation}
then
\begin{equation}
    \A_{n-1}\T =
    % \begin{bmatrix}
    %     0 \\
    %     \tau a_{2,j} \\
    %     \vdots \\
    %     \tau a_{j-1, j} \\
    %     0 \\
    %     \tau \\
    %     0 \\
    %     \vdots \\
    %     0
    % \end{bmatrix}\>,
    [0, \tau^*a_{2,j}^*, \ldots, \tau^*a_{j-1,j}, 0, \ldots, 0]^*
\end{equation}
which is impossible unless $j=n$ (when the $\tau$ term is not present). Therefore,
\begin{equation}
    \A_{n-1} =
    \begin{bmatrix}
        0 & 0 & \cdots & 0 & 0 \\
        1 & x & \cdots & x & 0 \\
        & 1 & \ddots & \vdots & \vdots \\
        & & \ddots & x & 0 \\
        & & & 1 & 0
    \end{bmatrix}
    =
    \begin{bmatrix}
        0 & 0 \\
        \U & 0
    \end{bmatrix} \>,
\end{equation}
and
\begin{equation}
    \A_{n-1}\A_{n-1}^{*} - \A_{n-1}^{*}\A_{n-1} =
    \begin{bmatrix}
        -1 & & & & \\
        & 0 & & & \\
        & & \ddots & & \\
        & & & 0 & \\
        & & & & 1
    \end{bmatrix} \>.
\end{equation}
Since
\begin{equation}
    \A_{n-1}^{*} =
    \begin{bmatrix}
        0 & \U^{*} \\
        0 & 0
    \end{bmatrix}
\end{equation}
and
\begin{equation}
    \A_{n-1}\A_{n-1}^{*} =
    \begin{bmatrix}
        0 & 0 \\
        0 & \U\U^{*}
    \end{bmatrix}
\end{equation}
and
\begin{equation}
    \A_{n-1}^{*}\A_{n-1} =
    \begin{bmatrix}
        \U^{*}\U & 0 \\
        0 & 0
    \end{bmatrix} \>,
\end{equation}
\begin{equation}
    % \A_{n-1}\A_{n-1}^{*} - \A_{n-1}^{*}\A_{n-1} =    
    \begin{bmatrix}
        0 & 0 \\
        0 & \U\U^{*}
    \end{bmatrix}
    -
    \begin{bmatrix}
        \U^{*}\U & 0 \\
        0 & 0
    \end{bmatrix}
\end{equation}
must be diagonal. Therefore, the first row of $\U\U^{*}$ must be zero except for the first element.

\begin{remark}
    For $n = 4$, and $P = \{0, i, -i\}$ ($m = 2$) the following 4 matrices are normal:
    
    \begin{center}
    {\renewcommand{\arraystretch}{1}
    \begin{tabular}{*1{>{\centering\arraybackslash}p{.05\textwidth}}*2{>{\centering\arraybackslash}p{.3\textwidth}}}
    
    \toprule
    $w_j$ & $w_j$-skew symmetric & $w_j$-skew circulant\\ \midrule
    \[i\] &
    \[\begin{bmatrix}
        0 & i & 0 & 0\\
        1 & 0 & i & 0\\
          & 1 & 0 & i\\
          &   & 1 & 0
    \end{bmatrix}\] & 
    \[\begin{bmatrix}
        0 & 0 & 0 & i\\
        1 & 0 & 0 & 0\\
          & 1 & 0 & 0\\
          &   & 1 & 0
    \end{bmatrix}\]\\
    \[-i\] &
    \[\begin{bmatrix}
        0 & -i & 0 & 0\\
        1 & 0 & -i & 0\\
          & 1 & 0 & -i\\
          &   & 1 & 0
    \end{bmatrix}\] & 
    \[\begin{bmatrix}
        0 & 0 & 0 & -i\\
        1 & 0 & 0 & 0\\
          & 1 & 0 & 0\\
          &   & 1 & 0
    \end{bmatrix}\]\\ \bottomrule
    \end{tabular}}
    \end{center}
\end{remark}

% ============================================================================ %
% Stable Matrices                                                              %
% ============================================================================ %
\subsection{Stable Matrices}
An important question in dynamical systems (either continuous or discrete), especially in models in mathematical biology where we find the notion of \textsl{sign-stability}, is whether or not the system is \textsl{stable}. That is, does the solution of the system (or of perturbations to the system) ultimately decay to zero.  The theory of eigenvalues is classically connected to this question.  For Bohemians, the natural version of this is to ask what is the probability that the chosen Bohemian is stable?  We investigate this question in this section.

\subsection{Type I Stable Matrices}
A \textsl{Type I stable matrix} $\A$
is a matrix with all of its eigenvalues strictly in the left half plane: if $\lambda$ is an eigenvalue of~$A$ then $\Re(\lambda) < 0$. This nomenclature comes from differential equations, in that all solutions of the linear system of ODEs $dy/dt = \A y$ will ultimately decay as $t \to \infty$ if $\A$ is a type I stable matrix. 

If the matrix $\A$ is not \textsl{normal}, then \textsl{pseudospectra} can play a role, in that even though all solutions $y$ must ultimately decay, they might first grow large.  See~\cite{Embree:HLA:2013} for details.

By Theorem~\ref{thm:zero_diag_UH}, only $2m$ of the zero diagonal upper Hessenberg matrices with population $P = \{-1, 0, {+1}\}$ are normal, where here $m=2$.  Similarly, when the population is $P=\{0, {+1}\}$ then $m=1$ and only two matrices of every dimension are normal (the symmetric matrix with $1$s on its upper diagonal, and the circulant matrix with a $1$ in the last column of the first row).

\begin{theorem}
\label{thm:not_stable}
If $\mathrm{trace}(\A) = 0$, where $\A \in \mathbb{C}^{n \times n}$, then $\A$ is not type 1 stable.
% No $\A_n \in \ZH{\theta_k}{n}{\mathbb{Z}}$ is Type I stable, for any population $P$.
\end{theorem}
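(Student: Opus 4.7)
The plan is to argue by contradiction using the classical identity that the trace of a square matrix equals the sum of its eigenvalues counted with algebraic multiplicity. This identity holds over $\mathbb{C}$ and does not require any structural hypothesis on $\A$ (in particular, $\A$ need not be diagonalizable), so it applies directly in the stated generality.

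Concretely, I would proceed as follows. Let $\lambda_1, \ldots, \lambda_n$ denote the eigenvalues of $\A$ listed with algebraic multiplicity, and suppose for contradiction that $\A$ is type I stable, so that $\Re(\lambda_i) < 0$ for every $i = 1, \ldots, n$. Summing these strict inequalities across all $n \geq 1$ eigenvalues yields
\begin{equation}
    \sum_{i=1}^{n} \Re(\lambda_i) < 0.
\end{equation}
On the other hand, since the real part is $\mathbb{R}$-linear and $\mathrm{trace}(\A) = \sum_{i=1}^{n} \lambda_i$, we have
\begin{equation}
    \sum_{i=1}^{n} \Re(\lambda_i) = \Re\!\left(\sum_{i=1}^{n} \lambda_i\right) = \Re(\mathrm{trace}(\A)) = \Re(0) = 0,
\end{equation}
which contradicts the strict inequality. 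Hence no such $\A$ can be type I stable.

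The only subtle point is the use of strict inequality: were the definition of stability instead to allow $\Re(\lambda) \leq 0$, the conclusion would fail (e.g.\ the zero matrix), so the argument is genuinely using that type I stability excludes the imaginary axis. No real obstacle arises; this is essentially a one-line consequence of the trace-eigenvalue identity and applies verbatim to any zero-trace matrix, in particular to every matrix in $\ZH{\theta_k}{n}{P}$, which motivates the placement of this theorem in the section on zero-diagonal Hessenberg Bohemians.
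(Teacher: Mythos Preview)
Your proof is correct and follows essentially the same approach as the paper: both arguments use the identity $\sum_k \lambda_k = \mathrm{trace}(\A) = 0$ to conclude that $\sum_k \Re(\lambda_k) = 0$, and then observe that not all real parts can be strictly negative. The paper phrases the last step via an averaging argument (the maximum real part is at least the average, which is zero), while you phrase it as a direct contradiction, but the content is identical.
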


\begin{proof}
Suppose $\A$ has eigenvalues $\{\lambda_{k}\}_{k=1}^{n}$. Then 
\begin{equation}
\sum_{k=1}^{n}\lambda_{k} = \mathrm{trace}(\A_n) = 0\>. 
\end{equation}
Therefore, $\sum_{k=1}^{n}\mathrm{Re}(\lambda_{k}) = 0$. This is $n$ times the average, and so the average is zero.  Since the maximum $\mathrm{Re}(\lambda_{k})$ must be larger than the average, this proves the theorem.
% $\mathrm{Tr}(\M) = 0$, so $\sum_{k=1}^{n}\lambda_{k} = 0$, so $\sum_{k=1}^{n}\mathrm{Re}(\lambda_{k}) = 0$, hence $\overline{\mathrm{Re}(\lambda_{k})} = 0$ and $\mathrm{max}\mathrm{Re}(\lambda_{k}) \geq \overline{\mathrm{Re}(\lambda_{k})} = 0$.
\end{proof}
% The proof of this theorem did not depend on the structure or population.  
\begin{corollary}
    No $\A_{n} \in \ZH{0_n}{n}{P}$ is type 1 stable.
\end{corollary}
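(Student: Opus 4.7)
The proof plan here is essentially a one-line invocation of the preceding theorem, since the notation $\ZH{0_n}{n}{P}$ (the zero-diagonal upper Hessenberg family) by definition has every main diagonal entry equal to $0$. The idea is simply to observe that the trace of any matrix in this set is a sum of zeros.

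Concretely, I would write: let $\A_n \in \ZH{0_n}{n}{P}$. By the definition of $\ZH{\theta_k}{n}{P}$ (requiring $0 \in P$ and fixing the diagonal entries to $0$), we have $(\A_n)_{i,i} = 0$ for each $1 \le i \le n$, whence
\begin{equation*}
    \mathrm{trace}(\A_n) = \sum_{i=1}^{n} (\A_n)_{i,i} = 0.
\end{equation*}
Theorem~\ref{thm:not_stable} then applies directly and yields that $\A_n$ is not type 1 stable.

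There is no real obstacle; the only subtlety worth flagging is that the statement $\ZH{0_n}{n}{P}$ in the corollary should be read as the zero-diagonal upper Hessenberg Bohemian family (consistent with the definition of $\ZH{\theta_k}{n}{P}$ given earlier), so the diagonal being identically zero is a structural hypothesis rather than something that needs to be derived. Since the conclusion rests entirely on the vanishing of the trace and on Theorem~\ref{thm:not_stable}, the subdiagonal angles $\theta_k$ and the specific population $P$ play no role whatsoever in the argument, which is why the corollary holds uniformly across all such families.
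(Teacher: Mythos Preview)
Your proposal is correct and matches the paper's intent: the corollary is stated without proof immediately after Theorem~\ref{thm:not_stable}, and the implicit argument is exactly the one you give---zero diagonal forces zero trace, so the theorem applies directly.
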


\subsection{Type II Stable matrices}
A \textsl{Type II Stable Matrix} $\A$ has all its eigenvalues inside the unit circle.  This class of matrices arises naturally on studying the simple linear recurrence relation $y_{n+1} = \A y_n$.
Fairly obviously, all solutions of this difference equation will ultimately decay to $0$ as $n \to \infty$ if and only if all eigenvalues of $\A$ are inside the unit circle (again, pseudospectra can play a role in the transient behaviour, sometimes significantly).

\begin{theorem}
If $\A \in \mathbb{Z}^{n \times n}$, then it is Type II stable if and only if it is nilpotent.
\end{theorem}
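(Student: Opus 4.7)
The plan is to prove both directions, with the nontrivial content lying entirely in the ``only if'' direction, which is essentially Kronecker's theorem applied to the characteristic polynomial.

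First, the easy direction: if $\A$ is nilpotent, then $\A^k = 0$ for some $k$, so every eigenvalue $\lambda$ of $\A$ satisfies $\lambda^k = 0$, hence $\lambda = 0$, and $|\lambda| = 0 < 1$, so $\A$ is Type II stable.

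For the converse, I would exploit the integrality of the characteristic polynomial. Since $\A \in \mathbb{Z}^{n \times n}$, its characteristic polynomial $p(\lambda) = \det(\lambda \I - \A)$ is monic with integer coefficients. Assume $\A$ is Type II stable; I want to show every eigenvalue is $0$. Suppose for contradiction that $\lambda_0$ is a nonzero eigenvalue. Let $\mu(\lambda) \in \mathbb{Q}[\lambda]$ be its minimal polynomial over $\mathbb{Q}$. Since $\mu$ divides $p$ in $\mathbb{Q}[\lambda]$ and $p$ is monic with integer coefficients, Gauss's lemma gives $\mu \in \mathbb{Z}[\lambda]$, say $\mu(\lambda) = \lambda^d + c_{d-1}\lambda^{d-1} + \cdots + c_0$. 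The roots of $\mu$ are the Galois conjugates of $\lambda_0$; being roots of $p$, they are eigenvalues of $\A$ and therefore all lie strictly inside the unit circle. Consequently
\begin{equation*}
    |c_0| = \prod_{\text{roots } \alpha \text{ of } \mu} |\alpha| < 1,
\end{equation*}
and since $c_0 \in \mathbb{Z}$ this forces $c_0 = 0$. But $\mu$ is irreducible, so $c_0 = 0$ is only possible when $\mu(\lambda) = \lambda$, giving $\lambda_0 = 0$, a contradiction.

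Therefore every eigenvalue is $0$, so $p(\lambda) = \lambda^n$, and Cayley--Hamilton yields $\A^n = 0$, i.e.\ $\A$ is nilpotent.

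The only conceptually delicate step is the appeal to Gauss's lemma to land the minimal polynomial in $\mathbb{Z}[\lambda]$; everything else is bookkeeping. An alternative route would be to cite Kronecker's theorem directly (a monic integer polynomial whose roots all lie in the closed unit disk has every root equal to $0$ or a root of unity, and roots of unity are excluded by strict inequality), but the minimal-polynomial argument above is self-contained and avoids invoking Kronecker as a black box.
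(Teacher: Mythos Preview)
Your proof is correct, but you have taken a more sophisticated route than the paper. Both arguments hinge on the same core idea---a nonzero integer has absolute value at least $1$, so a product of eigenvalues that equals a nonzero integer forces some eigenvalue out of the open unit disk---but they locate that integer differently.

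You pass to the minimal polynomial of a single nonzero eigenvalue and invoke Gauss's lemma to get integer coefficients, then read off the constant term. The paper stays with the characteristic polynomial itself, which already has integer coefficients because $\A\in\mathbb{Z}^{n\times n}$: if $\det(\A)\ne 0$ it is a nonzero integer and hence of magnitude at least $1$; if $\det(\A)=0$ but not all eigenvalues vanish, factor out the maximal power $z^m$ from $\det(z\I-\A)$ and the remaining constant coefficient (the original coefficient of $z^m$) is again a nonzero integer equal to the product of the nonzero eigenvalues. Either way some eigenvalue has modulus at least $1$.

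The paper's argument is shorter and avoids Gauss's lemma, Galois conjugates, and irreducibility entirely; your argument, while heavier, has the virtue of isolating exactly the algebraic-number-theoretic content (it is essentially the key step in Kronecker's theorem) and would generalize cleanly to, say, algebraic-integer matrices. For the statement at hand, though, the detour through $\mu$ is unnecessary: the characteristic polynomial already lives in $\mathbb{Z}[\lambda]$, so you can read off the needed integer directly.
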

\begin{proof}
Suppose to the contrary that some eigenvalues are not zero.

The determinant of $\A$ must necessarily be an integer.  If the integer is not zero, it is at least $1$ in magnitude.  The product of the eigenvalues is thus at least $1$ in magnitude; hence there must be at least one eigenvalue that is at least $1$ in magnitude.

If the matrix $\A$ has zero determinant but not all eigenvalues zero, then after factoring out $z^m$ for the multiplicity of the zero eigenvalue, the product of the other eigenvalues becomes the constant coefficient (what was the coefficient of $z^m$ in the original).  This coefficient again must be an integer, and again at least one eigenvalue must be at least $1$ in magnitude.

This proves the theorem, by contradiction.
\end{proof}

\begin{corollary}
    If $\A$ is Bohemian with integer population $P$, then it is Type II stable if and only if $\A$ is nilpotent.
\end{corollary}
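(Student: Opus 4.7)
The plan is to observe that the corollary is essentially an immediate specialization of the preceding theorem. A Bohemian matrix $\A$ with integer population $P \subset \mathbb{Z}$ is, by definition, a matrix whose entries all lie in the finite set $P$, and in particular $\A \in \mathbb{Z}^{n \times n}$. So the hypothesis of the theorem is satisfied, and we can invoke it directly.

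More concretely, I would write two short sentences. First, note that since $P \subset \mathbb{Z}$, we have $\A \in \mathbb{Z}^{n \times n}$, so $\A$ falls within the scope of the preceding theorem. Second, apply the theorem: $\A$ is Type II stable if and only if $\A$ is nilpotent. No additional argument is needed, because neither the forward nor the backward direction uses any property beyond integrality of the entries.

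There is no real obstacle here; the main content was already discharged in the theorem itself (the determinant argument plus the factorization by the multiplicity of the zero eigenvalue). The only thing to take care of in the writeup is to state clearly that ``Bohemian with integer population'' is strictly stronger than ``integer matrix,'' so there is nothing further to check. If one wanted to pad the proof, one could add a remark noting that the finiteness of $P$ is irrelevant to the statement; the corollary would in fact hold for any $\A \in \mathbb{Z}^{n \times n}$, and Bohemian-ness is only being invoked as a convenient label. But as a corollary statement, a one-line proof suffices.
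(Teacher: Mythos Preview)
Your proposal is correct and matches the paper's approach exactly: the paper gives no separate proof of the corollary, treating it as immediate from the preceding theorem, and even follows it with a remark observing that the Bohemian hypothesis was not used---only integrality of the entries---which is precisely the observation you make.
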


\begin{remark}
We did not, in fact, use that the matrix came from a Bohemian family; only that its entries were integers.
\end{remark}

% ============================================================================ %
% Upper Hessenberg Toeplitz Matrices                                           %
% ============================================================================ %
\section{Upper Hessenberg Toeplitz Matrices}
\label{sec:UHTM}
Proposition~\ref{prop:maxheight} gives  matrices in \HH{0,\pi}{n}{\{-1, 0, {+1}\}} with maximal characteristic height. We noticed that they are Toeplitz matrices. This motivates our interest in upper Hessenberg Toeplitz matrices.

Consider upper Hessenberg matrices with a Toeplitz structure of the form
\begin{equation}
    \M_n =
    \begin{bmatrix}
        t_1   & t_2    & t_3     & \cdots & t_n\\
        s   & t_1    & t_2     & \cdots & t_{n-1}\\
        0     & s    & t_1     & \cdots & t_{n-2}\\
        \vdots & \ddots & \ddots & \ddots & \vdots\\
        0      & \cdots & 0       & s    & t_1
    \end{bmatrix}
\end{equation}
with $s = e^{i\theta_k}$.  Again we require the matrix to be irreducible, that is, subdiagonal entries cannot be zero.

\begin{definition}
    The set of all $n \times n$ upper Hessenberg Toeplitz Bohemians with upper triangle population $P$ and subdiagonal population from a discrete set of roots of unity, say $s\in \{e^{i\theta_{k}}\}$ where $\{\theta_{k}\}$ is some finite set of angles, is called $\MM{\theta_k}{n}{P}$.
\end{definition}

We will restrict our analysis in this section to those matrices with population $\{-1, 0, {+1}\}$ and subdiagonals fixed at $1$. We will denote this set by 
$$
\MMZ = \MM{0}{n}{\{-1, 0, {+1}\}}\>.
$$
We denote the characteristic polynomial $P_n(z) \equiv \det (z \I - \M_n)$ for $\M_n \in \MMZ$.

% ---------------------------------------------------------------------------- %
% ---------------------------------------------------------------------------- %
\begin{corollary}
    \label{prop:charpolyrec1_UHT}
    The characteristic polynomial recurrence from Theorem~\ref{thm:charPolyRec1} can be written for upper Hessenberg Toeplitz matrices in \MM{0}{n}{P} as
    \begin{equation}
        \label{eqn:thm1_UHT}
        P_n(z) = zP_{n-1}(z) - \sum_{k=1}^n t_k P_{n-k}(z)
    \end{equation}
        with the convention that $P_0(z) = 1$ ($\M_0 = [\,]$, the empty matrix).
\end{corollary}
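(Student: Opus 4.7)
The plan is to derive this as an immediate specialization of Theorem~\ref{thm:charPolyRec1} by substituting the two structural assumptions of the corollary: the subdiagonal is uniformly $1$, and the upper triangle is Toeplitz. Since the result is asserted as a corollary, I do not expect any new ideas to be needed; the work is essentially index bookkeeping.

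First I would recall the general recurrence
\begin{equation*}
Q_n(z) = zQ_{n-1}(z) - \sum_{k=1}^{n} \left(\prod_{j=n-k+1}^{n-1}s_j\right) h_{n-k+1,n}\, Q_{n-k}(z),
\end{equation*}
which holds for any $\H_n\in\HH{\theta_k}{n}{P}$. For $\M_n\in\MM{0}{n}{P}$ we have $\theta_k = 0$, so every $s_j = e^{i\cdot 0} = 1$, and therefore each empty-or-nonempty product $\prod_{j=n-k+1}^{n-1}s_j$ collapses to $1$. This eliminates the product factor entirely from the recurrence.

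Next I would use the Toeplitz hypothesis. Writing the top row of $\M_n$ as $t_1,t_2,\ldots,t_n$, the Toeplitz structure gives $h_{i,j} = t_{j-i+1}$ for $j\ge i$. Setting $i = n-k+1$ and $j = n$ yields $h_{n-k+1,n} = t_{n-(n-k+1)+1} = t_k$. Substituting this and the previous simplification into the general recurrence produces
\begin{equation*}
P_n(z) = zP_{n-1}(z) - \sum_{k=1}^{n} t_k\, P_{n-k}(z),
\end{equation*}
as claimed, with $P_0(z) = 1$ inherited from the base case $Q_0(z) = 1$ of Theorem~\ref{thm:charPolyRec1}.

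The only step that requires any care is the index calculation $h_{n-k+1,n} = t_k$; everything else is a direct substitution. So there is no real obstacle, and the argument fits in a few lines.
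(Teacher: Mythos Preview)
Your proposal is correct and follows exactly the same route as the paper: specialize Theorem~\ref{thm:charPolyRec1} by noting that all $s_j=1$ and that the Toeplitz structure gives $h_{n-k+1,n}=t_k$. If anything, you are slightly more explicit than the paper's proof, which compresses both substitutions into a single sentence.
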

\begin{proof}
    For a matrix $\M_n \in \MM{0}{n}{P}$, the entries at the $i$th row and the $i+k-1$-th column for $1 \le i \le n-k+1$ (i.e. the $k-1$-th diagonal) are all equal to $t_k$. In equation~\eqref{eqn:thm1}, we can replace $h_{n-k+1,n}$ with $t_k$ ($i=n-k+1$) recovering equation~\eqref{eqn:thm1_UHT}.
\end{proof}

% ---------------------------------------------------------------------------- %
% ---------------------------------------------------------------------------- %
\begin{corollary}
    \label{prop:charpolyrec2_UHT}
    The characteristic polynomial recurrence from Theorem~\ref{thm:charPolyRec2}  can be written for upper Hessenberg Toeplitz matrices in \MM{0}{n}{P} as
    \begin{subequations}
    \label{eqn:thm2_cor_UHT}
    \begin{align}
        p_{n,n} &= 1,\\
        p_{n,j} &= p_{n-1,j-1} - \sum_{k=1}^{n-j} t_k p_{n-k,j} \quad\text{for}\quad 1 \le j \le n-1, \label{eqn:them2_cor_UHT_b}\\
        p_{n,0} &= -\sum_{k=1}^n t_k p_{n-k,0}, \,\text{and} \label{eqn:them2_cor_UHT_c}\\
        p_{0,0} &= 1 \label{eqn:them2_cor_UHT_d}\>.
    \end{align}
    \end{subequations}
\end{corollary}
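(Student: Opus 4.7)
The plan is to reduce this statement to a direct substitution into Theorem~\ref{thm:charPolyRec2}, exactly paralleling the derivation of Corollary~\ref{prop:charpolyrec1_UHT}. Two simplifications of the general recurrence are available once we specialize to $\MMZ$. First, the Toeplitz structure forces $h_{n-k+1,n} = t_k$, because the entry in row $n-k+1$ and column $n$ lies on the $(k-1)$st superdiagonal, which by Toeplitz-ness is constant and equal to $t_k$. Second, since we have fixed $\theta_k = 0$, every $s_j = e^{i\cdot 0} = 1$, so each product $\prod_{j=n-k+1}^{n-1} s_j$ collapses to $1$ (including the empty product when $k=1$).

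With these two observations, the general formulas in~\eqref{eqn:thm2_eqn} specialize immediately: the term $\left(\prod_{j=n-k+1}^{n-1}s_j\right) h_{n-k+1,n}$ becomes simply $t_k$ everywhere it appears. Substituting into the $q_{n,j}$ equation yields~\eqref{eqn:them2_cor_UHT_b}, substituting into the $q_{n,0}$ equation yields~\eqref{eqn:them2_cor_UHT_c}, and the base cases $q_{n,n}=1$ and $q_{0,0}=1$ translate verbatim into~\eqref{eqn:them2_cor_UHT_d} and the $p_{n,n}=1$ line. Renaming $q_{i,j}$ as $p_{i,j}$ (to reflect that we are now speaking of the characteristic polynomial $P_n(z)$ of $\M_n$ rather than the general $Q_n(z)$ of $\H_n$) completes the identification.

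There is really no obstacle here; the statement is a specialization, not a new computation. The only point worth flagging in the write-up is the clean handling of the empty product convention at $k=1$, which ensures the $k=1$ summand reduces to $t_1 p_{n-1, j}$ (resp.\ $t_1 p_{n-1,0}$) rather than being dropped. Given its brevity, the proof can be stated in a single short paragraph, mirroring the proof of Corollary~\ref{prop:charpolyrec1_UHT} verbatim with Theorem~\ref{thm:charPolyRec2} in place of Theorem~\ref{thm:charPolyRec1}.
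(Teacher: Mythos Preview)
Your proposal is correct and follows essentially the same approach as the paper: the paper's proof is a one-liner stating that performing the same replacement as in Corollary~\ref{prop:charpolyrec1_UHT} (namely $h_{n-k+1,n} \mapsto t_k$ and $s_j = 1$) recovers the claimed recurrence, which is exactly what you spell out. One small slip: you say ``specialize to $\MMZ$,'' but the corollary is stated for general $\MM{0}{n}{P}$, not just $P=\{-1,0,1\}$; your argument works for any $P$, so just adjust the wording.
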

\begin{proof}
    Performing the same replacement as above (a notational change), we recover equation~\eqref{eqn:thm2_cor_UHT}.
\end{proof}

% ---------------------------------------------------------------------------- %
% Proposition: p_{n,i} is a function of t_j for j <= n-i                       %
% ---------------------------------------------------------------------------- %
\begin{proposition}
    \label{prop:funof}
    $p_{n,i}$ is independent of $t_j$ for $j > n-i$.
\end{proposition}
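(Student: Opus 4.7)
The plan is to proceed by strong induction on $n$, reading off the claim directly from the recurrence in Corollary~\ref{prop:charpolyrec2_UHT}. The base case $n=0$ is trivial since $p_{0,0}=1$ involves no $t_j$ at all; and the boundary case $i=n$ is equally immediate because $p_{n,n}=1$.

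For the inductive step, fix $n \ge 1$ and assume that $p_{m,\ell}$ is independent of $t_j$ for every $j > m-\ell$ whenever $m < n$. In the main case $1 \le i \le n-1$, equation~\eqref{eqn:them2_cor_UHT_b} gives
\begin{equation*}
    p_{n,i} = p_{n-1,i-1} - \sum_{k=1}^{n-i} t_k\, p_{n-k,i}.
\end{equation*}
The key observation is that the upper limit of the sum is $n-i$, so the only $t_k$ appearing explicitly have $k \le n-i$; in particular no $t_j$ with $j > n-i$ occurs as a multiplier. By induction, $p_{n-1,i-1}$ is independent of $t_j$ for $j > (n-1)-(i-1) = n-i$, and for each $k$ in the range $1 \le k \le n-i$, $p_{n-k,i}$ is independent of $t_j$ for $j > (n-k)-i$, which is a weaker requirement than $j > n-i$ since $k \ge 1$. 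Hence every term on the right is independent of $t_j$ for $j > n-i$, and so is $p_{n,i}$.

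The remaining case $i=0$ is handled similarly using~\eqref{eqn:them2_cor_UHT_c}: we want independence from $t_j$ for $j > n$, and since the sum only runs over $k=1,\dots,n$, no such $t_j$ appears explicitly, while each factor $p_{n-k,0}$ is independent of $t_j$ for $j > n-k$ by the inductive hypothesis.

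There is no real obstacle here; the statement is essentially a careful reading of the summation bounds in Corollary~\ref{prop:charpolyrec2_UHT}. The only point that must not be glossed over is precisely that the sum in~\eqref{eqn:them2_cor_UHT_b} truncates at $k=n-i$ rather than $k=n$; without this truncation the inductive bookkeeping would fail, because terms with $k > n-i$ would introduce explicit dependence on $t_{n-i+1},\dots,t_n$.
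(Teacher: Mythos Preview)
Your proof is correct and uses essentially the same idea as the paper: both arguments read the claim directly off the recurrence of Corollary~\ref{prop:charpolyrec2_UHT}, the key point being that the sum in~\eqref{eqn:them2_cor_UHT_b} truncates at $k=n-i$. The only organizational difference is that the paper inducts on the coefficient index~$i$ (isolating $p_{n-1,\ell-1}$ from the recurrence), whereas you run a strong induction on~$n$; your arrangement is the cleaner of the two.
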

\begin{proof}
    First, assume $p_{n,\ell}$ is a function of $t_1, \ldots, t_{n-\ell}$ for $\ell = i$ and all $n$. By Proposition~\ref{prop:charpolyrec2_UHT}
    \begin{equation}
        p_{n,\ell} = p_{n-1,\ell-1} - \sum_{k=1}^{n-\ell} t_k p_{n-k,\ell} \>.
    \end{equation}
    Isolating the $p_{n-1,\ell-1}$ term, we have
    \begin{equation}
        p_{n-1,\ell-1} = p_{n,\ell} + \sum_{k=1}^{n-\ell} t_k p_{n-k,\ell}
    \end{equation}
    The first term, $p_{n,\ell}$, is a function of $t_1, \ldots, t_{n-\ell}$. Each term $t_k p_{n-k,\ell}$ in the sum is a function of $t_1, \ldots, t_{n-k-\ell}, t_k$. Taking $k=n-\ell$, we have the sum is a function of $t_1, \ldots, t_{n-\ell}$. Hence, $p_{n-1,\ell-1}$ is a function of $t_1, \ldots, t_{n-1-(\ell-1)} = t_{n-\ell}$.

    When $i = 0$, by Proposition~\ref{prop:charpolyrec2_UHT} we have
    \begin{equation}
        p_{n,0} = -\sum_{k=1}^n t_k p_{n-k,0}
    \end{equation}
    which is a function of $t_1, \ldots, t_n$.
\end{proof}

\begin{theorem}
    The set of distinct characteristic polynomials for all matrices $\M_n \in \MMZ$ has cardinality $\left(\#P\right)^n$, which is the same as the cardinality of $\MMZ$. That is, each matrix in $\MMZ$ has a unique characteristic polynomial.
\end{theorem}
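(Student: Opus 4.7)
The plan is to parametrize $\MMZ$ by the tuple of first-row entries $(t_1, \ldots, t_n) \in P^n$. A matrix in $\MMZ$ is determined by its first row (the subdiagonal is fixed at $1$ and the Toeplitz structure fills in the rest), so $|\MMZ| = (\#P)^n$ is immediate. It therefore suffices to show that the assignment $(t_1, \ldots, t_n) \mapsto P_n(z)$ is injective.

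The key lemma is a quantitative refinement of Proposition~\ref{prop:funof}: for every $1 \le k \le n$,
\[
p_{n, n-k} \;=\; -(n - k + 1)\, t_k \;+\; g_{n,k}(t_1, \ldots, t_{k-1})
\]
for some polynomial $g_{n,k}$ that does not involve $t_k$. Holding $k$ fixed, this will be proved by induction on $n \ge k$. For the base case $n = k$, identity~\eqref{eqn:them2_cor_UHT_c} gives $p_{k,0} = -\sum_{\ell=1}^{k} t_\ell\, p_{k-\ell,0}$; the $\ell = k$ summand equals $-t_k$ (using $p_{0,0} = 1$ from \eqref{eqn:them2_cor_UHT_d}) and every other summand is independent of $t_k$ by Proposition~\ref{prop:funof}, so the $t_k$-slope is $-1 = -(k - k + 1)$. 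For the induction step $n > k$, I use~\eqref{eqn:them2_cor_UHT_b} and isolate the $\ell = k$ term:
\[
p_{n, n-k} \;=\; p_{n-1, n-k-1} \;-\; \sum_{\ell=1}^{k-1} t_\ell\, p_{n-\ell, n-k} \;-\; t_k .
\]
Each $p_{n-\ell, n-k}$ with $\ell < k$ is free of $t_k$ by Proposition~\ref{prop:funof}, while $p_{n-1, n-k-1} = p_{m, m-k}$ at $m = n-1$ has $t_k$-slope $-(n-k)$ by the inductive hypothesis; summing yields $-(n-k) - 1 = -(n-k+1)$, closing the induction.

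With this lemma in hand, injectivity is a triangular forward solve. The coefficient $p_{n, n-1}$ yields $t_1 = -p_{n, n-1}/n$; once $t_1, \ldots, t_{k-1}$ are fixed, $g_{n,k}(t_1, \ldots, t_{k-1})$ is a determined number, and $p_{n, n-k}$ then pins down $t_k$ via the nonzero slope $-(n-k+1)$ for each $1 \le k \le n$. Hence distinct tuples in $P^n$ yield distinct characteristic polynomials, and the number of such polynomials equals $|\MMZ| = (\#P)^n$.

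The main obstacle is phrasing the induction so that the slope statement is uniform in $n$: when $p_{n-1, n-k-1}$ is recognized as $p_{m, m-k}$ at $m = n - 1$, the inductive hypothesis must already be in force at the smaller matrix size. Once the induction is organized with $k$ fixed and $n$ varying upward from the base $n = k$, the argument is purely formal and uses nothing about the population $P$ beyond its lying in a field of characteristic zero.
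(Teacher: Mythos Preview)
Your proof is correct. The key lemma --- that $p_{n,n-k}$ is affine in $t_k$ with slope $-(n-k+1)$ and remainder depending only on $t_1,\ldots,t_{k-1}$ --- is established cleanly by the induction you set up (fix $k$, run $n$ upward from $n=k$), and the triangular forward solve then recovers each $t_k$ in turn from the coefficients of the single polynomial $P_n$.

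The paper takes a different route. It works with the whole polynomials rather than individual coefficients, comparing $P_n = zP_{n-1} - \sum_k t_k P_{n-k}$ against $R_n = zR_{n-1} - \sum_k a_k R_{n-k}$ for a second parameter tuple $(a_k)$; the underlying idea is that $P_0,\ldots,P_{n-1}$ are monic of distinct degrees, hence linearly independent, so the expansion coefficients are determined. As written, however, the paper's inductive step begins by \emph{assuming} $P_\ell = R_\ell$ for $\ell < n$, which is not the inductive hypothesis (injectivity at smaller sizes) and is not derived from $P_n = R_n$; at best the argument is elliptical. Your coefficient-level argument avoids this entirely, since you never need the lower-order characteristic polynomials to coincide --- each $t_k$ is read off directly from $p_{n,n-k}$. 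Your version also makes the characteristic-zero requirement explicit (the slopes $-(n-k+1)$ must be invertible), and yields the bonus quantitative information that $\partial p_{n,n-k}/\partial t_k = -(n-k+1)$.
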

\begin{proof}
    Let
    \begin{equation}
    \A_n =
    \begin{bmatrix}
        a_1   & a_2    & a_3     & \cdots & a_n\\
        1   & a_1    & a_2     & \cdots & a_{n-1}\\
        0     & 1    & a_1     & \cdots & a_{n-2}\\
        \vdots & \ddots & \ddots & \ddots & \vdots\\
        0      & \cdots & 0       & 1    & a_1
    \end{bmatrix}
    \end{equation}
    with $a_k \in P$ for $1 \le k \le n$.
    Let $R_n(z; a_1, \ldots, a_n)$ be the characteristic polynomial of $\A_n$.
    Assume $P_{\ell} = R_{\ell}$ for $\ell < n$.
    By Proposition~\ref{prop:charpolyrec1_UHT}, for $\A_n$ and $\M_n$ to have the same characteristic polynomial we find
    \begin{equation}
    zP_{n-1} - \sum_{k=1}^n t_k P_{n-k} = zR_{n-1} - \sum_{k=1}^n a_k R_{n-k} \>.
    \end{equation}
    Since $P_{\ell} = R_{\ell}$ for all $\ell < n$, and the $\sum_{k=1}^n t_k P_{n-k}$ and $\sum_{k=1}^n t_k R_{n-k}$ terms are polynomials of degree $n-1$ in $z$, we find $P_n = R_n$ only when $t_k = a_k$ for all $1 \le k \le n$ (the $zP_{n-1}$ and $zR_{n-1}$ terms are the only terms of degree $n$ in $z$). Hence, for each combination of $t_k$, no other upper Hessenberg Toeplitz matrix with $t_k \in Pt$ and subdiagonal $1$ has the same characteristic polynomial.
\end{proof}

% \begin{proposition}
%     The set of all characteristic polynomials of all matrices in $\mathcal{M}_n$ has cardinality $3^n$.
% \end{proposition}
% \begin{proof}
%     By Corollary~\ref{cor:onlysim}, each matrix $\M_n \in \mathcal{M}_n$ is similar to only one other matrix $\M_n^{+} \in \mathcal{M}_n$. Since $\M_n \ne \M_n^{+}$, and $\M_n$ shares a characteristic polynomial with $\M_n^{+}$; the cardinality of the set of characteristic polynomials must be half the cardinality of $\mathcal{M}_n$.
% \end{proof}

\begin{proposition}
    \label{prop:maxheight_UHT}
    The characteristic height of $\M_n \in \MMZ$ is maximal when $t_k = -1$ for $1 \le k \le n$.
\end{proposition}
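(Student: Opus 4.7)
The plan is to adapt the argument of Proposition~\ref{prop:maxheight} to the Toeplitz setting, where the upper-triangle entries must repeat along diagonals and hence the free parameters reduce to just $t_1,\ldots,t_n$. Write $p^*_{n,j}$ for the coefficients produced by the recurrences of Corollary~\ref{prop:charpolyrec2_UHT} under the special choice $t_k = -1$ for all $k$. The proposition will follow at once from the dominance claim $|p_{n,j}| \le p^*_{n,j}$ for every $0 \le j \le n$ and every admissible $(t_1,\ldots,t_n) \in \{-1,0,+1\}^n$, since equality is attained when $t_k \equiv -1$, and therefore the characteristic height $\max_j |p_{n,j}|$ is maximized by this choice.

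The argument proceeds by two short inductions on $n$. First, setting $t_k = -1$ in Corollary~\ref{prop:charpolyrec2_UHT} converts every subtraction into an addition,
\begin{equation*}
p^*_{n,j} = p^*_{n-1,j-1} + \sum_{k=1}^{n-j} p^*_{n-k,j}, \qquad p^*_{n,0} = \sum_{k=1}^{n} p^*_{n-k,0},
\end{equation*}
so non-negativity of $p^*_{n,j}$ propagates from the base values $p^*_{0,0} = 1$ and $p^*_{n,n} = 1$. Second, assuming the dominance bound at all smaller indices and applying the triangle inequality to the general recurrence of Corollary~\ref{prop:charpolyrec2_UHT} gives
\begin{equation*}
|p_{n,j}| \le |p_{n-1,j-1}| + \sum_{k=1}^{n-j} |t_k|\,|p_{n-k,j}| \le p^*_{n-1,j-1} + \sum_{k=1}^{n-j} p^*_{n-k,j} = p^*_{n,j},
\end{equation*}
since $|t_k| \le 1$ on the population. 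The case $j = 0$ is handled by the same computation, and $p_{n,n} = 1 = p^*_{n,n}$ is automatic.

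No genuine obstacle arises; the bookkeeping is routine, and the only boundary concern is the empty-sum convention already in force. In fact one could instead present the result as a direct corollary of Proposition~\ref{prop:maxheight} by observing that in the Toeplitz setting $h_{i,i+k-1} = t_k$ and each $s_j = 1$, so the condition $\bigl(\prod_j s_j\bigr) h_{i,i+k-1} = -1$ collapses to $t_k = -1$ for all $k$. I would prefer the self-contained inductive proof above because it stays within the Toeplitz recurrence of Corollary~\ref{prop:charpolyrec2_UHT} and makes the non-negativity of the extremal coefficients $p^*_{n,j}$ explicit, which is useful for subsequent estimates.
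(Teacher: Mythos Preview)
Your proposal is correct. The paper's own proof takes precisely the shorter route you mention at the end: it simply invokes Proposition~\ref{prop:maxheight} and notes that in the Toeplitz case with $s_j = 1$ the condition $\bigl(\prod_j s_j\bigr) h_{i,i+k-1} = -1$ reduces to $t_k = -1$ for all $k$. Your primary argument---the self-contained double induction on $n$ establishing positivity of the $p^*_{n,j}$ and the dominance $|p_{n,j}| \le p^*_{n,j}$ via the triangle inequality on the recurrence of Corollary~\ref{prop:charpolyrec2_UHT}---is a genuinely different presentation. It re-derives within the Toeplitz class what Proposition~\ref{prop:maxheight} already established for the full Hessenberg class, so in a sense it proves slightly less (maximality only among Toeplitz matrices rather than among all of $\HH{0}{n}{\{-1,0,+1\}}$); on the other hand it is independent of that earlier result and, as you note, it makes the positivity of the extremal coefficients explicit, which the paper indeed exploits again in Section~\ref{sec:MHCP}.
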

\begin{proof}
    Following from Proposition~\ref{prop:maxheight}, the entries in the $i$th row and $i+k-1$-th column for $1 \le i \le n-k+1$ correspond to $t_k$, after substituting $s=1$ we find
    $t_k = -1$ gives the maximal characteristic height.
\end{proof}
\begin{remark}
We will see that other matrices also have maximal characteristic height.  Therefore the matrix of Proposition~\ref{prop:maxheight_UHT} is not the only such, but it is an interesting one.
\end{remark}

\begin{remark}
We do not use in any essential way that the population $P$ is just $\{-1, 0, 1\}$. The theorem is still true if $P$ is invariant with respect to all $s_k$ and contains elements of magnitude at most 1. And contains the entries $\pm 1$ so that the maximum height is in fact achieved.
\end{remark}

\begin{proposition}
    \label{prop:max_height_set}
    Let $F \subset \mathbb{R}$ be a closed and bounded set with $a = \min{F}$, $b = \max{F}$ and $\#F \ge 2$. Let $\M_n \in \MM{0}{n}{F}$. If $|a| \ge |b|$, $\M_n$ is of maximal characteristic height when $t_k = a$ for all $1 \le k \le n$. If $|b| \ge |a|$, $\M_n$ is of maximal characteristic height for $t_k = a$ for $k$ even, and $t_k = b$ for $k$ odd.
\end{proposition}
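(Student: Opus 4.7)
My plan is to mirror Proposition~\ref{prop:maxheight_UHT}, which in turn specialises Proposition~\ref{prop:maxheight}, splitting on the comparison of $|a|$ and $|b|$ and exploiting sign alignment in the recurrence of Corollary~\ref{prop:charpolyrec2_UHT}. Throughout, write $T^{\mathrm{alt}}$ for the claimed extremal tuple in whichever case we are in.

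In Case~1, $|a|\ge|b|$: the assumption $\#F\ge 2$ forces $a\le 0$ (if $a>0$ then $|a|=a\le b=|b|$ with equality only at $a=b$), so $-a=|a|=\max_{t\in F}|t|$. Putting $t_k=a$ for every $k$ turns the recurrence into
\begin{equation*}
    p_{n,j}(T^{\mathrm{alt}})=p_{n-1,j-1}(T^{\mathrm{alt}})+|a|\sum_{k=1}^{n-j}p_{n-k,j}(T^{\mathrm{alt}}),
\end{equation*}
a positive combination, so every $p_{n,j}(T^{\mathrm{alt}})\ge 0$ by induction on $n$. For any other $t'\in F^n$, the triangle inequality applied to the recurrence, combined with $|t'_k|\le|a|$ and the inductive bound $|p_{m,\ell}(t')|\le p_{m,\ell}(T^{\mathrm{alt}})$, yields $|p_{n,j}(t')|\le p_{n,j}(T^{\mathrm{alt}})$ coefficient by coefficient, exactly as in Proposition~\ref{prop:maxheight_UHT}.

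In Case~2, $|b|\ge|a|$: symmetrically $b\ge 0$ and $b=\max_{t\in F}|t|$. I would introduce the sign-adjusted coefficients $q_{n,j}(t):=(-1)^{n-j}p_{n,j}(t)$, which transform the recurrence into
\begin{equation*}
    q_{n,j}(t)=q_{n-1,j-1}(t)+\sum_{k=1}^{n-j}(-1)^{k+1}t_k\,q_{n-k,j}(t).
\end{equation*}
At $T^{\mathrm{alt}}$ the multiplier $(-1)^{k+1}T^{\mathrm{alt}}_k$ equals $b$ for odd $k$ and $-a$ for even $k$. When $a\le 0$ both multipliers are non-negative, so the positivity-and-induction argument of Case~1 carries over verbatim and gives $|p_{n,j}(t')|=|q_{n,j}(t')|\le q_{n,j}(T^{\mathrm{alt}})=|p_{n,j}(T^{\mathrm{alt}})|$ for every coefficient. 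A convenient side product of this induction is the parity identity $\sum_{k\text{ odd}}e_k\equiv n-j\pmod 2$ for every nonzero monomial $\prod_k t_k^{e_k}$ of $p_{n,j}$, proved along with the sign statement for the coefficients that already appears in Proposition~\ref{prop:maxheight}; this identity ensures that all surviving monomials at $T^{\mathrm{alt}}$ carry the same sign $(-1)^{n-j}$ and therefore add in absolute value.

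The main obstacle is the residual sub-case $a>0$ of Case~2, where $(-1)^{k+1}T^{\mathrm{alt}}_k=-a<0$ for even $k$ and the term-by-term positivity fails. Here I would switch to the closed form
\begin{equation*}
    p_{n,j}(t)=\sum_{k=1}^{n-j}(-1)^k\binom{j+k}{k}\sum_{(\ell_1,\ldots,\ell_k)}t_{\ell_1}\cdots t_{\ell_k},
\end{equation*}
with the inner sum over compositions of $n-j$ into $k$ positive parts, obtained by extracting $[w^{n-j}]$ from the generating function $G(w)=(1-zw+T(w))^{-1}$ with $T(w)=\sum_{k\ge 1}t_kw^k$. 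Because every composition of $n-j$ has a fixed parity of odd parts (equal to $n-j\bmod 2$), evaluating at $T^{\mathrm{alt}}$ still forces each composition to contribute with a predictable sign, and any deviation from the alternating pattern only introduces cancellations of opposite-sign contributions. A perturbation argument letting $a$ vary along a one-parameter family that crosses $a=0$ then transfers the sharp inequality established for $a\le 0$ into the regime $a>0$, completing the claim that the characteristic height at $T^{\mathrm{alt}}$ dominates that at any other $t'\in F^n$.
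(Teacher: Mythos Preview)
Your argument in Case~1 and in Case~2 with $a\le 0$ is essentially the paper's proof: the substitution $q_{n,j}=(-1)^{n-j}p_{n,j}$ is exactly the paper's device of passing to $-\M_n$, and your termwise positivity induction via the recurrence of Corollary~\ref{prop:charpolyrec2_UHT} matches the paper's splitting into odd and even~$k$. If anything, you are more explicit than the paper in comparing against an arbitrary competitor~$t'$ via the triangle inequality.

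You are also right to isolate the sub-case $a>0$ of Case~2 as the delicate one---the paper's own argument tacitly assumes the $p_{n-k,j}$ appearing in the recurrence are nonnegative when it asserts the odd and even sums are separately maximized, and that positivity can fail when $0<a$ (e.g.\ $q_{2,0}=b^{2}-a<0$ whenever $b^{2}<a<b$). But your proposed repair does not close the gap. The composition formula is correct, yet the claim that ``any deviation from the alternating pattern only introduces cancellations'' is unjustified: with $a,b>0$ every monomial $t_{\ell_1}\cdots t_{\ell_k}$ is positive at $T^{\mathrm{alt}}$, so the sign of each summand is simply $(-1)^{k}$, and compositions of $n-j$ into different numbers of parts genuinely contribute with opposite signs---the fixed parity of the number of odd parts does not force a common sign. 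The final ``perturbation argument letting $a$ vary across $a=0$'' is asserted rather than carried out; no continuity or monotonicity principle is named that would transport a coefficientwise inequality from $a\le 0$ to $a>0$, and without one this sentence is a placeholder rather than a proof.
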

\begin{proof}
    First, consider the case when $|a| \ge |b|$. Since $a < b$ we find $a < 0$. Let $\overline{t}_k = -t_k$. Writing Proposition~\ref{prop:negativeheight} in terms of $\overline{t}_k$ gives
    \begin{subequations}
    \label{eq:polyrec_neg_t}
    \begin{align}
        p_{n,n} &= 1,\\
        p_{n,j} &= p_{n-1,j-1} + \sum_{k=1}^{n-j} \overline{t}_k p_{n-k,j} \quad\text{for}\quad 1 \le j \le n-1,\\
        p_{n,0} &= \sum_{k=1}^n \overline{t}_k p_{n-k,0}, \,\text{and}\\
        p_{0,0} &= 1\>.
    \end{align}
    \end{subequations}
    If all $\overline{t}_k$ are positive then $p_{n,j}$ must be positive for all $n$ and $j$. Hence, the maximal characteristic height is attained when $\overline{t}_k$ is maximal, or equivalently $t_k$ is minimal and negative. Thus $t_k = \min{F} = a$ gives maximal characteristic height.

    Next, consider when $|b| \ge |a|$. Since $a < b$ we find $b > 0$. By Proposition~\ref{prop:negativeheight} we know that the characteristic height of $\M_n$ is equal to the characteristic height of $-\M_n$. Rewriting Proposition~\ref{prop:charpolyrec2_UHT} for $-\M_n$ by substituting $p_{n,j}$ with $(-1)^{n-j} p_{n,j}$ we find the recurrence for the characteristic polynomial of $-\M_n$:
    \begin{subequations}
    \label{eq:polyrec_neg_mat}
    \begin{align}
        p_{n,n} &= 1,\\
        p_{n,j} &= p_{n-1,j-1} + \sum_{k=1}^{n-j} (-1)^{k-1} t_k p_{n-k,j} \quad\text{for}\quad 1 \le j \le n-1,\\
        p_{n,0} &= \sum_{k=1}^n (-1)^{k-1} t_k p_{n-k,0}, \,\text{and}\\
        p_{0,0} &= 1\>.
    \end{align}
    \end{subequations}
    Separating out the even and odd values of $k$ in the sums we can write the recurrence as
    \begin{subequations}
    \label{eq:polyrec2_odd_even}
    \begin{align}
        p_{n,n} &= 1,\\
        p_{n,j} &= p_{n-1,j-1} + \sum_{k \text{ odd}}^{n-j} t_k p_{n-k,j} - \sum_{k \text{ even}}^{n-j} t_k p_{n-k,j} \quad\text{for}\quad 1 \le j \le n-1,\\
        p_{n,0} &= \sum_{k \text{ odd}}^n t_k p_{n-k,0} - \sum_{k \text{ even}}^n t_k p_{n-k,0}, \,\text{and}\\
        p_{0,0} &= 1\>.
    \end{align}
    \end{subequations}
    The odd sums are maximal for $t_k = \max{F} = b$ and the even sums are maximal for $t_k = \min{F} = a$. Hence, the maximal characteristic height is attained for $t_k = b$ when $k$ is odd, and $t_k = a$ when $k$ is even.

    When $|a| = |b|$, equations~\eqref{eq:polyrec_neg_t} and~\eqref{eq:polyrec2_odd_even} are equivalent and the maximal height is attained both when $t_k = b$ for all $k$, and $t_k = b$ for $k$ odd and $t_k = a$ for $k$ even.
\end{proof}

\begin{proposition}
    \label{prop:maxheight_UHT2}
    $\M_n \in \MMZ$ also attains maximal characteristic height when $t_k = (-1)^{k-1}$ for $1 \le k \le n$.
\end{proposition}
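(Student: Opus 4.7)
The plan is to derive this as a direct specialization of Proposition~\ref{prop:max_height_set}. Take $F = \{-1, 0, +1\}$ so that $a = \min F = -1$ and $b = \max F = +1$. Then $|a| = |b| = 1$, so we are in the degenerate case covered by the final paragraph of the proof of Proposition~\ref{prop:max_height_set}, where both the ``$|a| \ge |b|$'' and ``$|b| \ge |a|$'' branches apply. The second branch says the maximal characteristic height is attained when $t_k = b$ for $k$ odd and $t_k = a$ for $k$ even, and in our setting this is exactly $t_k = (-1)^{k-1}$.

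If a self-contained argument is preferred, the key step is to run the substitution $p_{n,j} \mapsto (-1)^{n-j} p_{n,j}$ in the recurrence of Corollary~\ref{prop:charpolyrec2_UHT} to obtain the coefficient recurrence for $-\M_n$, namely~\eqref{eq:polyrec_neg_mat}, and then split the sums into even- and odd-$k$ pieces as in~\eqref{eq:polyrec2_odd_even}. Setting $t_k = (-1)^{k-1}$ makes $(-1)^{k-1} t_k \equiv 1$, so every summand on the right-hand side is a non-negative combination of previously computed $p_{m,\ell}$'s. Starting from $p_{0,0} = 1$, induction on $n$ shows that all the coefficients of the characteristic polynomial of $-\M_n$ are non-negative and each equals $|q|$ for $q$ the corresponding coefficient obtained with the universal bound ``every entry of the recurrence contributes maximally.'' Hence the characteristic height of $-\M_n$ meets the bound from Proposition~\ref{prop:maxheight}, and by Proposition~\ref{prop:negativeheight} so does that of $\M_n$.

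Comparing with Proposition~\ref{prop:maxheight_UHT}, which gives maximal height at $t_k = -1$ for all $k$, confirms consistency: both patterns $t_k = -1$ and $t_k = (-1)^{k-1}$ produce matrices that coincide in magnitude at the top of the height distribution, as the remark following Proposition~\ref{prop:maxheight_UHT} anticipated. There is no genuine obstacle here; the only thing to be careful about is to verify that we are entitled to apply both halves of Proposition~\ref{prop:max_height_set} simultaneously when $|a| = |b|$, which is precisely the content of its closing sentence.
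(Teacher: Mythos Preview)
Your proof is correct and follows exactly the same approach as the paper: apply Proposition~\ref{prop:max_height_set} with $F=\{-1,0,1\}$, $a=-1$, $b=+1$, and use the $|b|\ge|a|$ branch to obtain $t_k=(-1)^{k-1}$. Your additional self-contained argument is a nice bonus but unnecessary here.
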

\begin{proof}
    By Proposition~\ref{prop:max_height_set}, we have $F = \{-1, 0, {+1}\}$ with $a = -1$, and $b = +1$. Thus $\M_n$ is also of maximal characteristic height for $t_k = b = +1$ for odd values of $k$, and $t_k = a = -1$ for even values of $k$.
\end{proof}

% ============================================================================ %
% Maximal Characteristic Height Upper Hessenberg Toeplitz Matrices             %
% ============================================================================ %
\subsection{Matrices with maximal characteristic height}
\label{sec:MCHUHTM}
In this section we restrict our analysis to those matrices in \MMZ of maximal characteristic height. We denote this subset by $\MMM$. By proposition~\ref{prop:maxheight_UHT} this subset contains more than one element. Let $\tau_n$ be the characteristic height of \MMM (the height is the same for all matrices in \MMM) and for each $\M \in \MMM$ let $\mu_{n}(\M)$ be the largest degree of the term in the characteristic polynomial $\M$ which has 
\begin{align}
    \pm \tau_{n} = a_{\mu_{n}} = [\lambda^{\mu_{n}}]\left(\det\left(\lambda\mathbf{I} - \M\right)\right)
% \\
% &= a_0 + a_1\lambda + \cdots + a_{n-1}\lambda^{n-1} + \lambda^{n}
\end{align}
so
\begin{equation}
    \mathrm{charpoly}(\M) = a_0 + a_1\lambda + \cdots \pm\tau_n\lambda^{\mu_{n}} + a_{\mu_{n}+1}\lambda^{\mu_{n} + 1} + \cdots + \lambda^{n}
\end{equation}
and $|a_{\mu_{n}+1}| < \tau_n$. In Proposition~\ref{prop:unique_mu}, we prove that $\mu_{n}(\mathbf{M})$ is constant for $\MMM$. That is, the largest coefficient always appears at the same degree.

%let $\mu_n$ be the degree of the term of the characteristic polynomial of $\overline{\mathbf{M}}_n \in \MMM$ whose coefficient gives the height. In Proposition~\ref{prop:unique_mu} we will prove that $\mu_n$ the same for all matrices in \MMM. $\tau_n$ and $\mu_n$ and the number of matrices with maximal characteristic height for dimensions 2 to 10 are given in Table~\ref{tab:max_char_height}.

\begin{table}[h!]
  \begin{center}
    \begin{tabular}{cccc}
        \toprule
        $n$ & $\tau_n$ & $\mu_n$ & \# max char height\\ \midrule
         2 &     2 & 1 &  6\\
         3 &     5 & 1 &  6\\
         4 &    12 & 1 &  6\\
         5 &    27 & 1 &  6\\
         6 &    66 & 2 & 18\\
         7 &   168 & 2 & 18\\
         8 &   416 & 2 & 18\\
         9 & 1,008 & 2 & 18\\
        10 & 2,528 & 3 & 54 \\
        \bottomrule
    \end{tabular}
  \end{center}
  \label{tab:max_char_height}
  \caption{Maximum height, $\tau_n$, degree of term of characteristic polynomial corresponding to maximum height, $\mu_n$, and the number of matrices in \MMM for dimensions 2 to 10.}
\end{table}

% \begin{proposition}
%     \label{thm:max_height_bound}
%     The characteristic height, $\tau_n$ grows at least exponentially in $n$.
% \end{proposition}
% \begin{proof}
%     When $t_k = -1$ for $1 \le k \le n$, the characteristic height is maximal by Proposition~\ref{prop:maxheight_UHT}.
%     Equation~\eqref{eqn:them2_cor_UHT_c} from Proposition~\ref{prop:charpolyrec2_UHT} reduces to
%     \begin{equation}
%         p_{n,0} = \sum_{k=1}^n p_{n-k,0} = 2^{n-1}
%     \end{equation}
%     for $n \ge 1$ with $p_{0,0} = 1$ by equation~\eqref{eqn:them2_cor_UHT_d}. Thus, the maximal characteristic height must grow at least exponentially in $n$.
% \end{proof}

\begin{proposition}
    \label{prop:heightcoeffs}
    The characteristic height, $\tau_n$ is independent of $t_j$ for $j > n - \mu_n$.
\end{proposition}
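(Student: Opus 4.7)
The plan is to derive this statement as a direct consequence of Proposition~\ref{prop:funof}, which controls the functional dependence of the individual coefficients $p_{n,j}$ of the characteristic polynomial.

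First I would recall the definition: for any $\M_n \in \MMZ$, write its characteristic polynomial as $P_n(z) = \sum_{j=0}^{n} p_{n,j} z^j$, so that the characteristic height is $\max_{0 \le j \le n} |p_{n,j}|$. For matrices in $\MMM$, this maximum is exactly $\tau_n$ and is attained at the index $j = \mu_n$, i.e.\ $\tau_n = |p_{n,\mu_n}|$. (That $\mu_n$ is well-defined and uniform across $\MMM$ is handled by Proposition~\ref{prop:unique_mu}, which I am free to invoke.)

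Next I would apply Proposition~\ref{prop:funof} with $i = \mu_n$. That proposition asserts exactly that $p_{n,i}$ is a function of $t_1, \ldots, t_{n-i}$ only, so specializing gives that $p_{n,\mu_n}$ is a function of $t_1, \ldots, t_{n-\mu_n}$ only and is therefore independent of $t_j$ for $j > n - \mu_n$. Taking absolute values, $\tau_n = |p_{n,\mu_n}|$ inherits this independence, which is the claim.

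There is essentially no obstacle: the bulk of the work was already done in Proposition~\ref{prop:funof}. The only small bookkeeping step is to note that $\tau_n$, as a numerical quantity, is computed from the single coefficient $p_{n,\mu_n}$ (for matrices in $\MMM$), so the $t_j$-independence of that coefficient is the same as $t_j$-independence of $\tau_n$. As a bonus remark that I would include, this explains the factor $3^{n - (n - \mu_n)} = 3^{\mu_n}$ appearing in the $\#\MMM$ column of the table: the last $\mu_n$ entries $t_{n-\mu_n+1}, \ldots, t_n$ may be chosen freely from $\{-1, 0, +1\}$ without disturbing $p_{n,\mu_n}$, and this is consistent with the counts $6, 18, 54, \ldots$ observed.
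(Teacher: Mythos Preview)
Your approach is the same as the paper's---invoke Proposition~\ref{prop:funof} at $i=\mu_n$---but there are two issues worth flagging.

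First, you cannot appeal to Proposition~\ref{prop:unique_mu}: in the paper it is stated \emph{after} Proposition~\ref{prop:heightcoeffs} and its proof uses Proposition~\ref{prop:heightcoeffs}, so invoking it here is circular. Fortunately you do not need it; at this point $\mu_n=\mu_n(\overline{\M}_n)$ is simply the index attached to a fixed $\overline{\M}_n\in\MMM$, and that is all the argument requires.

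Second, the ``small bookkeeping step'' is doing slightly more work than you acknowledge. You write $\tau_n=|p_{n,\mu_n}|$ and conclude that independence of $p_{n,\mu_n}$ from $t_j$ (for $j>n-\mu_n$) immediately gives independence of $\tau_n$. But the characteristic height is $\max_j |p_{n,j}|$, not the single value $|p_{n,\mu_n}|$. When you vary $t_j$ for $j>n-\mu_n$, Proposition~\ref{prop:funof} guarantees that $p_{n,k}$ is unchanged for every $k\ge \mu_n$, but the coefficients $p_{n,k}$ with $k<\mu_n$ \emph{can} change, and you must rule out that one of them now exceeds $|p_{n,\mu_n}|$. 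The paper closes this gap in one line: since $\tau_n$ is by definition the maximum characteristic height over all of $\MMZ$, every coefficient of every such matrix satisfies $|p_{n,k}|\le \tau_n=|p_{n,\mu_n}|$, so the maximum remains $\tau_n$. Adding that sentence completes your argument.
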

\begin{proof}
    Let $\overline{P}_n$ be the characteristic polynomial of $\overline{\mathbf{M}}_n \in \MMM$.
    By Proposition~\ref{prop:funof}, $p_{n,\mu_n}$ is independent of $t_j$ for $j > n -  \mu_n$. Thus, $t_j$ for $j > n -  \mu_n$ only affects $p_{n,k}$ for $k <  \mu_n$. Since $\overline{\mathbf{M}}_n$ is of maximal height, $|p_{n,k}| \le |p_{n,  \mu_n}|$ for $k < \mu_n$ for all $t_j\in \{-1, 0, +1\}$ with $j > n -  \mu_n$.
\end{proof}

\begin{proposition}
    \label{prop:unique_mu}
    For fixed $n$, $\mu_n$ is the same for all $\overline{\mathbf{M}}_n \in \MMM$.
\end{proposition}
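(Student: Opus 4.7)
The plan is to pin $\mu_n(\M)$ down as a quantity depending only on $n$, by identifying it with $\mu^*_n := \max\{j : p_{n,j}^{(-1)} = \tau_n\}$, where $p_{n,j}^{(-1)}$ denotes the characteristic coefficients of the Pattern~1 matrix ($t_k = -1$ for all $k$) from Proposition~\ref{prop:maxheight_UHT}.

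First I would use Corollary~\ref{prop:charpolyrec2_UHT} together with the triangle inequality to prove, by induction on $n$, the universal bound $|p_{n,j}(\M)| \le p_{n,j}^{(-1)}$ for every $\M \in \MMZ$ and every $(n,j)$:
\begin{equation*}
|p_{n,j}(\M)| \le |p_{n-1,j-1}(\M)| + \sum_{k=1}^{n-j} |t_k|\, |p_{n-k,j}(\M)| \le p_{n-1,j-1}^{(-1)} + \sum_{k=1}^{n-j} p_{n-k,j}^{(-1)} = p_{n,j}^{(-1)}\,.
\end{equation*}
Consequently $\tau_n = \max_j p_{n,j}^{(-1)}$, and for any $\M \in \MMM$ we have $|p_{n,j}(\M)| < \tau_n$ whenever $j > \mu^*_n$, so $\mu_n(\M) \le \mu^*_n$.

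The reverse inequality $\mu_n(\M) \ge \mu^*_n$ is the real content. I would prove, also by induction on $n$, the sharpened statement that $|p_{n,j}(\M)| = p_{n,j}^{(-1)}$ holds if and only if the first $n-j$ entries of $\M$ form one of the two stationary max-height patterns of Propositions~\ref{prop:maxheight_UHT} and~\ref{prop:maxheight_UHT2}, namely $t_k \equiv -1$ or $t_k = (-1)^{k-1}$. Equality in the triangle inequality above forces $|t_k| = 1$ for every $k \le n-j$, the inductive equalities $|p_{n-1,j-1}(\M)| = p_{n-1,j-1}^{(-1)}$ and $|p_{n-k,j}(\M)| = p_{n-k,j}^{(-1)}$, and a compatible sign alignment among the terms of the recurrence. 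Pattern~2 satisfies $p_{n,j}(\M_2) = (-1)^{n-j} p_{n,j}^{(-1)}$ via the diagonal similarity $\M_2 = \D(-\M_1)\D^{-1}$ with $\D = \mathrm{diag}(1,-1,1,-1,\ldots)$, so its absolute values agree with Pattern~1's; the sign-consistency condition then selects exactly these two patterns and no hybrids.

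The conclusion is immediate once the characterization is available. For $\M \in \MMM$ with $\mu = \mu_n(\M)$, the first $n-\mu$ entries lie in one of the two patterns, both of which are stationary under restriction to a prefix; therefore for every $j' \ge \mu$ the first $n-j'$ entries are again in one of the two patterns, giving $|p_{n,j'}(\M)| = p_{n,j'}^{(-1)}$. Specializing to $j' = \mu^*_n$ yields $|p_{n,\mu^*_n}(\M)| = \tau_n$, so $\mu_n(\M) \ge \mu^*_n$, and equality follows. The main obstacle is the equality characterization of the triangle inequality—specifically the sign-consistency bookkeeping in the inductive step, which is what rules out hybrid sign sequences and guarantees that only the two stationary patterns can achieve the bound.
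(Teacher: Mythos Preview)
Your approach is correct and in fact more complete than the paper's own argument. The paper reasons briefly that the two known extremal patterns ($t_k\equiv-1$ and $t_k=(-1)^{k-1}$) have characteristic polynomials differing only in signs, hence share the same $\mu_n$, and then invokes Proposition~\ref{prop:heightcoeffs} to say that perturbing $t_j$ for $j>n-\mu_n$ leaves $\mu_n$ unchanged. But that reasoning tacitly assumes every element of \MMM arises from one of the two patterns by such a perturbation---essentially the content of Theorem~\ref{thm:count_max_height_uht}, whose proof in turn only \emph{exhibits} $2\cdot3^{\mu_n}$ matrices without showing there are no others.

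Your coefficient-wise bound $|p_{n,j}(\M)|\le p_{n,j}^{(-1)}$ together with the equality characterization closes this gap: once $|p_{n,j}(\M)|=p_{n,j}^{(-1)}$ forces the prefix $(t_1,\ldots,t_{n-j})$ into one of the two stationary patterns, the identification $\mu_n(\M)=\mu^*_n$ goes through for an \emph{arbitrary} $\M\in\MMM$. The sign-bookkeeping you flag as the main obstacle is genuine but not hard: since $p_{n-1,j-1}$ already depends on the full prefix $t_1,\ldots,t_{n-j}$ (Proposition~\ref{prop:funof}), the inductive hypothesis applied to that single term alone determines the pattern, and the identity $p_{m,i}(\M_2)=(-1)^{m-i}p_{m,i}^{(-1)}$ for pattern~2 then verifies that the remaining terms $-t_kp_{n-k,j}$ automatically align in sign. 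As a bonus, your characterization simultaneously supplies the missing ``and no others'' direction of Theorem~\ref{thm:count_max_height_uht}.
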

\begin{proof}
The characteristic polynomial of $\overline{\mathbf{M}}_n$ when $t_k = -1$ has the same coefficients as the characteristic polynomial of $\overline{\mathbf{M}}_n$ for $t_k = (-1)^{k-1}$ up to a sign change. By Proposition~\ref{prop:heightcoeffs}, changing any of the entries $t_j$ of $\overline{\mathbf{M}}_n$ for $j > n-\mu_n$ does not affect the value of $\mu_n$. Therefore $\mu_n$ is fixed.
\end{proof}

\begin{theorem}
\label{thm:count_max_height_uht}
\MMM contains $2\cdot3^{\mu_n}$ matrices.
\end{theorem}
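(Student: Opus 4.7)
The plan is to partition the entries $(t_1,\ldots,t_n)$ of $\M \in \MMZ$ into an \emph{essential} prefix $(t_1,\ldots,t_{n-\mu_n})$ and a \emph{free} suffix $(t_{n-\mu_n+1},\ldots,t_n)$, count each part, and multiply. By Proposition~\ref{prop:heightcoeffs}, $\tau_n$ is independent of $t_j$ for $j>n-\mu_n$, so every suffix entry can be chosen freely from $P=\{-1,0,+1\}$ without disturbing membership in $\MMM$, contributing a factor of $3^{\mu_n}$. It then suffices to show that the essential prefix admits exactly two choices.

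Propositions~\ref{prop:maxheight_UHT} and~\ref{prop:maxheight_UHT2} supply the patterns $(-1,\ldots,-1)$ and $(+1,-1,+1,-1,\ldots)$; these differ already at $t_1$, giving the lower bound. For the upper bound I would pass to $\bar t_k := -t_k$, in which the recurrence~\eqref{eq:polyrec_neg_t} carries only $+$ signs. A pair of inductions on $n$ then establishes (a) each $p_{n,j}(\bar t)$ is a polynomial with non-negative integer coefficients in $\bar t_1,\ldots,\bar t_{n-j}$, and (b) under the grading $\deg \bar t_k = k$, it is homogeneous of weighted degree $n-j$. It follows that
\begin{equation*}
  |p_{n,\mu_n}(\bar t)|\;\leq\;\sum_{\alpha} c_\alpha \Bigl|\prod_k \bar t_k^{\alpha_k}\Bigr|\;\leq\; p_{n,\mu_n}(1,\ldots,1)\;=\;\tau_n,
\end{equation*}
so reaching $\tau_n$ forces every essential $\bar t_k$ to be $\pm 1$ (never $0$, since each genuinely appears by Proposition~\ref{prop:funof} together with a direct check of the recurrence) and forces the monomial values $\prod_k \bar t_k^{\alpha_k}$ to share a common sign across all supported $\alpha$.

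Writing $\bar t_k = (-1)^{\beta_k}$, this sign-alignment condition becomes the $\mathbb{F}_2$-linear system $\sum_k \beta_k\alpha_k \equiv \text{const}\pmod 2$ over the support of $p_{n,\mu_n}$; equivalently, $\beta$ lies in the orthogonal complement (inside $\mathbb{F}_2^{n-\mu_n}$) of the difference set $\{\alpha-\alpha'\}$. The weighted-degree identity $\sum_k k\alpha_k = n-\mu_n$ immediately certifies both $\beta=(0,0,\ldots,0)$ and $\beta=(1,0,1,0,\ldots)$, matching the two known patterns. The main obstacle---on which the remainder of the argument focuses---is showing this orthogonal complement is exactly one-dimensional. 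The plan here is to pinpoint enough explicit monomials in $p_{n,\mu_n}$: a Pascal-type recurrence on the top power of $\bar t_1$ yields the monomial $\bar t_1^{n-\mu_n}$ with coefficient $\binom{n}{\mu_n}$, the term $\bar t_{n-\mu_n}$ appears linearly with coefficient~$1$, and interpolating monomials produced inductively from~\eqref{eq:polyrec_neg_t} furnish pairwise differences that span the required mod-$2$ hyperplane. Combined with the $3^{\mu_n}$ free-suffix choices, this delivers $|\MMM|=2\cdot 3^{\mu_n}$.
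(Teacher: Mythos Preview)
Your lower-bound argument---the free suffix of length $\mu_n$ contributing a factor $3^{\mu_n}$, times the two prefix patterns from Propositions~\ref{prop:maxheight_UHT} and~\ref{prop:maxheight_UHT2}---is exactly what the paper does. In fact the paper's proof \emph{stops} there: it exhibits $2\cdot 3^{\mu_n}$ matrices in \MMM but never argues that no other prefix $(t_1,\ldots,t_{n-\mu_n})$ can occur. So your upper-bound discussion is not an alternate route to the paper's destination; it is an attempt to close a gap the paper itself leaves open.

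That attempt is sound in outline---non-negativity, weighted homogeneity, reduction to an $\mathbb{F}_2$ orthogonality condition on the support of $p_{n,\mu_n}$---but the step you yourself flag as the main obstacle is genuinely the hard one, and you have only sketched it. To pin the orthogonal complement to one dimension you must exhibit enough monomials in $p_{n,\mu_n}$ that their pairwise exponent differences span the whole hyperplane $\{\gamma:\sum_k k\gamma_k\equiv 0\pmod 2\}$ in $\mathbb{F}_2^{\,n-\mu_n}$; naming $\bar t_1^{\,n-\mu_n}$ and $\bar t_{n-\mu_n}$ and gesturing at ``interpolating monomials'' does not yet do this, and the verification is not routine because $\mu_n$ has no explicit formula to work with. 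Two smaller points also need attention. First, Proposition~\ref{prop:funof} gives independence of $p_{n,\mu_n}$ from the tail variables, not \emph{dependence} on every prefix variable, which is what your ``never $0$'' claim requires; that needs its own argument from the recurrence. Second, you tacitly assume that any $\M\in\MMM$ realises $\tau_n$ specifically at index $\mu_n$; Proposition~\ref{prop:unique_mu} asserts this, but its proof in the paper again treats only the $2\cdot 3^{\mu_n}$ matrices already exhibited, so for a self-contained upper bound you would want to note that $|p_{n,j}(\bar t)|\le p_{n,j}(1,\ldots,1)$ for every $j$ and then deal with the possibility that the maximum in the all-$(-1)$ polynomial is attained at more than one index.
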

\begin{proof}
    By Proposition~\ref{prop:maxheight_UHT}, $t_k = -1$ for $1 \le k \le n$ gives maximal characteristic height. By Proposition~\ref{prop:heightcoeffs}, any combination of $t_j \in \{-1, 0, {+1}\}$ for $j > n - \mu_n$ will not affect the characteristic height. Thus, the $3^{\mu_n}$ matrices with $t_k = -1$ for $1 \le k \le n-\mu_n$, and $t_k \in \{-1, 0, {+1}\}$ for $n - \mu_n +1 \le k \le n$ all have maximal characteristic height. Similarly, by Proposition~\ref{prop:maxheight_UHT2}, $t_k = (-1)^{k-1}$ for $1 \le k \le n$ gives maximal characteristic height. Again, by Proposition~\ref{prop:heightcoeffs}, $3^{\mu_n}$ matrices with $t_k = (-1)^{k-1}$ for $1 \le k \le n-\mu_n$, and $t_k \in \{-1, 0, {+1}\}$ for $n - \mu_n +1 \le k \le n$ all have maximal characteristic height.
\end{proof}

\subsection{More about characteristic polynomials of maximal height}
\label{sec:MHCP}
In this section we restrict our analysis to the matrix $\widetilde{\M}_n \in \MMM$ with $t_k = -1$ for all $k$. By Proposition~\ref{prop:maxheight_UHT}, $\widetilde{\M}_n$ is of maximal characteristic height.
Call these special characteristic polynomials $\widetilde{P}_n(z) = \det\left(z\mathbf{I}-\widetilde{\M}_n\right)$.
\begin{proposition}
    These characteristic polynomials $\widetilde{P}_n(z)$ satisfy the three-term recurrence relation
    \begin{equation}
        \widetilde{P}_{n+1}(z) = (z+2)\widetilde{P}_{n}(z) - z\widetilde{P}_{n-1}(z)
    \end{equation}
    with the initial conditions $\widetilde{P}_0(z) = 1$ and $\widetilde{P}_1(z) = 1+z$.
\end{proposition}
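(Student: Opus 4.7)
The plan is to derive the three-term recurrence directly from Corollary~\ref{prop:charpolyrec1_UHT} by specializing and then telescoping. Setting $t_k = -1$ for all $k$ in equation~\eqref{eqn:thm1_UHT} converts the alternating-sign recurrence into a purely additive one:
\begin{equation*}
    \widetilde{P}_{n}(z) = z\widetilde{P}_{n-1}(z) + \sum_{k=1}^{n} \widetilde{P}_{n-k}(z) = z\widetilde{P}_{n-1}(z) + \sum_{j=0}^{n-1} \widetilde{P}_{j}(z).
\end{equation*}
The presence of the full partial sum $\sum_{j=0}^{n-1}\widetilde{P}_j(z)$ is exactly what makes a three-term recurrence natural: it can be eliminated by taking a difference of two consecutive instances.

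First I would write down the same identity with $n$ replaced by $n+1$, obtaining
\begin{equation*}
    \widetilde{P}_{n+1}(z) = z\widetilde{P}_{n}(z) + \sum_{j=0}^{n}\widetilde{P}_j(z).
\end{equation*}
Subtracting the previous display from this one eliminates all but one term of the accumulating sum, leaving
\begin{equation*}
    \widetilde{P}_{n+1}(z) - \widetilde{P}_{n}(z) = z\widetilde{P}_{n}(z) - z\widetilde{P}_{n-1}(z) + \widetilde{P}_{n}(z),
\end{equation*}
which rearranges immediately to $\widetilde{P}_{n+1}(z) = (z+2)\widetilde{P}_n(z) - z\widetilde{P}_{n-1}(z)$.

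For the initial conditions I would simply read them off: $\widetilde{P}_0(z) = 1$ is the convention in Corollary~\ref{prop:charpolyrec1_UHT}, and $\widetilde{\M}_1 = [-1]$ gives $\widetilde{P}_1(z) = \det(z\I - [-1]) = z+1$. There is no real obstacle here; the only subtlety is being careful that the subtraction is done between consecutive indices so the sum telescopes cleanly, and that the recurrence is used for $n \ge 1$ (so that both $\widetilde{P}_n$ and $\widetilde{P}_{n-1}$ are already defined, with $\widetilde{P}_0 = 1$ anchoring the base case).
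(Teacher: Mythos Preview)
Your argument is correct and is essentially identical to the paper's own proof: specialize Corollary~\ref{prop:charpolyrec1_UHT} at $t_k=-1$, reindex the sum as $\sum_{j=0}^{n-1}\widetilde{P}_j(z)$, write the same relation at level $n+1$, and subtract to telescope. Your explicit verification of the initial conditions is a small addition but otherwise the two proofs coincide.
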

\begin{proof}
    By equation~\ref{eqn:thm1_UHT} and proposition~\ref{prop:maxheight_UHT}, and using the fact that $t_k=-1$, the polynomials satisfy the recurrence relation
    \begin{equation}
        \widetilde{P}_n(z) = z\widetilde{P}_{n-1}(z) + \sum_{k=1}^n \widetilde{P}_{n-k}(z)\>.
    \end{equation}
    Relabeling the indices in the sum, and using the same relationship for $\widetilde{P}_{n+1}(z)$, we have
    \begin{equation}
        \widetilde{P}_n(z) = z \widetilde{P}_{n-1}(z) + \sum_{k=0}^{n-1} \widetilde{P}_k(z)
    \end{equation}
    and
    \begin{align}
        \widetilde{P}_{n+1}(z) &= z \widetilde{P}_{n}(z) + \sum_{k=0}^{n} \widetilde{P}_k(z) \nonumber \\
        &= z\widetilde{P}_n(z) + \widetilde{P}_n(z) + \sum_{k=0}^{n-1} \widetilde{P}_k(z)\>.
    \end{align}
    Subtracting the previous equation gives the proposition.
\end{proof}
\begin{proposition}
    $\widetilde{P}_n(z)$ is of the form
    \begin{equation}
        \widetilde{P}_n(z) = z^n + p_{n,n-1}z^{n-1} + \cdots + p_{n,0}
    \end{equation}
    where each coefficient $p_{n,j}$ is positive for all $n$ and $j$.
\end{proposition}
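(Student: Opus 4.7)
The plan is to prove positivity by induction using the additive form of the recurrence established in the proof of the previous proposition, namely
\begin{equation*}
    \widetilde{P}_n(z) = z\widetilde{P}_{n-1}(z) + \sum_{k=0}^{n-1}\widetilde{P}_k(z),
\end{equation*}
rather than the three-term recurrence $\widetilde{P}_{n+1}=(z+2)\widetilde{P}_n - z\widetilde{P}_{n-1}$. The three-term form has a subtraction, which obscures positivity, whereas the additive form is a sum with no cancellation.

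First I would read off the coefficient recurrence: with the convention $p_{n,j}=0$ whenever $j<0$ or $j>n$, the identity above yields $p_{n,j} = p_{n-1,j-1} + \sum_{k=0}^{n-1} p_{k,j}$ for $0 \le j \le n-1$, and separately $p_{n,n}=1$. Note that $\sum_{k=0}^{n-1} p_{k,j}$ effectively starts at $k=j$ since lower indices vanish, and in particular contains the summand $p_{j,j}=1$ whenever $j \le n-1$.

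Next I would handle the base cases $n=0$ and $n=1$ directly from the initial conditions $\widetilde{P}_0(z)=1$ and $\widetilde{P}_1(z)=1+z$, giving $p_{0,0}=p_{1,0}=p_{1,1}=1>0$. For the inductive step, assume $p_{k,j} > 0$ for all $0 \le j \le k < n$. Then for $0 \le j \le n-1$, the expression $p_{n,j} = p_{n-1,j-1} + \sum_{k=j}^{n-1} p_{k,j}$ is a sum of nonnegative terms that contains the strictly positive term $p_{j,j}=1$, so $p_{n,j}>0$. The leading coefficient $p_{n,n}=1$ is trivially positive, completing the induction.

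There is no real obstacle here; the only subtlety is choosing the right recurrence. As an alternative one-line argument, the conclusion also follows immediately from Proposition~\ref{prop:maxheight}, whose proof already established that under the sign condition $\bigl(\prod_{j=n-k+1}^{n-1} s_j\bigr) h_{n-k+1,n} = -1$ all coefficients $q_{n,j}$ are positive; the matrix $\widetilde{\M}_n$ with $s=1$ and $t_k=-1$ satisfies this condition, so positivity of $p_{n,j}$ is inherited. I would mention this as a remark, but present the short induction above as the cleanest direct proof in the Toeplitz setting.
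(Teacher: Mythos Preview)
Your proposal is correct and follows essentially the same approach as the paper. The paper specializes Corollary~\ref{prop:charpolyrec2_UHT} at $t_k=-1$ to obtain the coefficient recurrence $p_{n,j}=p_{n-1,j-1}+\sum_{k=1}^{n-j}p_{n-k,j}$ (which, after reindexing, is exactly your $p_{n,j}=p_{n-1,j-1}+\sum_{k=j}^{n-1}p_{k,j}$) and then observes that positivity propagates from $p_{0,0}=1$; your derivation via the additive polynomial recurrence of the preceding proposition and your explicit induction are just a slightly more detailed presentation of the same argument, and your remark about inheriting positivity from Proposition~\ref{prop:maxheight} is also valid.
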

\begin{proof}
    When $t_k = -1$ for $1 \le k \le n$, Proposition~\ref{prop:charpolyrec2_UHT} reduces to
    \begin{subequations}
    \label{eqn:max_height_char_poly_rec}
    \begin{align}
        p_{n,n} &= 1,\\
        p_{n,j} &= p_{n-1,j-1} + \sum_{k=1}^{n-j} p_{n-k,j}\quad\text{for}\quad 1 \le j \le n-1, \label{eqn:max_height_char_poly_rec_b}\\
        p_{n,0} &= \sum_{k=1}^n p_{n-k,0},
 \,\text{and} \label{eqn:max_height_char_poly_rec_c}\\
        p_{0,0} &= 1 \label{eqn:max_height_char_poly_rec_d}\>.
    \end{align}
    \end{subequations}
    Since $p_{0,0}$ is positive, and all coefficients in the above equations are positive, $p_{n,j}$ must be positive for all $n$ and $j$.
\end{proof}

\begin{proposition}
    \label{prop:genfun}
    The generating function of the sequence $(p_{i,i}, p_{i+1,i}, \ldots)$ for all $i \ge 0$ is
    \begin{equation}
        G_{i}(x) = \bigg (\frac{1-x}{1-2x} \bigg)^{i+1} \>.
    \end{equation}
According to links from~\cite{oeisA062110}, this is the generating function for so-called \textsl{weak compositions}: therefore, $p_{n,k}$ gives the number of compositions of $n$ with exactly $k$ zeros.  See also~\cite{janjic2018words}.
\end{proposition}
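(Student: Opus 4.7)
The plan is to compute the full bivariate generating function $F(x,z) = \sum_{n\ge 0}\widetilde{P}_n(z)\,x^n$ in closed form, and then expand it as a power series in $z$ to read off $G_i(x)$ directly as the coefficient of $z^i$.

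First, I would multiply the three-term recurrence $\widetilde{P}_{n+1}(z) = (z+2)\widetilde{P}_n(z) - z\widetilde{P}_{n-1}(z)$ by $x^{n+1}$ and sum over $n\ge 1$. Using the initial conditions $\widetilde{P}_0(z)=1$ and $\widetilde{P}_1(z)=1+z$, this reduces to a linear equation for $F$, whose solution is
\begin{equation*}
F(x,z) \;=\; \frac{1-x}{1-(z+2)x + zx^2}\>.
\end{equation*}
The key algebraic observation is that the denominator factors nicely as $(1-2x)-zx(1-x)$, which allows a geometric expansion in $z$:
\begin{equation*}
F(x,z) \;=\; \frac{1-x}{1-2x}\cdot\frac{1}{1 - z\,\frac{x(1-x)}{1-2x}}
\;=\; \sum_{i\ge 0} \frac{x^i (1-x)^{i+1}}{(1-2x)^{i+1}}\, z^i\>.
\end{equation*}

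On the other hand, collecting $F(x,z) = \sum_{n\ge 0}\sum_{i=0}^{n} p_{n,i} z^i x^n$ by powers of $z$ gives
\begin{equation*}
F(x,z) \;=\; \sum_{i\ge 0} z^i \sum_{n\ge i} p_{n,i}\,x^n
\;=\; \sum_{i\ge 0} z^i\, x^i\, G_i(x)\>.
\end{equation*}
Comparing the two expansions coefficient by coefficient in $z^i$ yields $x^i G_i(x) = x^i(1-x)^{i+1}/(1-2x)^{i+1}$, whence the claim $G_i(x) = \bigl((1-x)/(1-2x)\bigr)^{i+1}$ follows by cancelling $x^i$.

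The only nonroutine step is the factorization $1-(z+2)x+zx^2 = (1-2x)-zx(1-x)$, which is what permits the geometric series reindexing in $z$; once that identity is spotted the rest is bookkeeping. A minor sanity check that I would include is the case $i=0$, where the formula predicts $G_0(x)=(1-x)/(1-2x)=1+x+2x^2+4x^3+\cdots$, matching the sequence $p_{n,0}$ generated by the recurrence $p_{n,0}=\sum_{k=1}^n p_{n-k,0}$ with $p_{0,0}=1$. The combinatorial interpretation in terms of weak compositions then follows from the OEIS identification and does not require a separate argument.
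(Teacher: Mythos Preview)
Your proof is correct. The paper omits its own proof of this proposition for length, so a line-by-line comparison is not possible; however, the bivariate generating function you derive,
\[
F(x,z)=\frac{1-x}{1-(z+2)x+zx^2},
\]
is precisely the object the paper records as $\widetilde{G}$ in a later proposition, and your key maneuver---rewriting the denominator as $(1-2x)-zx(1-x)$ so as to expand geometrically in $z$---is clean and does all the work. One minor remark: since the paper states the formula for $\widetilde{G}$ \emph{after} this proposition, your choice to rederive it from the three-term recurrence rather than cite it is the right one for logical independence; the derivation is short and the initial conditions $\widetilde{P}_0=1$, $\widetilde{P}_1=1+z$ do give numerator $1-x$ as you claim.
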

\begin{proof}
Omitted for length. See the arXiv version of this paper for details.
    \end{proof}

\begin{remark}
    The coefficients $p_{n,k}$ are given by the OEIS sequence \href{http://oeis.org/A105306}{A105306}~\cite{oeisA105306} and \href{http://oeis.org/A062110}{A062110}~\cite{oeisA062110} for the ``number of directed column-convex polynomials of area $n$, having the top of the right-most column at height $k$.'' We have $p_{n,k} = T_{n+1,k+1}$ where
    \begin{equation}
        T_{n,k} = 
        \begin{cases}
            \displaystyle\sum_{j=0}^{n-k-1} \binom{k+j}{k-1} \binom{n-k-1}{j} & \text{if } k < n\\
            \hfil 1 & \text{if } k = n
        \end{cases}
    \end{equation}
    Maple ``simplifies'' this to
    \begin{equation}
      T_{n,k} =
      \begin{cases}
          kF \!
          \left(
              \begin{array}{c|c}
                  k+1, k+1-n & \multirow{2}{*}{$-1$} \\
                  2 &
              \end{array}
          \right) & \text{if } n \ne k\\
          \hfil 1 & \text{if } n = k
        \end{cases}
    \end{equation}
    where $F(\cdot)$ is the hypergeometric function defined as
    \begin{equation}
        F \!
          \left(
              \begin{array}{c|c}
                  a, b & \multirow{2}{*}{$z$} \\
                  c &
              \end{array}
          \right)
          =
          \sum_{n=0}^{\infty} \frac{a^{\overline{n}} b^{\overline{n}}}{c^{\overline{n}}} \frac{z^n}{n!}
    \end{equation}
    where $q^{\overline{n}}$ ($q$ to the $n$ rising) is $q\cdot(q+1)\cdots(q + n - 1)$.
As stated previously, these also count the number of weak compositions of $n$ with exactly $k$ zeros.
\end{remark}
\begin{remark}
    This implies that an exact formula for the coefficient $p_{n,k}$ is
    \begin{equation}
        p_{n,k} = (k+1)\sum_{m=0}^{n-k-1} \frac{(k+2)^{\overline{m}}(n-k-1)^{\overline{m}}}{2^{\overline{m}} m!}\>.
    \end{equation}
    It is possible that the approximate formula for $\mu_n$ given earlier, placed into this formula, would allow detailed understanding of the asymptotics of $\tau_n$.
\end{remark}
\begin{proof}
The proof is again omitted for length.
\end{proof}
\begin{proposition}
    The characteristic polynomials have the ordinary generating function
    \begin{equation}
       \widetilde{G} =  {\frac {1-x}{z{x}^{2}- \left( z+2 \right) x+1}} = \sum_{n\ge0} \widetilde{P}_n(z) x^n\>.
    \end{equation}
\end{proposition}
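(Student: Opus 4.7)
The plan is to derive the generating function directly from the three-term recurrence established just above, treating $z$ as a parameter and $x$ as the formal indeterminate.

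First I would set $\widetilde{G}(x) := \sum_{n \geq 0} \widetilde{P}_n(z)\, x^n$, then multiply the recurrence
$$\widetilde{P}_{n+1}(z) = (z+2)\widetilde{P}_n(z) - z\,\widetilde{P}_{n-1}(z)$$
by $x^{n+1}$ and sum over $n \geq 1$. The left-hand side becomes $\widetilde{G}(x) - \widetilde{P}_0(z) - \widetilde{P}_1(z)\,x = \widetilde{G}(x) - 1 - (1+z)x$. The first piece on the right becomes $(z+2)x\bigl(\widetilde{G}(x) - 1\bigr)$, and the second piece becomes $z x^2 \widetilde{G}(x)$ (a clean index shift since the sum starts at $n=1$, i.e.\ $\widetilde{P}_{n-1}$ at $n=1$ pairs with $x^2$).

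Collecting terms in $\widetilde{G}(x)$ gives
$$\widetilde{G}(x)\bigl[\,1 - (z+2)x + z x^2\,\bigr] = 1 + (1+z)x - (z+2)x = 1 - x,$$
so that
$$\widetilde{G}(x) = \frac{1-x}{\,z x^2 - (z+2)x + 1\,},$$
as claimed. The initial conditions $\widetilde{P}_0(z) = 1$ and $\widetilde{P}_1(z) = 1+z$ are exactly what is needed to make the constant and linear terms of the numerator come out to $1 - x$; if either were different the numerator would change accordingly.

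There is no real obstacle here: the only point that requires mild care is the bookkeeping of the two ``missing'' initial terms on each side of the recurrence so that the index shifts line up correctly, and verifying that the recurrence does indeed hold starting from $n = 1$ (which it does, since it was proved for all $n \geq 1$ above). Because $z x^2 - (z+2)x + 1$ has constant term $1$, inversion as a formal power series in $x$ is automatic, so the identity holds in $\mathbb{Z}[z][[x]]$ without any convergence issues.
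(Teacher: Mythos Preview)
Your proof is correct and takes essentially the same approach as the paper: both derive the generating function from the three-term recurrence $\widetilde{P}_{n+1} = (z+2)\widetilde{P}_n - z\widetilde{P}_{n-1}$ together with the initial conditions $\widetilde{P}_0 = 1$, $\widetilde{P}_1 = 1+z$. The paper compresses the computation into a single line by directly forming $\widetilde{G} - (z+2)x\widetilde{G} + zx^2\widetilde{G}$ and noting that only the $x^0$ and $x^1$ coefficients survive, yielding $\widetilde{P}_0 + (\widetilde{P}_1 - (z+2)\widetilde{P}_0)x = 1 - x$; your version carries out the same index-shift bookkeeping explicitly.
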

\begin{proof}
    Straightforward from the recurrence relation above: $\widetilde{G} - (z+2)x\widetilde{G} + zx^2\widetilde{G} = p_0(z) + (\widetilde{P}_1(z)-(z+2)\widetilde{P}_0(z))x = 1-x$.
\end{proof}
\begin{theorem}
    \label{thm:max_height_bound}
    The maximum characteristic height, $\tau_n$, of any upper Hessenberg Bohemian with population $\{-1,0,1\}$ lies between the following bounds:
    \begin{equation}
        \frac{F_{2n+1}}{n+1} < \tau_n < F_{2n+1}\>.
    \end{equation}
    Here $F_k$ is the $k$th Fibonacci number, with the conventional numbering given by $F_{n+1}=F_n+F_{n-1}$ with $F_0 = 0$ and $F_1 = 1$.
\end{theorem}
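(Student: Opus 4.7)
The plan is to exploit the three-term recurrence $\widetilde{P}_{n+1}(z) = (z+2)\widetilde{P}_n(z) - z\widetilde{P}_{n-1}(z)$ together with the fact, just proved, that every coefficient $p_{n,j}$ of $\widetilde{P}_n$ is strictly positive. The trick is that evaluating at the single point $z=1$ converts the polynomial recurrence into a numerical one whose solution is a subsequence of Fibonacci numbers.

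First, I would set $z = 1$ in the recurrence to obtain
\begin{equation}
    \widetilde{P}_{n+1}(1) = 3\,\widetilde{P}_n(1) - \widetilde{P}_{n-1}(1),
\end{equation}
with initial conditions $\widetilde{P}_0(1) = 1$ and $\widetilde{P}_1(1) = 2$. I would then verify by induction that this is exactly the recurrence satisfied by the odd-indexed Fibonacci numbers: the classical identity $F_{2n+3} = 3F_{2n+1} - F_{2n-1}$ (which follows immediately from $F_{k+2} = F_{k+1}+F_k$ applied twice) together with $F_1 = 1$ and $F_3 = 2$ yields $\widetilde{P}_n(1) = F_{2n+1}$.

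Next, because every coefficient $p_{n,j}$ is nonnegative and there are $n+1$ of them, evaluating $\widetilde{P}_n$ at $z=1$ gives their sum:
\begin{equation}
    \sum_{j=0}^{n} p_{n,j} = \widetilde{P}_n(1) = F_{2n+1}.
\end{equation}
By Proposition~\ref{prop:maxheight_UHT} and the definition of $\tau_n$, we have $\tau_n = \max_j p_{n,j}$. The max of a finite collection of positive reals cannot exceed their sum, giving $\tau_n \le F_{2n+1}$; the inequality is strict because the leading coefficient $p_{n,n}=1$ and the max coefficient $\tau_n$ are two distinct strictly positive terms (for $n \ge 1$, one checks easily that $\tau_n > 1$, so they contribute separately). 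Symmetrically, the max of $n+1$ positive reals is at least their average, so $\tau_n \ge F_{2n+1}/(n+1)$, and strict inequality holds as soon as the coefficients are not all equal, which is already visible for $n=2$ and propagates by induction via equation~\eqref{eqn:max_height_char_poly_rec_b} (the coefficient $p_{n,n}=1$ is always strictly smaller than $p_{n,\mu_n}=\tau_n$ for $n \ge 2$).

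I do not expect any real obstacle here: the only nontrivial step is recognising the Fibonacci subsequence in the numerical recurrence, and once that is in hand both bounds reduce to the elementary fact that for positive reals, the maximum lies strictly between the average and the sum whenever the values are not all equal. The tightness of the bounds and the gap factor of $n+1$ are also immediate consequences of this squeeze.
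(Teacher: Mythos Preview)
Your proposal is correct and follows essentially the same line as the paper: evaluate the three-term recurrence at $z=1$ to identify $\widetilde{P}_n(1)=F_{2n+1}$, then use positivity of the coefficients to squeeze $\tau_n$ between the average and the sum. The only cosmetic difference is that you verify the Fibonacci identity $F_{2n+3}=3F_{2n+1}-F_{2n-1}$ directly by induction, whereas the paper solves the recurrence in closed form and recognises $(3+\sqrt{5})/2=\phi^2$; your handling of the strictness (noting it requires $n\ge 2$) is in fact slightly more careful than the paper's.
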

To prove this, we first establish a lemma, using the recurrence relation.
\begin{lemma}
The value of $\widetilde{P}_n(z)$ when $z=1$ is a Fibonacci number, namely $\widetilde{P}_n(1) = F_{2n+1}$. 
\end{lemma}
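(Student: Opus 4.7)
The plan is to evaluate the three-term recurrence directly at $z=1$ and compare it with the recurrence satisfied by odd-indexed Fibonacci numbers. Setting $z=1$ in
\begin{equation*}
\widetilde{P}_{n+1}(z) = (z+2)\widetilde{P}_n(z) - z\widetilde{P}_{n-1}(z)
\end{equation*}
yields the integer recurrence $a_{n+1} = 3a_n - a_{n-1}$ for the sequence $a_n := \widetilde{P}_n(1)$, with initial values $a_0 = \widetilde{P}_0(1) = 1$ and $a_1 = \widetilde{P}_1(1) = 2$.

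Next I would verify that the odd-indexed Fibonacci sequence $b_n := F_{2n+1}$ satisfies the same recurrence and the same initial conditions. The initial conditions match since $F_1 = 1$ and $F_3 = 2$. For the recurrence, the identity $F_{2n+3} = 3F_{2n+1} - F_{2n-1}$ is a standard consequence of applying $F_{k+1} = F_k + F_{k-1}$ twice: $F_{2n+3} = F_{2n+2} + F_{2n+1} = 2F_{2n+1} + F_{2n}$ and $F_{2n+1} = F_{2n} + F_{2n-1}$, so $F_{2n} = F_{2n+1} - F_{2n-1}$, giving $F_{2n+3} = 2F_{2n+1} + F_{2n+1} - F_{2n-1} = 3F_{2n+1} - F_{2n-1}$, as desired.

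Since both sequences satisfy the same linear recurrence with the same two initial values, they agree for all $n \ge 0$ by a straightforward induction, which proves $\widetilde{P}_n(1) = F_{2n+1}$. There is no real obstacle here; the only subtlety is matching the Fibonacci indexing convention, which the theorem statement has already fixed as $F_0 = 0$, $F_1 = 1$. As an optional sanity check, one could verify $\widetilde{P}_2(1) = 3 \cdot 2 - 1 = 5 = F_5$ and $\widetilde{P}_3(1) = 3 \cdot 5 - 2 = 13 = F_7$ against a direct expansion of the $2\times 2$ and $3\times 3$ matrices $\widetilde{\M}_n$ with all entries $-1$ on and above the diagonal and $1$ on the subdiagonal.
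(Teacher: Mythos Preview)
Your proof is correct and follows essentially the same approach as the paper: both evaluate the three-term recurrence at $z=1$ to obtain $a_{n+1}=3a_n-a_{n-1}$ with $a_0=1$, $a_1=2$. The only minor difference is in the identification step---the paper solves the recurrence in closed form and recognizes $3/2+\sqrt{5}/2=\phi^2$, whereas you verify directly that $F_{2n+1}$ satisfies the same recurrence; your route is slightly more elementary but otherwise equivalent.
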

\begin{proof} (of lemma):
This follows from the recurrence relation $\widetilde{P}_{n+1}(z) = (z+2)\widetilde{P}_n(z) - z \widetilde{P}_{n-1}(z)$ on substituting $z=1$ to get $\widetilde{P}_{n+1}(1) = 3\widetilde{P}_n(1) - \widetilde{P}_{n-1}(1)$ with $\widetilde{P}_0(1) = 1$ and $\widetilde{P}_1(1) = 2$. Standard methods for solving recurrence relations show that $\widetilde{P}_n(1) = \left( \sqrt {5}/10+1/2 \right)  \left( 3/2+\sqrt {5}/2
 \right) ^{n}+ \left( -\sqrt {5}/10+1/2 \right)  \left( 3/2-\sqrt {5}/2 \right) ^{n}
$ and this is seen by inspection to be $F_{2n+1}$, because $3/2+\sqrt{5}/2 = (1/2+\sqrt{5}/2)^2$.
\end{proof}
\begin{proof} (of Theorem):
    We have seen that the characteristic polynomial has positive coefficients.  Therefore $\tau_n < \widetilde{P}_n(1) = \sum_{k=0}^n p_{n,k}$. By the previous lemma, $\widetilde{P}_n(1) = F_{2n+1}$.  This establishes the upper bound. The lower bound follows since the arithmetic mean of the coefficients cannot be larger than the maximum, and indeed must be smaller since at least one coefficient of $\widetilde{P}_n(z)$ is $1$ because the polynomial is monic.
\end{proof}

\begin{conjecture}
    The maximum characteristic height, $\tau_n$, approaches the exponentially growing function $C F_{2n+1}/\sqrt{n+1}$ as $n \to \infty$ for some constant $C$. Our experiments indicate that $C \doteq 0.7701532$.
\end{conjecture}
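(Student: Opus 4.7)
The plan is to recognise that $\widetilde{P}_n(z)/\widetilde{P}_n(1)$ is a probability generating function (since every $p_{n,k}>0$) and then invoke the quasi-powers local limit theorem of Hwang / Flajolet--Sedgewick from analytic combinatorics. Since $\tau_n=p_{n,\mu_n}$ and $\widetilde{P}_n(1)=F_{2n+1}$ by the lemma, the conjecture is equivalent to showing that at the mode of this distribution the probability behaves as $1/\sqrt{2\pi\sigma^2 n}$ for an explicit variance $\sigma^2$, and then identifying the constant $C=1/\sqrt{2\pi\sigma^2}$ in closed form.

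The bivariate OGF $\widetilde{G}(x,z)=(1-x)/(zx^2-(z+2)x+1)$ has a simple pole in $x$ at the smaller root $x_-(z)=((z+2)-\sqrt{z^2+4})/(2z)$, which is analytic in $z$ in a complex neighbourhood of $z=1$ and well separated from the companion root $x_+(z)$ (at $z=1$ one has $x_-(1)=1/\phi^2\ne\phi^2=x_+(1)$). Meromorphic singularity analysis then yields, uniformly for $z$ near $1$,
\begin{equation*}
    \widetilde{P}_n(z)\;=\;A(z)\,\rho(z)^{\,n}\bigl(1+O(r^{-n})\bigr),\qquad \rho(z)=1/x_-(z),
\end{equation*}
for some $r>1$ and analytic $A$. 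This is a quasi-powers schema; combined with aperiodicity of the support $\{k:p_{n,k}>0\}$ (immediate since $p_{n,0}$ and $p_{n,1}$ are both positive) and non-degeneracy $\sigma^2>0$ verified below, the local limit theorem gives, uniformly in $k$,
\begin{equation*}
    p_{n,k}\;=\;\frac{\widetilde{P}_n(1)}{\sqrt{2\pi n\sigma^2}}\exp\!\Bigl(-\tfrac{(k-\mu_n)^2}{2n\sigma^2}\Bigr)\bigl(1+o(1)\bigr),
\end{equation*}
with $\mu_n=n\lambda+O(1)$, $\sigma_n^2=n\sigma^2+O(1)$, $\lambda=\rho'(1)/\rho(1)$, and $\sigma^2=\rho''(1)/\rho(1)+\lambda-\lambda^2$. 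Evaluating at the mode $k=\mu_n$ gives $\tau_n\sim F_{2n+1}/\sqrt{2\pi\sigma^2 n}$, which is equivalent to $\tau_n\sim CF_{2n+1}/\sqrt{n+1}$ with $C=1/\sqrt{2\pi\sigma^2}$, since $\sqrt{(n+1)/n}\to 1$.

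To pin down $C$ in closed form, implicit differentiation of $P(x,z)=zx^2-(z+2)x+1=0$ at $z=1$, $x_-(1)=(3-\sqrt{5})/2$ (where $\partial_x P=-\sqrt{5}$) gives $x_-'(1)=(2-\sqrt{5})/\sqrt{5}$, hence $\lambda=(5-\sqrt{5})/10$ and a pleasing identity $\lambda-\lambda^2=1/5$. A second implicit differentiation gives $x_-''(1)=(50-22\sqrt{5})/25$ and then $\rho''(1)/\rho(1)=(3\sqrt{5}-5)/25$, whence
\begin{equation*}
    \sigma^2 \;=\; \frac{3\sqrt{5}-5}{25}+\frac{1}{5}\;=\;\frac{3\sqrt{5}}{25},\qquad
    C\;=\;\frac{1}{\sqrt{2\pi\sigma^2}}\;=\;\frac{5^{3/4}}{\sqrt{6\pi}}\;\doteq\;0.7701532,
\end{equation*}
matching the experimentally observed constant to all stated digits.

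The main obstacle is justifying the \emph{uniform} meromorphic expansion of $\widetilde{P}_n(z)$ in a full complex neighbourhood of $z=1$ (not merely on the real axis): one must track the two branches $x_\pm(z)$ and verify that $x_-(z)$ remains the strictly dominant singularity throughout a complex disk around $z=1$. This is routine because the discriminant $z^2+4$ does not vanish there and $|x_-(z)|<|x_+(z)|$ persists by continuity. A secondary subtlety is that the local limit theorem controls $p_{n,k}$ only for $k=\mu_n+O(\sqrt{n})$, but Gaussian decay outside that window together with positivity of all coefficients ensures the supremum is attained inside it, so $\tau_n$ is indeed captured by the peak value of the Gaussian envelope. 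The remaining steps are the bookkeeping of the two implicit differentiations recorded above.
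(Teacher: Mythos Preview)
The paper does not prove this statement at all: it is stated as a \emph{conjecture}, supported only by the numerical experiment in Figure~\ref{fig:maxheightratiolog} and the accompanying remark. There is no proof in the paper to compare against.

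Your proposal therefore goes well beyond what the paper attempts. You outline a genuine proof via the meromorphic quasi-powers schema of analytic combinatorics applied to the bivariate generating function $\widetilde{G}(x,z)=(1-x)/(zx^{2}-(z+2)x+1)$, followed by a local limit theorem to extract the mode of the coefficient array $(p_{n,k})_{k}$. The calculus is correct: at $z=1$ one has $x_{-}(1)=(3-\sqrt{5})/2$ with $\partial_{x}P=-\sqrt{5}$, and your implicit differentiations yield $\lambda=(5-\sqrt{5})/10$, $\lambda-\lambda^{2}=1/5$, $\rho''(1)/\rho(1)=(3\sqrt{5}-5)/25$, and hence $\sigma^{2}=3\sqrt{5}/25$. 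The resulting closed form
\[
C \;=\; \frac{1}{\sqrt{2\pi\sigma^{2}}} \;=\; \frac{5^{3/4}}{\sqrt{6\pi}} \;\doteq\; 0.7701532
\]
agrees with the paper's experimental value to all seven digits and is not present anywhere in the paper. If your argument holds up, this resolves the conjecture and upgrades the numerical observation to a theorem with an explicit constant.

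The places where your sketch would need to be fleshed out into a full proof are exactly the ones you flag. First, the passage from the quasi-powers central limit theorem to a \emph{local} limit theorem requires more than aperiodicity of the support: one needs that $|\rho(e^{it})|<\rho(1)$ for $t\in(0,2\pi)$, which here follows from strict positivity of the $p_{n,k}$ but should be stated and invoked explicitly (for instance via Theorem~IX.14 of Flajolet--Sedgewick or Bender's local limit theorem). Second, identifying $\tau_{n}$ with the peak of the Gaussian envelope requires a large-deviation tail bound showing that $p_{n,k}$ for $|k-n\lambda|\gg\sqrt{n}$ is exponentially smaller than the peak; this is standard in the quasi-powers framework but should be cited rather than asserted. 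With those two citations in place, the argument is complete.
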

\begin{remark}
    This limit is illustrated in Figure~\ref{fig:maxheightratiolog}, motivating this conjecture. This conjectured behaviour seems to be a constant times the geometric mean of the lower and upper bounds of Theorem~\ref{thm:max_height_bound}. Our experimental evidence suggests that the relative error is $O(1/(n+1))$, but the detailed behaviour is very interesting, and reminiscent of the images of $\sin(n)$ in~\cite{Hardin:1990:thousand}.
\end{remark}
\begin{figure}[h]
    \centering
%    \includegraphics[width=\textwidth]{Figures/MaxHeightLogDiff_50000.pdf}
%    \caption{The points are $\log{\tau_{n+1}} - \log{\tau_n}$ for $n$ from 0 to 50,000 where $\tau_n$ is the maximal characteristic height of \MMZ (i.e. when $t_k = -1$, for example). The solid line is $\log(1 + \varphi)$ where $\varphi$ is the golden ratio.}
    \includegraphics[width=\textwidth]{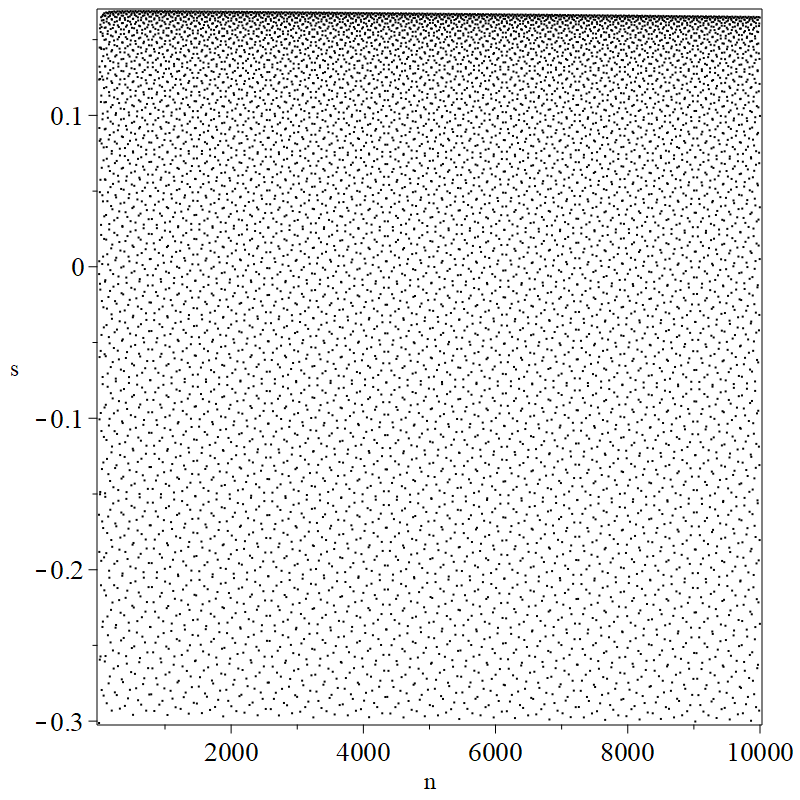}
    \caption{Asymptotic behaviour of the maximum characteristic height $\tau_n$.  With the conjectured $G_n= C\cdot F_{2n+1}/\sqrt{n+1}$ for $C \doteq 0.7701532$ and $F_k$ the $k$th Fibonacci number, we graph $s = (n+1)\cdot\left(G_n/\tau_n - 1\right) $ which is $(n+1)$ times the relative difference $G_n/\tau_n - 1$.  This shows evidence that $\tau_n = G_n\left(1+\tilde{O}((n+1)^{-1})\right)$, where we leave the meaning of the $\tilde{O}$-symbol carefully unspecified in this conjecture.}
    \label{fig:maxheightratiolog}
\end{figure}

% ------------------------- %
% A 2D recurrence           %
% ------------------------- %
\begin{proposition}
    The characteristic polynomial of $\widetilde{\M}_n$ is
    \begin{align}
    \widetilde{P}_n(z)
    &= \sum_{\ell = 0}^{\lfloor \sfrac{n}{2} \rfloor}
    {n \choose 2\ell}
    \bigg(\frac{z}{2} + 1\bigg)^{n - 2\ell}\bigg(1 + \frac{z^2}{4}\bigg)^{\ell} \nonumber\\
    &\qquad + \dfrac{z}{2}\sum_{\ell = 0}^{\lfloor \frac{n-1}{2}\rfloor}
    {n \choose 2\ell + 1}
    \bigg(\frac{z}{2} + 1\bigg)^{n - 2\ell - 1}\bigg(1 + \frac{z^2}{4}\bigg)^{\ell} \>.
    \end{align}
\end{proposition}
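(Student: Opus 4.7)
The plan is to treat the previously established three-term recurrence
$\widetilde{P}_{n+1}(z) = (z+2)\widetilde{P}_n(z) - z\widetilde{P}_{n-1}(z)$ with initial data $\widetilde{P}_0=1,\ \widetilde{P}_1 = 1+z$ as a linear recurrence in $n$ with coefficients that happen to be polynomials in $z$, and solve it in closed form by the standard characteristic-root method, treating $z$ as a parameter. After obtaining a compact closed form, I expand with the binomial theorem and separate even and odd indices, which produces exactly the two sums in the stated formula.

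Concretely, the first step is to form the auxiliary polynomial
\begin{equation*}
\lambda^2 - (z+2)\lambda + z = 0,
\end{equation*}
whose discriminant simplifies as $(z+2)^2 - 4z = z^2+4$. Setting
\begin{equation*}
\alpha := \tfrac{z}{2}+1, \qquad \beta := \sqrt{1+\tfrac{z^2}{4}},
\end{equation*}
the two roots are $\lambda_{\pm} = \alpha \pm \beta$. The second step is to write $\widetilde{P}_n(z) = A\,\lambda_+^{\,n} + B\,\lambda_-^{\,n}$ and determine $A,B$ from $\widetilde{P}_0=1$ and $\widetilde{P}_1=1+z$. A short calculation using $1+z-\alpha = z/2$ gives $A+B = 1$ and $A-B = z/(2\beta)$, so
\begin{equation*}
\widetilde{P}_n(z) = \tfrac{1}{2}\bigl(\lambda_+^{\,n} + \lambda_-^{\,n}\bigr) + \frac{z}{4\beta}\bigl(\lambda_+^{\,n} - \lambda_-^{\,n}\bigr).
\end{equation*}

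The third step is a binomial expansion of $(\alpha+\beta)^n \pm (\alpha-\beta)^n$. The even-index terms survive in the sum and the odd-index terms survive in the difference, yielding
\begin{align*}
\lambda_+^{\,n} + \lambda_-^{\,n} &= 2\sum_{\ell=0}^{\lfloor n/2\rfloor}\binom{n}{2\ell}\alpha^{n-2\ell}\beta^{2\ell},\\
\lambda_+^{\,n} - \lambda_-^{\,n} &= 2\sum_{\ell=0}^{\lfloor (n-1)/2\rfloor}\binom{n}{2\ell+1}\alpha^{n-2\ell-1}\beta^{2\ell+1}.
\end{align*}
Substituting these into the expression for $\widetilde{P}_n(z)$, the factor $\beta$ in the odd-sum cancels with the $1/\beta$ in the coefficient, leaving only even powers $\beta^{2\ell} = (1+z^2/4)^\ell$, and $\alpha = z/2+1$ is already in the desired form. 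This matches the claimed identity term-for-term.

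The main obstacle is essentially cosmetic: one must verify that $\beta$ appears only to even powers in the final expression (so the formula is a genuine polynomial in $z$, not merely an algebraic function), which is automatic from the even/odd splitting above. An alternative, entirely formal route is to check the claim by induction on $n$ using the recurrence directly; the closed form above shows that each of the two partial sums individually behaves correctly under the map $\widetilde{P}_n \mapsto (z+2)\widetilde{P}_n - z\widetilde{P}_{n-1}$ because $\alpha\pm\beta$ are roots of $\lambda^2-(z+2)\lambda+z$, so the inductive step reduces to the identity $\binom{n+1}{k}=\binom{n}{k}+\binom{n}{k-1}$ applied within each parity class. Either route gives the result with only routine algebra.
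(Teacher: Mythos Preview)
Your argument is correct and is precisely what the paper has in mind: its proof reads in full ``Straightforward but tedious analysis of the linear recurrence relation or, equivalently, the generating function,'' and you have carried out that analysis cleanly by solving the characteristic equation $\lambda^2-(z+2)\lambda+z=0$, fitting the initial conditions, and splitting the binomial expansion into even and odd parts. There is nothing to add.
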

% This proposition can be proved in several ways. We choose below to think of $z \in \mathbb{C} \setminus \{\pm 2i\}$, for a reason that will become clear. Since the end result is a polynomial in $z$, proving the formula for $z\neq \pm 2i$ will recover the exceptional cases by continuity.

% Another equally valid approach would be to think of $z$ as being transcendental and noting that the characteristic polynomial of $\widetilde{\M}_{n}$ has integer coefficients.
\begin{proof}
Straightforward but tedious analysis of the linear recurrence relation or, equivalently, the generating function.
\end{proof}

% ============================================================================ %
% A Connection with Compositions                                               %
% ============================================================================ %
\subsection{A Connection with Compositions}
Consider the case with symbolic entries $t_i$, and subdiagonals $-1$ for convenience with minus signs in the formulae.  For instance, the $5$ by $5$ example upper Hessenberg Toeplitz matrix is
\begin{equation}
    \textbf{M}_5 =  \left[ \begin {array}{ccccc} t_{{1}}&t_{{2}}&t_{{3}}&t_{{4}}&t_{{5}}
\\ \noalign{\medskip}-1&t_{{1}}&t_{{2}}&t_{{3}}&t_{{4}}
\\ \noalign{\medskip}0&-1&t_{{1}}&t_{{2}}&t_{{3}}\\ \noalign{\medskip}0
&0&-1&t_{{1}}&t_{{2}}\\ \noalign{\medskip}0&0&0&-1&t_{{1}}\end {array}
 \right] 
\>.
\end{equation}
In this section we consider what happens when we take determinants $P_n(z) = \det(z\mathbf{I} - \textbf{M}_n)$.
Examining $P_0(0)$, $P_1(0)$, $P_2(0)$, $P_3(0)$, and $P_4(0)$, and in particular $P_k(0)$ (i.e. $\det (-\mathbf{M}_k)$) we see that
\begin{align}
    P_0(0) &= 1 \mathrm{\ by\  convention}\\
    P_1(0) &= t_1\\
    P_2(0) &= t_1^2 + t_2\\
    P_3(0) &= t_1^3 + 2t_1t_2 + t_3\\
    P_4(0) &= t_1^4 + 3t_1^2t_2 + 2t_1t_3 + t_2^2 + t_4\>.
\end{align}
One may interpret these (looking at the subscripts) as \textsl{compositions}: $2 = 1 + 1 = 2$; $3 = 1 + 1 + 1 = 1 + 2 = 2 + 1 = 3$; $4 = 1 + 1 + 1 + 1 = 2 + 1 + 1 = 1 + 2 + 1 = 1 + 1 + 2 = 1 + 3 = 3 + 1 = 2 + 2 = 4$. The number of compositions of $n$ is $2^{n-1}$, which we get if all $t_j = 1$.
The paper~\cite{shattuck2013combinatorial} shows a connection between compositions, Hessenberg matrices, and Fibonacci numbers. In this section we merely remark on the connection.

From the Wikipedia entry on composition (combinatorics), ``a composition of an integer $n$ is a way of writing $n$ as the sum of a sequence of strictly positive integers"~\cite{wiki_composition}. We mentioned earlier that $p_{n,k}$ is reported in links from~\cite{oeisA062110} to be the number of ``weak compositions'' of $n$ with exactly $k$ zeros, although we have not proved that here.  The notion of a ``weak'' composition extends the notion of composition.

One may interpret the recurrence relation
\begin{equation}
    p_{n, 0} = \sum_{k = 1}^{n}t_kp_{n-k, 0}
\end{equation}
from Proposition~\ref{prop:charpolyrec2_UHT} as saying that to generate a composition of $n$, you get the composition of $n-k$ and then add the number ``$k$" to them; adding these together gives all compositions. For example, when $n = 5$ we have $p_{0, 0} = 1$, $p_{1, 0} = t_1$, $p_{2, 0} = t_1^2 + t_2$, $p_{3, 0} = t_1^3 + 2t_1t_2 + t_3$, and $p_{4, 0} = t_1^{4} + 3t_1^2t_2 + 2t_1t_3 + t_2^2 + t_4$. Then
\begin{align*}
    p_{5, 0} &= t_1 p_{4, 0} + t_2p_{3, 0} + t_3 p_{2, 0} + t_4p_{1, 0} + t_5p_{0, 0} \nonumber\\
    &= t_1^5 + 3t_1^3t_2 + 2t_1^2t_3 + t_1t_2^2 + t_1t_4 + t_2t_1^3 + 2t_1t_2^2 + t_2t_3 + t_1^2t_3 + t_2t_3 + t_4t_1 + t_5 \nonumber \\
    &= t_1^5 + 4t_1^3t_2 + 3t_1^2t_3 + 3t_1t_2^2 + 2t_1t_4 + 2t_2t_3 + t_5 \>.
\end{align*}

\begin{remark}
This determinant also contains the whole characteristic polynomial. Simply replace $t$, with $t_1 - z$ and we get $\det \left(\mathbf{M}_n - z\mathbf{I}\right) = (-1)^{n}P_n$. This suggests that ``compositions with all parts bigger than 1" can be used to generate all compositions. This fact is well-known.
{The combinatorial analysis of this recurrence formula is not quite trivial.}
\end{remark}

\section{Motivating interest in Bohemians\label{sec:Motivation}}
In this section we discuss some details of our motivations for investigating these matrices.
Typical computational puzzles arise for us on asking simple-looking questions such as ``how many $6 \times 6$ matrices with the population $\{-1, 0, {+1}\}$ are singular.'' Such a question helps to understand the probability of encountering singularity when matrices are drawn ``at random'' from such a collection.  The answer is not known as we write this, although we can give a probabilistic estimate ($0.205$ after $20,000,000$ sample determinants\footnote{4103732 singular matrices out of twenty million sampled.}): brute computation seems futile to find the exact number, because there are $3^{36} \doteq 1.7\times10^{17}$ such matrices. We do know the answers up to size five by five: The number of $n$ by $n$ singular matrices with population $\{-1, 0, {+1}\}$ is, for $n=1$, $2$, $3$, $4$, and $5$,
just $1$, $33$, $7,875$, $15,099,201$, and $237,634,987,683$.
This represents fractions of their numbers ($3^{n^2}$) of $0.333$, $0.407$, $0.400$, $0.351$, and $0.280$,
respectively.

Even though we do not yet know the exact answers to these questions, such matrix families can be both useful and interesting. For instance, one may use discrete optimization over a family to look for improved growth factor bounds~\cite{higham2018bohemian}. Matrices with the population $\{-1, 0, {+1}\}$ have minimal height over all integer matrices; finding a matrix in this family which has a given polynomial $p(\lambda) \in \mathbb{Z}[\lambda]$ as characteristic polynomial identifies a so-called ``minimal height companion matrix'', which may confer numerical benefits.

Recently the study of eigenvalues of structured Bohemians (e.g.~tridiagonal, complex symmetric) has been undertaken and several puzzling features are seen resulting from extensive experimental computations. For instance, some of the images at \href{http://www.bohemianmatrices.com/gallery}{bohemianmatrices.com/gallery} show common features including ``holes''.

Visible features of graphs of roots and eigenvalues from structured families of polynomials and matrices have been previously studied. One well-known set of polynomials whose roots produce interesting pictures are the Littlewood polynomials,
\begin{equation}
    p(x) = \sum_{i = 0}^{n}a_{i}x^{i} \>,
\end{equation}
where $a_{i} \in \{-1, {+1}\}$. These polynomials have been studied in \cite{baez2009beauty}, \cite{borwein2012computational}, \cite{borwein2001visible}, and \cite{borwein1997polynomials}. Similarly, polynomials with coefficients $\{0,1\}$ (also called Newman polynomials) have been studied by Odlyzko and Poonen~\cite{odlyzko1993zeros}.  

The eigenvalues of bounded height marices raise many questions, ranging from whether the sets are (ultimately, as $n \to \infty$) \textsl{fractals} and what the boundaries of the sets are, to questions about the holes in the images and their possible connection to various properties. Answers to some of these questions, particularly the ones involving the holes, have been shown to have some significance in number theory~\cite{beaucoup1998multiple}. Roots of other polynomials have also been visualized; for more, see Christensen's\footnote{\url{https://jdc.math.uwo.ca/roots/}} and J{\"o}rgenson's\footnote{\url{http://www.cecm.sfu.ca/~loki/Projects/Roots/}} web pages.

Corless used a generalization of the Littlewood polynomial (to Lagrange bases). In his paper \cite{corless2004generalized}, he gave a new kind of companion matrix for polynomials expressed in a Lagrange basis. He used generalized Littlewood polynomials as test problems for his algorithm.

``The Bohemian Eigenvalue Project" was first presented as a
poster~\cite{eccadposter2015} at the East Coast Computer Algebra Day (ECCAD) 2015. The poster focused on preliminary results and many of the questions raised when visualizing the distributions of Bohemian eigenvalues over the complex plane. In particular, the poster focused on ``eigenvalue exclusion zones'' (i.e. distinct regions within the domain of the eigenvalues where no eigenvalues exist), computational methods for visualizing eigenvalues, and some results on eigenvalue conditioning over distributions of random matrices.

In Chan's Master's thesis~\cite{chan2016comparison}, she extended Piers W.~Lawrence's construction of the companion matrix for the Mandelbrot polynomials~\cite{corless2013largest, corlessMandelbrot} to other families of polynomials, mainly the Fibonacci-Mandelbrot polynomials and the Narayana-Mandelbrot polynomials. What is relevant here about this construction is that these matrices are upper Hessenberg and contain entries from a constrained set of numbers: $\{-1, 0\}$, and therefore fall under the category of being Bohemian upper Hessenberg. Both the Fibonacci-Mandelbrot matrices and Narayana-Mandelbrot matrices are also Bohemian upper Hessenberg, but the set that the entries draw from is $\{-1, 0, {+1}\}$. At the time of submission for Chan's Master's thesis, the largest number of eigenvalues successfully computed (using a machine with 32 GB of memory) were $32,767$, $17,710$, and $18,559$ for the Mandelbrot, Fibonacci-Mandelbrot, and Narayana-Mandelbrot matrices, respectively. This makes the \nth{16} Mandelbrot matrix the ``largest" Bohemian that we have solved at the time we write this paper.

% \begin{figure}
%     \centering
%     \includegraphics[width=6in]{Figures/mand_16.png}
%     \caption{All $32,767$ eigenvalues to $\mathbf{M}_{16}$.}
% \label{fig:mand_16}
% \end{figure}

These new constructions led Chan and Corless to a new kind of companion matrix for polynomials of the form $c(z) = z a(z) b(z) + c_0$.  A first step towards this was first proved using the Schur complement in \cite{chan2017new}.
%This paper caught the attention of Donald E.~Knuth; he did not like this proof and gave Chan and Corless a better proof which takes advantage of the fact that the determinant is linear in the first row, in \cite{chan2019algebraic}.
Knuth then suggested that Chan and Corless look at the Euclid polynomials~\cite{chan2017minimal}, based on the Euclid numbers.
It was the success of this construction that led to the realization that this construction is general, and gives a genuinely new kind of companion matrix.
%This furthers the original generalization of the new kind of companion matrix from multiplying two functions ($a(z)b(z)$ of the previous form) to an unlimited number of functions.
Similar to the previous three families of matrices, the Euclid matrices are also upper Hessenberg and Bohemian, as the entries are comprised from the set $\{-1, 0, +1\}$. In addition, an interesting property of these companion matrices is that their inverses are also Bohemian with the same population, a property which we call ``the matrix family having \emph{rhapsody}~\cite{chan2019algebraic}.''

As an extension of this generalization, Chan et al.~\cite{chan2019algebraic} showed how to construct linearizations of matrix polynomials, particularly of the form $z\mathbf{a}(z)\mathbf{d}_0 + \mathbf{c}_0$, $\mathbf{a}(z)\mathbf{b}(z)$, $\mathbf{a}(z) +\mathbf{b}(z)$ (when $\mathrm{deg}(\mathbf{b}(z)) < \mathrm{deg}(\mathbf{a}(z))$, and $z\mathbf{a}(z)\mathbf{d}_0\mathbf{b}(z) + \mathbf{c}_0$, using a similar construction.

% ============================================================================ %
% Conclusion                                                                   %
% ============================================================================ %
\section{Concluding Remarks}
The class of upper Hessenberg Bohemians gives a useful way to study Bohemians in general. This is an instance of Polya's adage ``find a useful specialization"~\cite[p.~190]{polya2014solve}. Because these classes are simpler than the general case, we were able to establish several theorems. Note that the three families $\HH{0}{n}{\{0, {+1}\}}$, $\HH{0}{n}{\{-1, {+1}\}}$, and $\ZH{0}{n}{\{-1, 0, {+1}\}}$ are all subfamilies of $\HH{0}{n}{\{-1, 0, {+1}\}}$.

% In this paper we have introduced two new formulae for computing the characteristic polynomials of upper Hessenberg matrices. Our first formula, given in Theorem~\ref{thm:charPolyRec1}, also computes the characteristic polynomials recursively. Our second formula, given in Theorem~\ref{thm:charPolyRec2}, computes the coefficients recursively.

We extended the formulae for the characteristic polynomials to upper Hessenberg Toeplitz matrices in Proposition~\ref{prop:maxheight_UHT} and Proposition~\ref{prop:maxheight_UHT2}. In Proposition~\ref{thm:max_height_bound} we showed that the maximal characteristic height of upper Hessenberg matrices in $\HH{0}{n}{\{-1, 0, {+1}\}}$ is at least $F_{2n+1}/(n+1)$. In Theorem~\ref{thm:count_max_height_uht} we show that the number of upper Hessenberg Toeplitz matrices of maximal height in \MM{0}{n}{\{-1, 0, {+1}\}} is $3\cdot 2^{\mu_n}$ where $\mu_n$ is the maximum degree of the coefficient of the characteristic polynomial whose coefficient, in absolute value, is the height.
We noted several connections to combinatorial works, such as~\cite{janjic2018words}.

We also explored some properties of zero diagonal upper Hessenberg Bohemians. In Theorem~\ref{thm:zero_diag_UH}, we show that the subset of these matrices that are normal are always symmetric, $w_{j}$-skew symmetric for some fixed $1 \leq j \leq m$, or $w_{j}$-skew circulant. In Theorem \ref{thm:not_stable}, we showed that no $\H \in \ZH{\theta_{k}}{n}{P}$ is stable.

% Many puzzles remain. Perhaps the most striking is the angular appearance of the set $\mathbf{\Lambda}(\HH{0}{n}{P})$ of eigenvalues of $\HH{0}{n}{P}$, \textcolor{red}{such as in Figures~\ref{fig:UH_6} and \ref{fig:UH_6_0_Diag}}. General matrices have eigenvalues asymptotic to a (scaled) disc~\cite{tao2017random}; our computations suggest that as $n \to \infty$, $\sfrac{\mathbf{\Lambda}(\HH{0}{n}{P})}{n^{\sfrac{1}{2}}}$ tends to an irregular hexagonal shape, rather than a disk. More, the density does not appear to be approaching uniformity. Further, the boundary is irregular, with shapes suggestive of what is popularly known as the ``dragon curve" (in reverse---these delineate where the eigenvalues are absent, near the edge). We have no explanation for this.

Searching for nilpotent matrices in various classes of Bohemians turns up several puzzles.
% We give some preliminary results here in Table~\ref{tab:nilpotents}, but leave this mostly to future work.  
For instance, it seems clear from our experiments that the only nilpotent matrix in $\HH{0}{n}{\{0, {+1}\}}$ is the (transpose of the) complete Jordan block of $n$ zero eigenvalues; contrariwise the irregular behaviour for $\HH{0}{n}{\{-1, {+1}\}}$ is very puzzling.

% \begin{table}[ht]
%     \centering
%     \begin{tabular}{cccc}
%           \toprule
%      $n$ & $\ZH{0}{n}{\{-1, 0, {+1}\}}$ & $\HH{0}{n}{\{0, {+1}\}}$ & $\HH{0}{n}{\{-1, {+1}\}}$  \\
%         \midrule
%     2   & 1 & 1 & 2 \\
%     3   & 3& 1 & 0 \\
%     4   & 21 & 1 & 0 \\
%     5   & 271 & 1 & 0 \\
%     6   & 9,075 & 1 & 324 \\
%           \bottomrule
%     \end{tabular}
%     \caption{The numbers of nilpotent matrices for various populations and dimensions}
%     \label{tab:nilpotents}
% \end{table}

% Data for this paragraph from the
% file nilpotents.mw in the subdirectory
% Dropbox/Polys_for_Rob
%
% Considering \textsl{general} Bohemian matrices with population $\{-1, 0, {+1}\}$, so that there are $3^{n^2}$ such matrices, we find that there are $1$, $9$, $481$, $148,817$, and $243,782,721$ nilpotent matrices at dimensions $1$ through $5$ inclusive.  We can fit this experimentally with the formula $\exp(0.5 + 0.38n + 0.23n^2)$, or something like $1.26^{n^2}$, which vanishes very quickly compared to $3^{n^2}$. This formula predicts that for $n=6$ the probability of finding a nilpotent matrix is about $2.75\times 10^{-14}$.  It would be gratifying to have a better understanding of the number of nilpotent matrices in a family.

\section*{Acknowledgements}
The calculations and images presented here were in part made possible using AMD Threadripper workstations provided by the Department of Applied Mathematics at Western University.
We acknowledge the support of the Ontario Graduate Institution, The National Science \& Engineering Research Council of Canada, the University of
Alcal\'a, the Rotman Institute of Philosophy, the Ontario Research Centre of
Computer Algebra, and Western University. Part of this work was developed
while R.~M.~Corless was visiting the University of Alcal\'a, in the frame of the
project Giner de los Rios. L.~Gonzalez-Vega, J.~R.~Sendra and J.~Sendra are
partially supported by the Spanish Ministerio de Econom\'\i a y Competitividad
under the Project MTM2017-88796-P.

\bibliographystyle{siamplain}  % siam
\bibliography{bibliography}

\end{document}